\title{Failure Transparency in Stateful Dataflow Systems (Technical Report)\footnote{This technical report is an extended version of the paper published in the 38th European Conference on Object-Oriented Programming (ECOOP 2024): \texttt{\href{https://doi.org/10.4230/LIPIcs.ECOOP.2024.42}{doi:10.4230/LIPIcs.ECOOP.2024.42}.}}}
\author{Aleksey Veresov\footnote{Both authors contributed equally to this research.}}{EECS and Digital Futures, KTH Royal Institute of Technology, Stockholm, Sweden \and \url{https://veresov.pro}}{veresov@kth.se}{https://orcid.org/0000-0002-5091-9811}{}
\author{Jonas Spenger\footnotemark[1]}{EECS and Digital Futures, KTH Royal Institute of Technology, Stockholm, Sweden}{jspenger@kth.se}{https://orcid.org/0000-0002-7119-5234}{}
\author{Paris Carbone}{EECS and Digital Futures, KTH Royal Institute of Technology, Stockholm, Sweden \and Digital Systems, RISE Research Institutes of Sweden, Stockholm, Sweden}{parisc@kth.se}{https://orcid.org/0000-0002-9351-8508}{}
\author{Philipp Haller}{EECS and Digital Futures, KTH Royal Institute of Technology, Stockholm, Sweden}{phaller@kth.se}{https://orcid.org/0000-0002-2659-5271}{}
\authorrunning{A. Veresov, J. Spenger, P. Carbone, and P. Haller}
\keywords{Failure transparency, stateful dataflow, operational semantics, checkpoint recovery}
\begin{document}

\input{utilities}

\lstdefinelanguage{scala}{
  alsoletter={@},
  morekeywords={abstract, case, catch, class, def, do, else, extends, false, final, finally, for, if, implicit, import, match, new, null, object, 
override, package, private, protected, requires, return, sealed, super, this, throw, trait, try, true, type, val, var, while, with, yield, domain, 
postcondition, precondition, invariant, constraint, assert, forAll, _, return, @generator, ensure, require, ensuring, Source, Sink, Task, Reset, Nil, Event},
  sensitive=true,
  morecomment=[l]{//},
  morecomment=[s]{/*}{*/},
  morestring=[b]"
}

\lstdefinelanguage{eventhandler}{
  alsoletter={@},
  morekeywords={EventHandler, Def, With, Context, On, Event, Fail, Receive, Do, If, And, Init, Then, Function, Repeated, Recover, Vars},
  sensitive=true,
  morecomment=[l]{//},
  morecomment=[s]{/*}{*/},
  morestring=[b]",
}

\definecolor{additionalSyntaxColor}{HTML}{535353}

\lstset{
  frame=tb,
  showstringspaces=false,
  escapeinside=``,
  columns=fullflexible,
  mathescape=true,
  basicstyle=\bfseries\ttfamily,
  keywordstyle=,
  commentstyle=\mdseries\color{additionalSyntaxColor}\itshape,
  stringstyle=\mdseries\color{additionalSyntaxColor},
  identifierstyle=\mdseries,
  breakatwhitespace=true,
  tabsize=3
}
\definecolor{colorFailure}{HTML}{E0526F}

\definecolor{colorRedBase}{HTML}{c35c71}
\definecolor{colorRedEmph}{HTML}{E0526F}
\definecolor{colorRedBack}{HTML}{F6EBEE}

\definecolor{colorBlueBase}{HTML}{5b89c3}
\definecolor{colorBlueEmph}{HTML}{5291e0}
\definecolor{colorBlueBack}{HTML}{E7F1FE}

\definecolor{colorGreenBase}{HTML}{629838}
\definecolor{colorGreenEmph}{HTML}{72c830}
\definecolor{colorGreenBack}{HTML}{f0f6eb}

\definecolor{colorGrayBase}{HTML}{535353}
\definecolor{colorGrayBack}{HTML}{F5F5F5}

\newcommand{\snapshotImage}[1]{\color{white}\pgfsetstrokecolor{#1}\pgfsetlinewidth{2}\pgfsetroundcap\pgfsetroundjoin\pgfpathsvg{M5 5h-3v-1h3v1zm8 5c-1.654 0-3 1.346-3 3s1.346 3 3 3 3-1.346 3-3-1.346-3-3-3zm11-4v15h-24v-15h5.93c.669 0 1.293-.334 1.664-.891l1.406-2.109h8l1.406 2.109c.371.557.995.891 1.664.891h3.93zm-19 4c0-.552-.447-1-1-1-.553 0-1 .448-1 1s.447 1 1 1c.553 0 1-.448 1-1zm13 3c0-2.761-2.239-5-5-5s-5 2.239-5 5 2.239 5 5 5 5-2.239 5-5z}\pgfusepath{fill,stroke}}
\newcommand{\failureImage}{\color{colorFailure}\pgfsetstrokecolor{colorFailure}\pgfsetlinewidth{2}\pgfsetroundcap\pgfsetroundjoin\pgfpathsvg{M8 24l3-9h-9l14-15-3 9h9l-14 15z}\pgfusepath{fill,stroke}}
\newcommand{\storageImage}[1]{\color{#1}\pgfpathsvg{M22 18.055v2.458c0 1.925-4.655 3.487-10 3.487-5.344 0-10-1.562-10-3.487v-2.458c2.418 1.738 7.005 2.256 10 2.256 3.006 0 7.588-.523 10-2.256zm-10-3.409c-3.006 0-7.588-.523-10-2.256v2.434c0 1.926 4.656 3.487 10 3.487 5.345 0 10-1.562 10-3.487v-2.434c-2.418 1.738-7.005 2.256-10 2.256zm0-14.646c-5.344 0-10 1.562-10 3.488s4.656 3.487 10 3.487c5.345 0 10-1.562 10-3.487 0-1.926-4.655-3.488-10-3.488zm0 8.975c-3.006 0-7.588-.523-10-2.256v2.44c0 1.926 4.656 3.487 10 3.487 5.345 0 10-1.562 10-3.487v-2.44c-2.418 1.738-7.005 2.256-10 2.256z}\pgfusepath{fill}}

\makeatletter
\pgfset{
  /pgf/decoration/randomness/.initial=5,
  /pgf/decoration/wavelength/.initial=73
}
\pgfdeclaredecoration{sketch}{init}{
  \state{init}[width=0pt,next state=draw,persistent precomputation={
    \pgfmathsetmacro\pgf@lib@dec@sketch@t0
  }]{}
  \state{draw}[width=\pgfdecorationsegmentlength,
  auto corner on length=\pgfdecorationsegmentlength,
  persistent precomputation={
    \pgfmathsetmacro\pgf@lib@dec@sketch@t{mod(\pgf@lib@dec@sketch@t+pow(\pgfkeysvalueof{/pgf/decoration/randomness},rand),\pgfkeysvalueof{/pgf/decoration/wavelength})}
  }]{
    \pgfmathparse{sin(2*\pgf@lib@dec@sketch@t*pi/\pgfkeysvalueof{/pgf/decoration/wavelength} r)}
    \pgfpathlineto{\pgfqpoint{\pgfdecorationsegmentlength}{\pgfmathresult\pgfdecorationsegmentamplitude}}
  }
  \state{final}{}
}
\tikzset{handdrawn/.style={decorate,decoration={sketch,segment length=0.5pt,amplitude=0.5pt}}}
\makeatother
\pgfmathsetseed{37}

\pgfdeclarelayer{foreground}\pgfdeclarelayer{background}
\pgfsetlayers{background,main,foreground}
\makeatletter\ExplSyntaxOn 
\tl_replace_once:Nnn \tikz@fig@continue { \setbox } { \tikz@setbox@which }
\ExplSyntaxOff
\let\tikz@setbox@which\setbox
\tikzset{node on layer/.code={%
  \expandafter\def\expandafter\tikz@whichbox\expandafter
    {\csname pgf@layerbox@#1\endcsname}%
  \def\tikz@setbox@which{\global\setbox}}}
\makeatother

\maketitle

\begin{abstract}
Failure transparency enables users to reason about distributed systems at a higher level of abstraction, where complex failure-handling logic is hidden. This is especially true for stateful dataflow systems, which are the backbone of many cloud applications. In particular, this paper focuses on proving failure transparency in Apache Flink, a popular stateful dataflow system. Even though failure transparency is a critical aspect of Apache Flink, to date it has not been formally proven. Showing that the failure transparency mechanism is correct, however, is challenging due to the complexity of the mechanism itself. Nevertheless, this complexity can be effectively hidden behind a failure transparent programming interface. To show that Apache Flink is failure transparent, we model it in small-step operational semantics. Next, we provide a novel definition of failure transparency based on observational explainability, a concept which relates executions according to their observations. Finally, we provide a formal proof of failure transparency for the implementation model; i.e., we prove that the failure-free model correctly abstracts from the failure-related details of the implementation model. We also show liveness of the implementation model under a fair execution assumption. These results are a first step towards a verified stack for stateful dataflow systems.
\end{abstract}

\section{Introduction}\label{sec:introduction}
Stateful dataflow systems have seen wide adoption in the modern cloud infrastructure due to their ability to process large amounts of event-based data at ingestion time~\cite{fragkoulis2023survey}.
Apache Flink~\cite{carbone2015apache}, for example, is used to power tens-of-thousands of streaming jobs with up to nine billion events per second at ByteDance~\cite{mao2023streamops}, and several thousand streaming jobs at Uber~\cite{fu2021real}.
An essential aspect of stateful dataflow systems is the recovery from failures, as failures are to be expected in any long-running streaming job~\cite{dean2008mapreduce}.
However, failure recovery is non-trivial.
For example, simply recovering from a failure by restarting a job from the very beginning would discard all progress up to that point, making the recovery prohibitively expensive.
To balance the need for efficiency and reliability, stateful dataflow systems have to embrace complex failure recovery protocols.
Because of their complexity, the correctness of these recovery protocols is a crucial problem for the reliability of stateful dataflow systems.

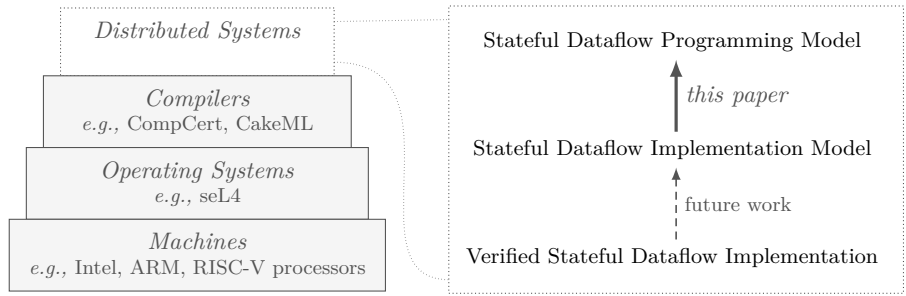
\begin{figure}[t]\centering
\centering
\ifx\hideTikz\undefined
\begin{tikzpicture}[
  scale=0.9, every node/.style={scale=0.9},
  align=center,
  rect/.style={rectangle, draw, colorGrayBase, thin, outer sep=0, inner ysep=2mm},
  proof/.style={colorGrayBase,-latex},
  node distance=0
]
\node[rect, densely dotted, minimum width=4cm] at (0,0) (distributed) {
  \emph{Distributed Systems}\\[-1mm]
  \footnotesize \phantom{x}
};

\node[minimum width=6cm] at (7,0) (sdpm) {
  \small{Stateful Dataflow Programming Model}
};
\node[minimum width=6cm] at (7,-3.175/2) (fim) {
  \small{Stateful Dataflow Implementation Model}
};
\node[minimum width=6cm] at (7,-3.175) (fai) {
  \small{Verified Stateful Dataflow Implementation}
};

\draw[proof, very thick] (fim) -- (sdpm) node[midway,right] {\emph{this paper}};
\draw[proof, semithick, densely dashed] (fai) -- (fim) node[midway,right] {\footnotesize future work};

\node[rectangle, draw, colorGrayBase, thin, densely dotted, outer sep=0, inner sep=4.5mm, fit=(sdpm) (fim) (fai)] (box) {};

\draw[colorGrayBase, very thin, densely dotted] ($(distributed.north east)+(0,-0.2)$) -- ($(box.north west)+(0,-0.2)$);
\draw[colorGrayBase, very thin, densely dotted] ($(distributed.south east)+(0,0.2)$) .. controls +(right:2) and +(left:1.5) .. ($(box.south west)+(0,0.2)$);

\node[rect,below=of distributed, minimum width=4.5cm,fill=colorGrayBack] (compiler) {
  \emph{Compilers}\\[-1mm]
  \footnotesize \eg CompCert, CakeML
};
\node[rect,below=of compiler, minimum width=5cm,fill=colorGrayBack] (os) {
  \emph{Operating Systems}\\[-1mm]
  \footnotesize \eg seL4
};
\node[rect,below=of os, minimum width=5.5cm,fill=colorGrayBack] (machine) {
  \emph{Machines}\\[-1mm]
  \footnotesize \eg Intel, ARM, RISC-V processors
};
\end{tikzpicture}
\else
hideTikz is set
\fi
\caption{This work in the context of a fully verified stack for distributed programming.}
\label{fig:verified-stack}
\end{figure}

In previous work~\cite{elnozahy2002survey,lowell99theory,gartner1999fundamentals}, a failure-masking recovery protocol is considered to be correct, if failures can be masked such that the user cannot observe the failures.
This property is also known as \emph{failure transparency}, \ie a user should be able to ignore failures as if they do not occur.
Failure transparency has been shown for some distributed systems, such as Durable Functions~\cite{burckhardt2021durable}, Reliable State Machines~\cite{mukherjee2019reliable}, reliable actors (KAR)~\cite{tardieu2023reliable}, and serverless microservices (\textmu{}2sls)~\cite{kallas2023executing}.
As for stateful dataflow, the core mechanism used in Apache Flink's~\cite{carbone2015apache} recovery protocol, namely Asynchronous Barrier Snapshotting (ABS)~\cite{carbone2015lightweight}, has been shown to be a correct snapshotting protocol~\cite{carbone2018thesis}.
However, the proof does not reason about failure transparency and its related aspects, such as modelling failures and the recovery from failures, as well as about the equivalence of observed executions.
That is, there has been no formal proof that Apache Flink's entire failure recovery protocol provides failure transparency.
Furthermore, the literature lacks a formal definition of failure transparency for systems described with distinct failure-related rules using small-step operational semantics, a widely-used method for defining program execution in programming languages theory.

An important approach for ensuring reliability and correctness is machine-checked formal verification, \ie proving that a system implements its specification.
There is well-known prior work on verified compilers~\cite{DBLP:journals/cacm/Leroy09,DBLP:conf/popl/KumarMNO14,patterson2019next}, operating systems~\cite{DBLP:conf/sosp/KleinEHACDEEKNSTW09}, as well as processors~\cite{DBLP:journals/pacmpl/ChoiVSCA17,DBLP:conf/cav/KaivolaGNTWPSTFRN09,DBLP:conf/cav/ReidCDGHKPSVZ16} (\Cref{fig:verified-stack}, left).
However, there is an apparent lack of verified distributed systems, particularly, there is no verified stateful dataflow system.
We believe that it is essential to address this gap in order to prevent disastrous outages of distributed infrastructure as known today.

This work is a first step towards the grand goal of providing a fully verified reliable stack for distributed programming, as shown in \Cref{fig:verified-stack}.
It addresses the highlighted gap by:
(1)~providing a definition of failure transparency,
(2)~formalizing a stateful dataflow system as a model in small-step operational semantics, under the assumptions of crash-recovery failures and FIFO-ordered channels, and
(3)~formally proving that the model permits abstracting from failures, \ie that it is failure transparent.
Our definition of failure transparency is based on \emph{observational explainability}, a property which, informally, says that the \emph{explainable} implementation model generates the same observable output as is possible in the \emph{explaining} abstract model.
Using this property, a system is defined as failure transparent if the whole system is observationally explainable by its explicitly separated failure-free part.
Finally, we prove that our formal model of a stateful dataflow system based on Asynchronous Barrier Snapshotting~\cite{carbone2015apache,carbone2018thesis} is failure transparent.
This abstraction from failures is designed to serve the end users of the modelled system with less interest in its implementation details.

\subparagraph*{Contributions.}
In summary, this paper makes the following contributions.
\begin{itemize}
  \item We provide the first small-step operational semantics of the \emph{Asynchronous Barrier Snapshotting} protocol within a stateful dataflow system, as used in Apache Flink (\Cref{sec:implementation-model}).
  \item We provide a novel definition of \emph{failure transparency} for programming models expressed in small-step operational semantics with explicit failure rules and the intuitions behind it (\Cref{sec:failure-transparency}).
  It is the first attempt to define failure transparency in the context of stateful dataflow systems.
  \item We prove that the provided implementation model is failure transparent and guarantees liveness (\Cref{sec:failure-transparency-of-the-implementation}).
  \item We provide a mechanization of the definitions, theorems, and models in Coq.\footnote{\url{https://github.com/aversey/abscoq}}
\end{itemize}

\subparagraph*{Outline.}
\Cref{sec:background} introduces background on failures, distributed systems, stateful dataflow, as well as some basic notation used throughout this paper.
\Cref{sec:stateful-dataflow} informally introduces the stateful dataflow programming model and failure recovery via the Asynchronous Barrier Snapshotting (ABS) protocol.
\Cref{sec:implementation-model} provides a small-step operational semantics of a stateful dataflow system based on ABS.
\Cref{sec:failure-transparency} defines failure transparency and observational explainability for programming models expressed in small-step operational semantics.
\Cref{sec:failure-transparency-of-the-implementation} proves that the implementation model is failure transparent. \Cref{sec:related-work} discusses related work, and \Cref{sec:conclusions-and-future-work} concludes this paper.

\section{Background}\label{sec:background}

\subsection{Failures in Distributed Systems}\label{sec:failures-in-distributed-systems}

A distributed system is a system of many processes communicating over a network~\cite{cachin2011introduction}.
The kind of distributed systems which are related to this work are event-based processing systems such as stateful dataflow systems~\cite{dean2008mapreduce,zaharia2010spark,carbone2015apache,murray2013naiad,zeuch2019nebulastream,spenger2022portals}.
Failures within such systems are expected to happen, due to their typical large scale and longevity~\cite{dean2008mapreduce}.
However, failures are notoriously hard to deal with within distributed systems.
For this reason, \emph{failure transparency} is a necessary abstraction, as it enables the user to abstract from failures.
Failure transparency as a general concept, and failure recovery protocols are both well-studied topics in distributed systems~\cite{neumann1956probabilistic,DBLP:conf/afips/Wensley72,Lee1990,lowell99theory,DBLP:phd/us/Lowell99,gartner1999fundamentals,elnozahy2002survey}.
Moreover, failure transparency has seen an increase in interest within the programming languages community in recent years~\cite{burckhardt2021durable,kallas2023executing,mukherjee2019reliable,tardieu2023reliable}.
The goal of failure recovery is to provide automatic system means to recover from system failures, in ways which the system user may or may not notice.
In contrast, the goal of failure transparency is to provide an abstraction of the system, such that the abstraction hides the internals of failures and failure recovery, masking the failures from the user~\cite{gartner1999fundamentals}.
For this reason, failure transparency greatly simplifies the programming model to the benefit of the end user.

\subsection{Stateful Dataflow and Apache Flink}\label{sec:stateful-dataflow-and-apache-flink}
Stateful dataflow systems, sometimes also called stream processing or dataflow streaming systems, such as Apache Flink~\cite{carbone2015apache}, have become ubiquitous for real-time processing of large amounts of data~\cite{mao2023streamops,fu2021real}.
Other well known dataflow systems include Google Dataflow~\cite{akidau2015dataflow}, IBM Streams~\cite{jacques2016consistent}, Apache Spark~\cite{ZahariaCDDMMFSS12} and Spark Streaming~\cite{zaharia2013discretized}, Timely Dataflow~\cite{murray2013naiad}, NebulaStream~\cite{zeuch2019nebulastream}, Portals~\cite{spenger2022portals}, and more~\cite{balazinska2005fault,shah2004highly}.
The popularity and wide-spread use of dataflow systems~\cite{fu2021real,mao2023streamops} is due to their ability to scale-out production workloads.
In particular, they provide high throughput, low latency, and strong guarantees (such as failure transparency, sometimes referred to as exactly-once processing).
The programming model of most stateful dataflow systems is based on acyclic dataflow graphs~\cite{fragkoulis2023survey}.
In these graphs, the nodes are stateful processing tasks, and the edges are streams of data.
As failure transparency is an important aspect of the stateful dataflow programming model, it and its failure recovery protocol is the focus of this paper.

\subsection{Asynchronous Barrier Snapshotting}\label{sec:asynchronous-barrier-snapshotting}
The failure recovery protocol used in Apache Flink~\cite{carbone2015apache} is a checkpointing-based rollback recovery protocol~\cite{elnozahy2002survey}, in which the system regularly takes checkpoints and, after a failure, recovers to the latest completed checkpoint.
For batch execution systems, such as MapReduce~\cite{dean2008mapreduce}, the general approach is to atomically execute one batch at a time, and if a failure occurs, the system restarts from the beginning of the current batch.
In contrast, computation on stateful dataflow streaming systems is continuous~\cite{fragkoulis2023survey}, without predefined recovery points in its execution, complicating the failure recovery.
The solution to recovery in continuous computations is the acquisition of causally consistent snapshots~\cite{chandy1985distributed}, which can be used for the recovery to a consistent system state after a failure~\cite{elnozahy2002survey}.
The specific implementation of Apache Flink~\cite{carbone2015apache} and other stateful dataflow systems~\cite{spenger2022portals} use the Asynchronous Barrier Snapshotting (ABS) protocol~\cite{carbone2015lightweight}, an extended and optimized variant for data processing graphs of the Chandy-Lamport snapshotting protocol~\cite{chandy1985distributed}, for taking causally-consistent snapshots.
In contrast to the Chandy-Lamport snapshotting protocol, the ABS protocol is tailored to acyclic dataflow graphs and its snapshots do not contain any in-flight events.
In contrast to batching protocols, the ABS protocol is fully asynchronous, and does not require blocking coordination.
For these reasons, the ABS protocol greatly benefits the end-to-end latency and throughput of the system.

\subsection{Basic Notation}\label{sec:basic-notation}

\subparagraph*{Functions.}\parlabel{par:functions}
We denote a function $f$ similarly to set-builder notation as: $\mapbuild{k}{t}{k \in \dom{f}}$.
The part after the bar defines the domain of the function.
The part before the bar defines the value of the function at point $k$ by the expression $t$.
The expression $t$ captures all variables defined on the right side of the bar, including $k$.
A function with only one element in its domain is represented as $\map{x}{x'}$, for example, $\map{3}{7}$ is such a function that $\dom{\map{3}{7}} = \{3\}$ and $\map{3}{7}(3) = 7$.
We denote function update as $f\, g$, such that:
\[(f\,g)(x) =
\begin{cases}
g(x) & \text{if } x \in \dom{g}\\
f(x) & \text{if } x \notin \dom{g}
\end{cases}\]

\subparagraph*{Sequences.}\parlabel{par:sequences}
We represent a sequence $S$ as a function $f$ with domain $\setbuild{i}{i \in \mathbb{N} \land i < \size{S}}$.
The length of the sequence is represented by $\size{S}$ and may be infinite.
The notation $S_i$ stands for the $i$-th element of the sequence $S$ and equals $f(i)$.
To simplify our analysis of sequences, we use $\seq{t}{i}{n}$ as a shorthand for $\mapbuild{i}{t}{i \in \mathbb{N} \land i < n}$, where $t$ is an expression that captures $i$ and represents the $i$-th element of the sequence.
Therefore, for any sequence $S$, we have that $S = \seq{S_i}{i}{\size{S}}$.

The usage of indices for variables standing for sequences may differ from other variables.
If $S$ stands for a sequence, then $S_i$ corresponds to the $i$-th element of $S$.
If, in contrast, $x$ is not a sequence, then $x_i$ is an independent variable and is not connected to $x$ or any $x_j$.
To avoid confusion, we name sets and sequences using uppercase and individual elements using lowercase.

Sequence concatenation can be used to extend or shrink existing sequences.
We include a shorthand notation for sequence concatenation, concatenating $S$ with $S'$ as follows ${S \concat S'} \equiv \mapbuild{i}{S_i}{i \in \mathbb{N} \land i < \size{S}}\mapbuild{j+\size{S}}{S'_j}{j \in \mathbb{N} \land j < \size{S'}}$.
To simplify extraction and addition of single elements, we denote single-element sequences $\seq{x}{i}{1}$ as $\single{x}$, where $x$ is the only value in the sequence.
The empty sequence is represented as $\varepsilon$.

\section{Stateful Dataflow}\label{sec:stateful-dataflow}
\newcommand{\integerEvent}[1]{\ensuremath{\mathtt{E}\langle#1\rangle}}
\newcommand{\resetEvent}{\ensuremath{\mathtt{Reset}}}

\emph{Stateful dataflow} systems, sometimes also called \emph{distributed dataflow}, \emph{dataflow streaming}, or \emph{stream processing} systems, are widely used for real-time processing of large amounts of streaming data.
This section informally introduces the stateful dataflow programming model and its failure recovery mechanism, which we formalize and prove correct in later sections.
It is mostly based on Apache Flink~\cite{carbone2015apache}, a stateful dataflow system,
however, the core concepts and techniques involved also apply to other similar systems~\cite{dean2008mapreduce,ZahariaCDDMMFSS12,zaharia2013discretized,akidau2015dataflow,jacques2016consistent,murray2013naiad,zeuch2019nebulastream,spenger2022portals}.

\subsection{A Taste of Programming in Stateful Dataflow}\label{sec:a-taste-of-programming-in-stateful-dataflow}

\begin{figure}[t]\centering
\ifx\hideTikz\undefined
\begin{tikzpicture}[
  scale=0.9, every node/.style={scale=0.9},
  thick, colorGrayBase, align=center,
  node distance=0.5cm and 1cm,
  task/.style={rectangle, rounded corners=1mm, very thick, draw,text=black,inner sep=2mm,fill=colorGrayBack},
  taskConnection/.style={-{Triangle[angle=45:5]}, very thick, controls={+(right:0.5) and +(left:0.5)}},
  inlineImage/.style={scale=0.35, yscale=-1, xshift=-12, yshift=-12}
]

\node[task,label={\footnotesize\emph{source}}] (t0) at (0,0) {integers};
\draw[taskConnection,{>[angle=45:5]}-{Triangle[angle=45:5]}]($(t0.west)+(-3,0)$) -- (t0.west) node[midway,above] {\footnotesize \integerEvent{5} \integerEvent{3} \integerEvent{1}};
\node[task,above=of t0,label={\footnotesize\emph{source}}] (t1) {reset};
\draw[taskConnection,{>[angle=45:5]}-{Triangle[angle=45:5]}]($(t1.west)+(-1.5,0)$) -- (t1.west) node[midway,above] {\footnotesize \resetEvent};

\node[task,right=of t0,label={\footnotesize\emph{task}}] (t2) {incremental\\average};
\draw[taskConnection](t0.east) to (t2.west);
\draw[taskConnection](t1.east) to ($(t2.west)+(0,0.3)$);

\node[task,right=of t2,label={\footnotesize\emph{sink}}] (t3) {result};
\draw[taskConnection](t2.east) to (t3.west);
\draw[taskConnection](t3.east) -- ($(t3.east)+(1.5,0)$);
\end{tikzpicture}
\else
hideTikz is set
\fi
\caption{Example stateful dataflow program calculating the incremental average of a data stream of integers. Another stream is used to transfer control messages resetting the state of the program.}
\label{fig:incremental-average}
\end{figure}
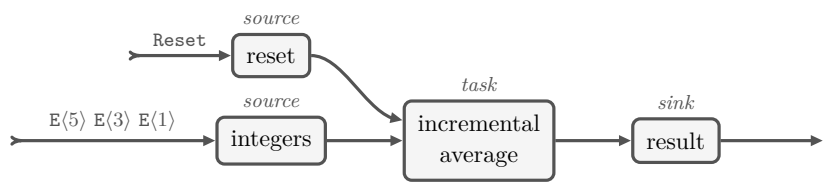

\Cref{fig:incremental-average} shows a stateful dataflow example calculating the incremental average of a stream of integers.
The example consists of two \emph{sources} ingesting streams of events into the system.
One source ingests a stream of integers \integerEvent{i}, and the other ingests a stream of \resetEvent~ events.
The term \emph{stream} can be understood as an unbounded sequence of events, it may in general continue forever.
The example also consists of a \emph{task}, an internal processing unit, which calculates an incremental average of the integers.
The incrementally computed averages are emitted to a \emph{sink}, which is the output of the system.

\begin{figure}[t!]\centering
\begin{lstlisting}[language=scala, label=lst:incremental-average, caption=A stateful dataflow program calculating the incremental average of a data stream of integers (see \Cref{fig:incremental-average}).]
Source(input = "src_reset", output = "reset")
Source(input = "src_ints", output = "ints")
Sink(inputs = { "avgs" }, output = "sink_avgs")
Task(inputs = { "src_ints", "src_reset" }, output = "avgs",
  f = (event, state) => event match {
    case Reset =>
      val new_state = {sum = 0, count = 0}
      return (Nil, new_state)
    case E$\langle$value$\rangle$ =>
      val new_state = {sum = state.sum + value, count = state.count + 1}
      val average = E$\langle$value = new_state.sum / new_state.count$\rangle$
      return (average : Nil, new_state) })
\end{lstlisting}
\end{figure}

A more detailed representation of the example is shown in \Cref{lst:incremental-average}.
Sources, tasks, and sinks are created using corresponding functions.
The API enables users to: (1) create sources, tasks, and sinks; (2) specify the connections in the graph by providing input and output streams; and (3) to specify how the tasks process events by providing their processing functions.
In this example, when the task receives an integer event \integerEvent{i}, it updates the average and emits the new average.
When it receives a \resetEvent~event, it resets its local state, such that the average is reset to its initial state.
To note is that the task is considered \emph{stateful}, as it maintains local state for its computation of the incremental average, even though the processing function $f$ is a pure function.
Also to note is that it is possible to provide an easier-to-use API above this core API, for example an API based on higher-order functions (\texttt{map}, \texttt{flatMap}, etc.)~\cite{carbone2015apache,spenger2022portals,akidau2015dataflow,murray2013naiad}.

\subsection{Failure Recovery via Asynchronous Barrier Snapshotting}\label{sec:failure-recovery-via-asynchronous-barrier-snapshotting}

Failure recovery is a crucial aspect of stateful dataflow systems.
In this section, we describe the failure recovery mechanism of the Asynchronous Barrier Snapshotting (ABS) protocol~\cite{carbone2015lightweight} as used in Apache Flink.
More specifically, ABS is a \emph{distributed snapshotting} protocol~\cite{chandy1985distributed} which is used for the checkpointing-based rollback-recovery protocol~\cite{elnozahy2002survey} within Apache Flink~\cite{carbone2015lightweight}.
After a failure, a checkpointing-based recovery will restart the system from the latest valid snapshot of the system~\cite{elnozahy2002survey}.

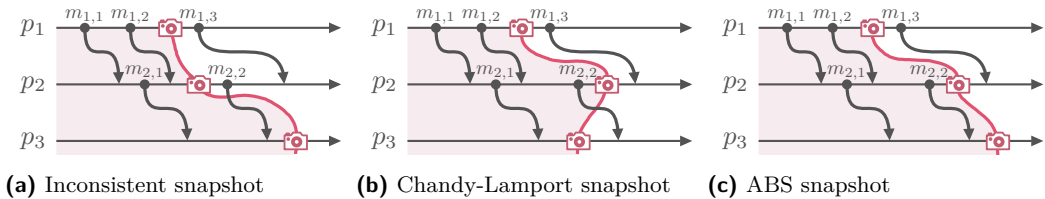
\begin{figure}[t]\centering
\begin{subfigure}{0.33\textwidth}
\centering
\ifx\hideTikz\undefined
\begin{tikzpicture}[
  scale=0.75, every node/.style={scale=1},
  thick, colorGrayBase,
  timeline/.style={-{Triangle[angle=45:5]}, thick},
  message/.style={-{Triangle[angle=45:5]}, very thick, controls={+(down:0.9) and +(up:1.1)}},
  messageNode/.style={scale=0.4, circle, fill, prefix after command={\pgfextra{\tikzset{every label/.style={label distance=-1.5mm, scale=0.75}}}}},
  snapshot/.style={scale=0.35, yscale=-1, xshift=-12, yshift=-12},
  snapshotEdge/.style={colorRedEmph, very thick, handdrawn}
]

\coordinate (s1) at (2,2);
\coordinate (s2) at (2.5,1);
\coordinate (s3) at (4.2,0);

\begin{scope}
\clip (0,-0.25)--(0,2)--(s1) .. controls +(down:0.5) and +(up:0) .. (s2) .. controls +(down:0.5) and +(up:1.25) .. (s3)--($(s3)+(0,-0.25)$)--(0,-0.25);
\fill[colorRedBack] (0,-0.25)--(0,2.1)--(5,2.1)--(5,-0.25)--(0,-0.25);
\end{scope}

\draw[snapshotEdge](s1) .. controls +(down:0.5) and +(up:0) .. (s2) .. controls +(down:0.5) and +(up:1.25) .. (s3)--($(s3)+(0,-0.25)$);

\node[label={left:{$p_1$}},scale=0] at (0,2) {};
\draw[timeline](0,2) -- (5,2);
\node[label={left:{$p_2$}},scale=0] at (0,1) {};
\draw[timeline](0,1) -- (5,1);
\node[label={left:{$p_3$}},scale=0] at (0,0) {};
\draw[timeline](0,0) -- (5,0);

\node[messageNode, label={$m_{1,1}$}] (m11) at (0.5,2) {};
\draw[message](m11.center) to (1.1,1);
\node[messageNode, label={$m_{2,1}$}] (m21) at (1.55,1) {};
\draw[message](m21.center) to (2.3,0);
\node[messageNode, label={$m_{1,2}$}] (m12) at (1.3,2) {};
\draw[message](m12.center) to (2,1);
\node[messageNode, label={$m_{1,3}$}] (m13) at (2.5,2) {};
\draw[message](m13.center) to (4,1);
\node[messageNode, label={$m_{2,2}$}] (m22) at (3,1) {};
\draw[message](m22.center) to (3.7,0);

\node[snapshot] at (s1) {\snapshotImage{colorRedBase}};
\node[snapshot] at (s2) {\snapshotImage{colorRedBase}};
\node[snapshot] at (s3) {\snapshotImage{colorRedBase}};
\end{tikzpicture}
\else
hideTikz is set
\fi
\caption{Inconsistent snapshot}\label{fig:snapshots:inconsistent}
\end{subfigure}%
\begin{subfigure}{0.33\textwidth}
\centering
\ifx\hideTikz\undefined
\begin{tikzpicture}[
  scale=0.75, every node/.style={scale=1},
  thick, colorGrayBase,
  timeline/.style={-{Triangle[angle=45:5]}, thick},
  message/.style={-{Triangle[angle=45:5]}, very thick, controls={+(down:0.9) and +(up:1.1)}},
  messageNode/.style={scale=0.4, circle, fill, prefix after command={\pgfextra{\tikzset{every label/.style={label distance=-1.5mm, scale=0.75}}}}},
  snapshot/.style={scale=0.35, yscale=-1, xshift=-12, yshift=-12},
  snapshotEdge/.style={colorRedEmph, very thick, handdrawn}
]

\coordinate (s1) at (2,2);
\coordinate (s2) at (3.5,1);
\coordinate (s3) at (3,0);

\begin{scope}
\clip (0,-0.25)--(0,2)--(s1) .. controls +(-0.2,-0.9) and +(0.2,0.7) .. (s2) .. controls +(down:0.5) and +(up:0.5) .. (s3)--($(s3)+(0,-0.25)$)--(0,-0.25);
\fill[colorRedBack] (0,-0.25)--(0,2.1)--(5,2.1)--(5,-0.25)--(0,-0.25);
\end{scope}

\draw[snapshotEdge](s1) .. controls +(-0.2,-0.9) and +(0.2,0.7) .. (s2) .. controls +(down:0.5) and +(up:0.5) .. (s3)--($(s3)+(0,-0.25)$);

\node[label={left:{$p_1$}},scale=0] at (0,2) {};
\draw[timeline](0,2) -- (5,2);
\node[label={left:{$p_2$}},scale=0] at (0,1) {};
\draw[timeline](0,1) -- (5,1);
\node[label={left:{$p_3$}},scale=0] at (0,0) {};
\draw[timeline](0,0) -- (5,0);

\node[messageNode, label={$m_{1,1}$}] (m11) at (0.5,2) {};
\draw[message](m11.center) to (1.1,1);
\node[messageNode, label={$m_{2,1}$}] (m21) at (1.55,1) {};
\draw[message](m21.center) to (2.3,0);
\node[messageNode, label={$m_{1,2}$}] (m12) at (1.3,2) {};
\draw[message](m12.center) to (2,1);
\node[messageNode, label={$m_{1,3}$}] (m13) at (2.5,2) {};
\draw[message](m13.center) to (4,1);
\node[messageNode, label={$m_{2,2}$}] (m22) at (3,1) {};
\draw[message](m22.center) to (3.7,0);

\node[snapshot] at (s1) {\snapshotImage{colorRedBase}};
\node[snapshot] at (s2) {\snapshotImage{colorRedBase}};
\node[snapshot] at (s3) {\snapshotImage{colorRedBase}};
\end{tikzpicture}
\else
hideTikz is set
\fi
\caption{Chandy-Lamport snapshot}\label{fig:snapshots:chandy-lamport}
\end{subfigure}%
\begin{subfigure}{0.33\textwidth}
\centering
\ifx\hideTikz\undefined
\begin{tikzpicture}[
  scale=0.75, every node/.style={scale=1},
  thick, colorGrayBase,
  timeline/.style={-{Triangle[angle=45:5]}, thick},
  message/.style={-{Triangle[angle=45:5]}, very thick, controls={+(down:0.9) and +(up:1.1)}},
  messageNode/.style={scale=0.4, circle, fill, prefix after command={\pgfextra{\tikzset{every label/.style={label distance=-1.5mm, scale=0.75}}}}},
  snapshot/.style={scale=0.35, yscale=-1, xshift=-12, yshift=-12},
  snapshotEdge/.style={colorRedEmph, very thick, handdrawn}
]

\coordinate (s1) at (2,2);
\coordinate (s2) at (3.5,1);
\coordinate (s3) at (4.2,0);

\begin{scope}
\clip (0,-0.25)--(0,2)--(s1) .. controls +(-0.2,-0.9) and +(0.2,0.7) .. (s2) .. controls +(down:0.5) and +(up:0.5) .. (s3)--($(s3)+(0,-0.25)$)--(0,-0.25);
\fill[colorRedBack] (0,-0.25)--(0,2.1)--(5,2.1)--(5,-0.25)--(0,-0.25);
\end{scope}

\draw[snapshotEdge](s1) .. controls +(-0.2,-0.9) and +(0.2,0.7) .. (s2) .. controls +(down:0.5) and +(up:0.5) .. (s3)--($(s3)+(0,-0.25)$);

\node[label={left:{$p_1$}},scale=0] at (0,2) {};
\draw[timeline](0,2) -- (5,2);
\node[label={left:{$p_2$}},scale=0] at (0,1) {};
\draw[timeline](0,1) -- (5,1);
\node[label={left:{$p_3$}},scale=0] at (0,0) {};
\draw[timeline](0,0) -- (5,0);

\node[messageNode, label={$m_{1,1}$}] (m11) at (0.5,2) {};
\draw[message](m11.center) to (1.1,1);
\node[messageNode, label={$m_{2,1}$}] (m21) at (1.55,1) {};
\draw[message](m21.center) to (2.3,0);
\node[messageNode, label={$m_{1,2}$}] (m12) at (1.3,2) {};
\draw[message](m12.center) to (2,1);
\node[messageNode, label={$m_{1,3}$}] (m13) at (2.5,2) {};
\draw[message](m13.center) to (4,1);
\node[messageNode, label={$m_{2,2}$}] (m22) at (3,1) {};
\draw[message](m22.center) to (3.7,0);

\node[snapshot] at (s1) {\snapshotImage{colorRedBase}};
\node[snapshot] at (s2) {\snapshotImage{colorRedBase}};
\node[snapshot] at (s3) {\snapshotImage{colorRedBase}};
\end{tikzpicture}
\else
hideTikz is set
\fi
\caption{ABS snapshot}\label{fig:snapshots:abs}
\end{subfigure}
\caption{Examples of snapshots obtained in a distributed stateful dataflow system with three processes $p_1 \rightarrow p_2 \rightarrow p_3$.}
\label{fig:snapshots}
\end{figure}

\subparagraph*{Distributed Snapshotting Protocols.}
A distributed snapshotting protocol is considered \emph{causally consistent} if it captures snapshots that do not violate causality~\cite{chandy1985distributed}.
Causality, here, refers to the causal order relation~\cite{DBLP:journals/cacm/Lamport78}, informally: two events are causally ordered if one event was part of a causal chain leading to the other event.
Consequently, a causally consistent snapshot captures the state of a system such that all events causally preceding any other event in the snapshot are included.
This definition is illustrated by three example executions of different snapshotting protocols for a dataflow graph consisting of three nodes, shown in \Cref{fig:snapshots}.
An incorrect implementation (\Cref{fig:snapshots:inconsistent}) would be to let the processes periodically capture a snapshot of their state without coordination.
A snapshot captured with this method can be inconsistent, thus not suitable for recovery, as it may violate causality.
In the example, the incorrect snapshot has captured that $m_{2,2}$ was received by $p_3$ but never sent by $p_2$, this is a violation of causality, and recovery from such a snapshot would be considered erroneous.
In contrast, consistent snapshotting protocols do not violate causality.
The Chandy-Lamport asynchronous snapshotting protocol~\cite{chandy1985distributed} (\Cref{fig:snapshots:chandy-lamport}) solves this issue through distributed coordination by means of disseminating markers during its regular execution, separating pre-snapshot and post-snapshot messages.
However, a snapshot captured with the Chandy-Lamport protocol may capture in-flight events: as shown in the example (\Cref{fig:snapshots:chandy-lamport}), the message $m_{2,2}$ was sent (according to $p_2$'s snapshot) but not yet received (according to $p_3$'s snapshot).
The Asynchronous Barrier Snapshotting (ABS) protocol~\cite{carbone2015lightweight,carbone2018thesis}, in contrast, captures complete distributed computations without in-flight events by modification of the marker-based Chandy-Lamport protocol.
As shown in \Cref{fig:snapshots:abs}, the snapshot does not include any in-flight events.

\begin{figure}[t]\centering
\begin{lstlisting}[language=eventhandler, label=lst:abs, caption={Representation of an event handler within a stateful dataflow system implementing failure recovery using the ABS protocol~\cite{carbone2015lightweight,carbone2018thesis}.}]
EventHandler Def $TK \langle f$, $\seq{S_i}{i}{n}, o \rangle$
  Vars state, snapshots
  On Event Receive < $S_j$, epoch, `\texttt{Event}`$\langle$w$\rangle$ > If $\exists \text{v}$: state = < epoch, v > Do
    v', w' = $f$(v, w')
    state = < epoch, v' >
    emit(< $o$, epoch, `\texttt{Event}`$\langle$w'$\rangle$ >)
  On Event Receive $\bigl[$< $S_i$, epoch, $\texttt{Border}$ >$\bigl]_i^n$ If $\exists \text{v}$: state = < epoch, v > Do
    snapshots.update(epoch $\mapsto$ v)
    state = < epoch + 1, v >
    emit(< $o$, epoch, `\texttt{Border}` >)
  On Event Fail Do
    state = `\text{Failed}`
  On Event Recover < recoverEpoch > Do
    state = < recoverEpoch, snapshots(recoverEpoch) >
\end{lstlisting}
\end{figure}

\subparagraph{The ABS Protocol.}
A representation of the ABS protocol~\cite{carbone2015lightweight,carbone2018thesis} corresponding to our formalization in \Cref{sec:implementation-model} is found in \Cref{lst:abs}.
The handler has two mutable states: the processing task's volatile \lstinline{state}, and the persistent \lstinline{snapshots} state.
The \lstinline{state} is a tuple \lstinline{< epoch, v >} consisting of the current \lstinline{epoch}'s sequence number being processed, and the state \lstinline{v} of the processing task.
The first event handler consumes an event \lstinline{Event$\langle$w$\rangle$} from a stream with stream name $S_j$ out of the sequence of stream names $S$ for some \lstinline{epoch} if it is not currently in a failed state.
It processes the event $w$ on its current state $v$, which produces an output event $w'$ and new state $v'$.
It then updates its mutable state, and emits the output on its outgoing stream with stream name $o$.
The second handler processes the \lstinline{Border} markers from the higher-level ABS protocol.
It will consume all border events from all its incoming streams in a single step.
In doing so, it will take a snapshot of the local state and update the epoch number, as well as disseminate the border marker further downstream.
To note is that the first handler does not consume from a stream if that stream has a border marker as its next event, instead it will block such streams until the border step (\ie the second handler) has been taken.
The first and second handlers implement the ABS protocol, whereas the third and fourth handlers implement the failure recovery.
The third handler models the random failures of tasks, a task can randomly fail at any time, in which case it loses its volatile state.
The fourth handler implements the failure recovery, and is triggered by some external coordinating instance once it has detected the failure.
Once a failure has been detected, all tasks are recovered to the same epoch which corresponds to the latest snapshot of the system.
When triggered, the fourth handler recovers the state back to the snapshot of the epoch found in the message.

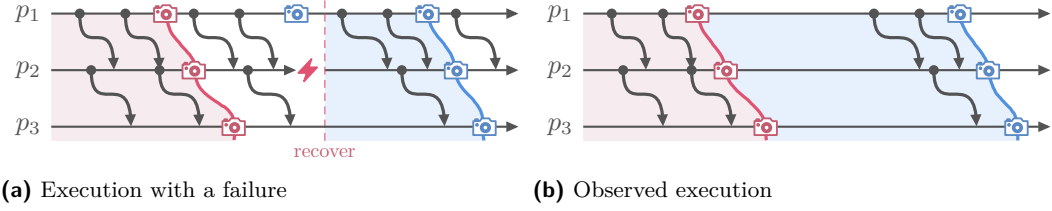
\begin{figure}[t]
\centering
\begin{subfigure}{0.5\textwidth}
\centering
\ifx\hideTikz\undefined
\begin{tikzpicture}[
  scale=0.75, every node/.style={scale=1},
  thick, colorGrayBase,
  timeline/.style={-{Triangle[angle=45:5]}, thick},
  message/.style={-{Triangle[angle=45:5]}, very thick, controls={+(down:0.9) and +(up:1.1)}},
  messageNode/.style={scale=0.4, circle, fill, prefix after command={\pgfextra{\tikzset{every label/.style={label distance=-1.5mm, scale=0.75}}}}},
  inlineImage/.style={scale=0.35, yscale=-1, xshift=-12, yshift=-12},
  snapshotEdge/.style={very thick, handdrawn}
]

\coordinate (s1) at (2,2);
\coordinate (s2) at (2.5,1);
\coordinate (s3) at (3.2,0);

\begin{scope}
\clip (0,-0.25)--(0,2)--(s1) .. controls +(down:0.5) and +(up:0.5) .. (s2) .. controls +(down:0.5) and +(up:0.5) .. (s3)--($(s3)+(0,-0.25)$)--(0,-0.25);
\fill[colorRedBack] (0,-0.25)--(0,2.1)--(5,2.1)--(5,-0.25)--(0,-0.25);
\end{scope}

\draw[colorRedEmph, snapshotEdge](s1) .. controls +(down:0.5) and +(up:0.5) .. (s2) .. controls +(down:0.5) and +(up:0.5) .. (s3)--($(s3)+(0,-0.25)$);

\coordinate (s12) at (7.2-0.6,2);
\coordinate (s22) at (7.7-0.6,1);
\coordinate (s32) at (8.2-0.6,0);

\begin{scope}
\clip (0,-0.25)--(0,2)--(s12) .. controls +(down:0.5) and +(up:0.5) .. (s22) .. controls +(down:0.5) and +(up:0.5) .. (s32)--($(s32)+(0,-0.25)$)--(0,-0.25);
\fill[colorBlueBack] (5.2-0.4,-0.25)--(5.2-0.4,2.1)--(8.2,2.1)--(8.2,-0.25)--(5.2-0.4,-0.25);
\end{scope}

\draw[colorBlueEmph, snapshotEdge](s12) .. controls +(down:0.5) and +(up:0.5) .. (s22) .. controls +(down:0.5) and +(up:0.5) .. (s32)--($(s32)+(0,-0.25)$);

\node[label={left:{$p_1$}},scale=0] at (0,2) {};
\draw[timeline](0,2) -- (8.2,2);
\node[label={left:{$p_2$}},scale=0] at (0,1) {};
\draw[timeline](0,1) -- (4.3,1);
\draw[timeline](5.2-0.4,1) -- (8.2,1);
\draw[densely dashed, thin, colorRedEmph] (5.2-0.4,-0.25) -- (5.2-0.4,2.25);
\node[label={[scale=0.75, colorRedBase]below:{recover}},scale=0] at (5.2-0.4,-0.25) {};
\node[label={left:{$p_3$}},scale=0] at (0,0) {};
\draw[timeline](0,0) -- (8.2,0);

\node[messageNode] (m11) at (0.5,2) {};
\draw[message](m11.center) to (1.1,1);
\node[messageNode] (m21) at (0.7,1) {};
\draw[message](m21.center) to (1.4,0);
\node[messageNode] (m12) at (1.3,2) {};
\draw[message](m12.center) to (1.9,1);
\node[messageNode] (m22p) at (1.9,1) {};
\draw[message](m22p.center) to (2.6,0);

\node[messageNode] (m13) at (2.5,2) {};
\draw[message](m13.center) to (3.2,1);
\node[messageNode] (m22) at (3.45,1) {};
\draw[message](m22.center) to (4.2,0);
\node[messageNode] (m14) at (3.3,2) {};
\draw[message](m14.center) to (3.9,1);
\node[messageNode] (m112) at (5.7-0.6,2) {};
\draw[message](m112.center) to (6.3-0.4,1);
\node[messageNode] (m212) at (6.75-0.6,1) {};
\draw[message](m212.center) to (7.5-0.6,0);
\node[messageNode] (m122) at (6.5-0.6,2) {};
\draw[message](m122.center) to (7.2-0.6,1);
\node[messageNode] (m132) at (7.7-0.6,2) {};
\draw[message](m132.center) to (8.4-0.6,1);

\node[inlineImage] at (s1) {\snapshotImage{colorRedBase}};
\node[inlineImage] at (s2) {\snapshotImage{colorRedBase}};
\node[inlineImage] at (s3) {\snapshotImage{colorRedBase}};
\node[inlineImage] at (4.3,2) {\snapshotImage{colorBlueBase}};
\node[inlineImage] at (s12) {\snapshotImage{colorBlueBase}};
\node[inlineImage] at (s22) {\snapshotImage{colorBlueBase}};
\node[inlineImage] at (s32) {\snapshotImage{colorBlueBase}};

\node[inlineImage] at (4.5,1) {\failureImage};
\end{tikzpicture}
\else
hideTikz is set
\fi
\caption{Execution with a failure}\label{fig:observation:failed}
\end{subfigure}%
\begin{subfigure}{0.5\textwidth}
\centering
\ifx\hideTikz\undefined
\begin{tikzpicture}[
  scale=0.75, every node/.style={scale=1},
  thick, colorGrayBase,
  timeline/.style={-{Triangle[angle=45:5]}, thick},
  message/.style={-{Triangle[angle=45:5]}, very thick, controls={+(down:0.9) and +(up:1.1)}},
  messageNode/.style={scale=0.4, circle, fill, prefix after command={\pgfextra{\tikzset{every label/.style={label distance=-1.5mm, scale=0.75}}}}},
  inlineImage/.style={scale=0.35, yscale=-1, xshift=-12, yshift=-12},
  snapshotEdge/.style={very thick, handdrawn}
]

\coordinate (s12) at (7.2-0.6,2);
\coordinate (s22) at (7.7-0.6,1);
\coordinate (s32) at (8.2-0.6,0);

\begin{scope}
\clip (0,-0.25)--(0,2)--(s12) .. controls +(down:0.5) and +(up:0.5) .. (s22) .. controls +(down:0.5) and +(up:0.5) .. (s32)--($(s32)+(0,-0.25)$)--(0,-0.25);
\fill[colorBlueBack] (0,-0.25)--(0,2.1)--(8.2,2.1)--(8.2,-0.25)--(0,-0.25);
\end{scope}

\draw[colorBlueEmph, snapshotEdge](s12) .. controls +(down:0.5) and +(up:0.5) .. (s22) .. controls +(down:0.5) and +(up:0.5) .. (s32)--($(s32)+(0,-0.25)$);

\coordinate (s1) at (2,2);
\coordinate (s2) at (2.5,1);
\coordinate (s3) at (3.2,0);

\begin{scope}
\clip (0,-0.25)--(0,2)--(s1) .. controls +(down:0.5) and +(up:0.5) .. (s2) .. controls +(down:0.5) and +(up:0.5) .. (s3)--($(s3)+(0,-0.25)$)--(0,-0.25);
\fill[colorRedBack] (0,-0.25)--(0,2.1)--(5,2.1)--(5,-0.25)--(0,-0.25);
\end{scope}

\draw[colorRedEmph, snapshotEdge](s1) .. controls +(down:0.5) and +(up:0.5) .. (s2) .. controls +(down:0.5) and +(up:0.5) .. (s3)--($(s3)+(0,-0.25)$);

\node[label={left:{$p_1$}},scale=0] at (0,2) {};
\draw[timeline](0,2) -- (8.2,2);
\node[label={left:{$p_2$}},scale=0] at (0,1) {};
\draw[timeline](0,1) -- (8.2,1);
\node[label={left:{$p_3$}},scale=0] at (0,0) {};
\draw[timeline](0,0) -- (8.2,0);

\node[messageNode] (m11) at (0.5,2) {};
\draw[message](m11.center) to (1.1,1);
\node[messageNode] (m21) at (0.7,1) {};
\draw[message](m21.center) to (1.4,0);
\node[messageNode] (m12) at (1.4,2) {};
\draw[message](m12.center) to (1.9,1);
\node[messageNode] (m22p) at (1.9,1) {};
\draw[message](m22p.center) to (2.6,0);

\node[messageNode] (m112) at (5.7-0.6,2) {};
\draw[message](m112.center) to (6.3-0.4,1);
\node[messageNode] (m212) at (6.75-0.6,1) {};
\draw[message](m212.center) to (7.5-0.6,0);
\node[messageNode] (m122) at (6.5-0.6,2) {};
\draw[message](m122.center) to (7.2-0.6,1);

\node[inlineImage] at (s1) {\snapshotImage{colorRedBase}};
\node[inlineImage] at (s2) {\snapshotImage{colorRedBase}};
\node[inlineImage] at (s3) {\snapshotImage{colorRedBase}};
\node[inlineImage] at (s12) {\snapshotImage{colorBlueBase}};
\node[inlineImage] at (s22) {\snapshotImage{colorBlueBase}};
\node[inlineImage] at (s32) {\snapshotImage{colorBlueBase}};

\node[label={[scale=0.75]below:{\phantom{recover}}},scale=0] at (5.2-0.4,-0.25) {};
\end{tikzpicture}
\else
hideTikz is set
\fi
\caption{Observed execution}\label{fig:observation:observed}
\end{subfigure}
\caption{An execution with failures and its observed execution.}\label{fig:observation}
\end{figure}

\subparagraph*{Failure Recovery.}
The dataflow system can recover from failures using the ABS protocol.
\Cref{fig:observation:failed} shows an execution using the ABS protocol in which $p_2$ fails.
The coordinator (not displayed) will eventually discover the failure, and trigger a synchronous recovery step in which all processes recover to the latest completed snapshot and continue processing from there.
Even though failures occur in the execution, the observer will be able to construct an idealized execution corresponding to our notion of failure transparency in which there are no failed events or incomplete epochs as shown in \Cref{fig:observation:observed}.
This is, loosely speaking, achieved by ignoring the side effects from the failed epochs, and is explored in detail in \Cref{sec:failure-transparency-of-the-implementation}.

\section{Implementation Model}\label{sec:implementation-model}
We now provide a formal model of the stateful dataflow system described above.
The goal of this formalization is to capture and analyze key aspects of the implementation of the system, with focus on its failure recovery using the Asynchronous Barrier Snapshotting protocol~\cite{carbone2015lightweight,carbone2018thesis}.
The formal model is presented in two parts: the first part presents an explicit evaluation rule for message passing, and the second part presents the evaluation rules for processing and failure recovery.

\subsection{Streaming Model}\label{sec:common-streaming-model}

The streaming model is based on processors (or tasks) that communicate via streams.
A processor is a stateful entity that may consume an event from an incoming stream, process it, and produce events to its outgoing stream.
Streams, in turn, transport the events between processors in a FIFO order.
With this notion of processors and streams, we can execute computational graphs by means of steps.
Note that, in this section, we discuss a general streaming model, leaving the implementation of processors abstract.
Whereas, in the next section, we discuss concrete implementations of processors.

\begin{figure}
\centering
\fcolorbox{colorGrayBase}{white}{\begin{minipage}{\textwidth-2\fboxsep-2\fboxrule}%
\begin{minipage}{\textwidth}%
\centering$\begin{array}[t]{c@{\hspace{10mm}}c@{\hspace{10mm}}c}
p,~q ~{~}~{~}\text{processor ID}
&s,~o ~{~}~{~}\text{stream name}
&n \in \mathbb{N} ~{~}~{~}\text{sequence number}
\end{array}$\end{minipage}\\\begin{minipage}{\textwidth}
\[\begin{array}[t]{c@{\hspace{10mm}}c@{\hspace{10mm}}c@{\hspace{10mm}}c}
\pi ~{~}~{~}\mbox{processor}
&\sigma ~{~}~{~}\mbox{state}
&d ~{~}~{~}\mbox{message data}
&D ~{~}~{~}\mbox{auxiliary data}
\end{array}\]\end{minipage}\\[1mm]\begin{minipage}{\textwidth}
\begin{minipage}{0.59\textwidth}
\[\begin{array}[t]{r@{\hspace{1mm}}r@{\hspace{1mm}}l@{\hspace{2mm}}l}
\Pi &::=& \seq{\pi}{p}{\size{\Pi}} &\mbox{processors}\\[0.1cm]
\Sigma &::=& \seq{\sigma}{p}{\size{\Pi}} &\mbox{states}\\[0.1cm]
M &::=& \seq{m}{i}{\size{M}} &\mbox{messages}\\[0.1cm]
N &::=& \seq{N_p}{p}{\size{\Pi}} &\mbox{sequence numbers}\\[0.1cm]
N_p &::=& \mapbuild{s}{n}{s} &\mbox{sequence numbers of $p$}
\end{array}\]\end{minipage}%
\begin{minipage}{0.41\textwidth}
\[\begin{array}[t]{r@{\hspace{1mm}}r@{\hspace{1mm}}l@{\hspace{2mm}}l}
X &::=& \seq{x}{i}{\size{X}} &\mbox{actions}\\[0.1cm]
x &::=& &\mbox{action}\\
& &+\,s\,d     &\mbox{\small\it production}\\
&\mid &-\,s\,d &\mbox{\small\it consumption}\\[0.1cm]
m &::=& n\,s\,{d}    &\mbox{message}
\end{array}\]\end{minipage}\end{minipage}%
\end{minipage}}
\caption{Streaming syntax.}\label{fig:ssyntax}
\end{figure}

\subparagraph*{Syntax.}
\Cref{fig:ssyntax} shows the syntax of the streaming model.
A configuration $c = \langle\, \Pi,\, \Sigma,\, N,\\ M,\, D \,\rangle$ represents a point in an execution of a streaming program.
The processors $\Pi$ indexed by identifiers $p$ represent processor definitions, for which $\Sigma$ represents the states of the processors.
The messages $M$ are modeled as a sequence of all messages, for which a message $m$ corresponds to a tuple of a sequence number $n$, a stream name $s$, and the message data $d$.
The current sequence number from which a processor $p$ reads from or writes to a stream $s$ is represented by $N_p(s)$.
The sequence numbers for all processors are represented by $N$.
When a processor processes a message, it may produce and consume messages.
This production and consumption is represented by a sequence of actions $X$.
A production action producing message $d$ to stream $s$ has the form $+\, s\, d$, similar to the consumption action $-\, s\, d$.
The auxiliary data $D$ is used to store global and additional execution information which is specific to the models; for example, it can be used to implicitly model the global coordinator.
In the formalization here, the processor $\pi$, state $\sigma$, message data $d$ and auxiliary global data $D$ are seen as atomic values, that is, no information about their internal structure is provided.
These limitations permit reusing the same syntax and rule for different instantiations of $\pi$, $\sigma$, $d$ and $D$.

\begin{figure}\centering
\newcommand*{\tikzscale}{1.5}
\begin{subfigure}{0.33\textwidth}
\centering
\ifx\hideTikz\undefined
\begin{tikzpicture}[
  scale=\tikzscale, every node/.style={scale=\tikzscale},
  pin distance=3m,
  thick, colorGrayBase,
  node distance=0.3cm*\tikzscale and 1cm*\tikzscale,
  task/.style={rectangle, rounded corners=1mm, very thick, draw,text=black,inner sep=2mm,colorRedBase,fill=colorRedBack},
  event/.style={scale=0.45, circle, thick, draw, node on layer=foreground},
  border/.style={scale=0.275, rectangle, very thick, draw, colorRedEmph, fill=colorRedBack, sloped, rotate=90, node on layer=foreground},
  inlineImage/.style={scale=0.35, yscale=-1, xshift=-12, yshift=-12}
]
\begin{scope}[on background layer]
\draw[double distance=1.5mm*\tikzscale] (0,0) -- (2, 0) node[pos=0.25,scale=0] (p) {} node[pos=0.35,event,colorRedBase] (e2) {} node[pos=0.55,event,colorRedBase] (e1) {} node[pos=0.75,event,colorRedBase] (e0) {} node[pos=0.85,scale=0] (c1) {} node[pos=0.95,scale=0] (c2) {};
\draw[line width=1.5mm*\tikzscale,colorRedBack] (0,0) -- (2, 0);
\node[above=of p,scale=0.6] (pl) {$p$};
\node[below=of p,scale=0.5,yshift=4mm] {$3$};
\draw[densely dotted] (pl) -- ($(p.center)+(0,-0.75mm*\tikzscale)$);
\node[below=of e2.center,scale=0.5,yshift=4mm,colorRedBase] {$\mathbf{2}$};
\node[below=of e1.center,scale=0.5,yshift=4mm,colorRedBase] {$\mathbf{1}$};
\node[below=of e0.center,scale=0.5,yshift=4mm,colorRedBase] {$\mathbf{0}$};
\node[above=of c1,scale=0.6] (c1l) {$q_1$};
\node[below=of c1,scale=0.5,yshift=4mm] {$0$};
\draw[densely dotted] (c1l) -- ($(c1.center)+(0,-0.75mm*\tikzscale)$);
\node[above=of c2,scale=0.6] (c2l) {$q_2$};
\node[below=of c2,scale=0.5,yshift=4mm] {$0$};
\draw[densely dotted] (c2l) -- ($(c2.center)+(0,-0.75mm*\tikzscale)$);
\end{scope}
\end{tikzpicture}
\else
hideTikz is set
\fi
\caption{Initial state}\label{fig:streaming:initial}
\end{subfigure}%
\begin{subfigure}{0.33\textwidth}
\centering
\ifx\hideTikz\undefined
\begin{tikzpicture}[
  scale=\tikzscale, every node/.style={scale=\tikzscale},
  pin distance=3m,
  thick, colorGrayBase,
  node distance=0.3cm*\tikzscale and 1cm*\tikzscale,
  task/.style={rectangle, rounded corners=1mm, very thick, draw,text=black,inner sep=2mm,colorRedBase,fill=colorRedBack},
  event/.style={scale=0.45, circle, thick, draw, node on layer=foreground},
  border/.style={scale=0.275, rectangle, very thick, draw, colorRedEmph, fill=colorRedBack, sloped, rotate=90, node on layer=foreground},
  inlineImage/.style={scale=0.35, yscale=-1, xshift=-12, yshift=-12}
]
\begin{scope}[on background layer]
  \draw[double distance=1.5mm*\tikzscale] (0,0) -- (2, 0) node[pos=0.05,scale=0] (p) {} node[pos=0.15,event,colorRedBase] (e3) {} node[pos=0.35,event,colorRedBase] (e2) {} node[pos=0.55,event,colorRedBase] (e1) {} node[pos=0.75,event,colorRedBase] (e0) {} node[pos=0.85,scale=0] (c1) {} node[pos=0.95,scale=0] (c2) {};
\draw[line width=1.5mm*\tikzscale,colorRedBack] (0,0) -- (2, 0);
\node[above=of p,scale=0.6] (pl) {$p$};
\node[below=of p,scale=0.5,yshift=4mm] {$4$};
\draw[densely dotted] (pl) -- ($(p.center)+(0,-0.75mm*\tikzscale)$);
\node[below=of e3.center,scale=0.5,yshift=4mm,colorRedBase] {$\mathbf{3}$};
\node[below=of e2.center,scale=0.5,yshift=4mm,colorRedBase] {$\mathbf{2}$};
\node[below=of e1.center,scale=0.5,yshift=4mm,colorRedBase] {$\mathbf{1}$};
\node[below=of e0.center,scale=0.5,yshift=4mm,colorRedBase] {$\mathbf{0}$};
\node[above=of c1,scale=0.6] (c1l) {$q_1$};
\node[below=of c1,scale=0.5,yshift=4mm] {$0$};
\draw[densely dotted] (c1l) -- ($(c1.center)+(0,-0.75mm*\tikzscale)$);
\node[above=of c2,scale=0.6] (c2l) {$q_2$};
\node[below=of c2,scale=0.5,yshift=4mm] {$0$};
\draw[densely dotted] (c2l) -- ($(c2.center)+(0,-0.75mm*\tikzscale)$);
\end{scope}
\end{tikzpicture}
\else
hideTikz is set
\fi
\caption{Production by $p$}\label{fig:streaming:production}
\end{subfigure}%
\begin{subfigure}{0.33\textwidth}
\centering
\ifx\hideTikz\undefined
\begin{tikzpicture}[
  scale=\tikzscale, every node/.style={scale=\tikzscale},
  pin distance=3m,
  thick, colorGrayBase,
  node distance=0.3cm*\tikzscale and 1cm*\tikzscale,
  task/.style={rectangle, rounded corners=1mm, very thick, draw,text=black,inner sep=2mm,colorRedBase,fill=colorRedBack},
  event/.style={scale=0.45, circle, thick, draw, node on layer=foreground},
  border/.style={scale=0.275, rectangle, very thick, draw, colorRedEmph, fill=colorRedBack, sloped, rotate=90, node on layer=foreground},
  inlineImage/.style={scale=0.35, yscale=-1, xshift=-12, yshift=-12}
]
\begin{scope}[on background layer]
  \draw[double distance=1.5mm*\tikzscale] (0,0) -- (2, 0) node[pos=0.05,scale=0] (p) {} node[pos=0.15,event,colorRedBase] (e3) {} node[pos=0.35,event,colorRedBase] (e2) {} node[pos=0.55,event,colorRedBase] (e1) {} node[pos=0.65,scale=0] (c1) {} node[pos=0.75,event,colorRedBase] (e0) {} node[pos=0.95,scale=0] (c2) {};
\draw[line width=1.5mm*\tikzscale,colorRedBack] (0,0) -- (2, 0);
\node[above=of p,scale=0.6] (pl) {$p$};
\node[below=of p,scale=0.5,yshift=4mm] {$4$};
\draw[densely dotted] (pl) -- ($(p.center)+(0,-0.75mm*\tikzscale)$);
\node[below=of e3.center,scale=0.5,yshift=4mm,colorRedBase] {$\mathbf{3}$};
\node[below=of e2.center,scale=0.5,yshift=4mm,colorRedBase] {$\mathbf{2}$};
\node[below=of e1.center,scale=0.5,yshift=4mm,colorRedBase] {$\mathbf{1}$};
\node[above=of c1,scale=0.6] (c1l) {$q_1$};
\node[below=of c1,scale=0.5,yshift=4mm] {$1$};
\draw[densely dotted] (c1l) -- ($(c1.center)+(0,-0.75mm*\tikzscale)$);
\node[below=of e0.center,scale=0.5,yshift=4mm,colorRedBase] {$0$};
\node[above=of c2,scale=0.6] (c2l) {$q_2$};
\node[below=of c2,scale=0.5,yshift=4mm] {$0$};
\draw[densely dotted] (c2l) -- ($(c2.center)+(0,-0.75mm*\tikzscale)$);
\end{scope}
\end{tikzpicture}
\else
hideTikz is set
\fi
\caption{Consumption by $q_1$}\label{fig:streaming:consumption}
\end{subfigure}%
\caption{Production and consumption to/from a stream with a producer $p$ and consumers $q_1$ and $q_2$.}\label{fig:streaming}
\end{figure}
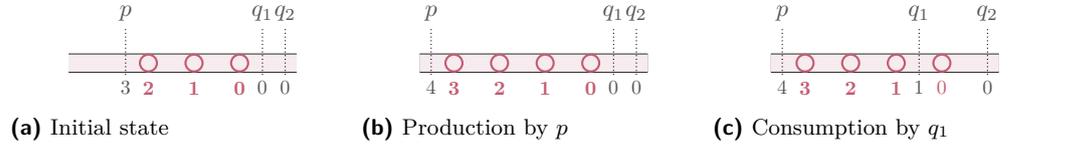

\Cref{fig:streaming} illustrates a stream as a sequence of messages with index numbers.
When producing an event to a stream (\Cref{fig:streaming:production}), the event is appended to the stream with an incremented index number.
This also increments the producer's index number for the stream from 3 to 4.
Similarly, the consumer's index number points to the next event to be consumed.
\Cref{fig:streaming:consumption} shows that the consumer $q_1$ has consumed the event 0, which in turn also increments its index number for the stream, pointing at the next event.
Consumers and producers process the stream independently and asynchronously.
The production of a message is a kind of broadcast, in the sense that all processors will have to consume it before consuming a newer message.

\subparagraph*{Step Rule.}
The streaming model essentially consists of a single rule ($\mathsc{S-Step}$) which describes the processing of messages.
Intuitively, a streaming step from configuration $\configuration{\Pi}{\Sigma}{N}{M}{D}$ can be taken if there is a local step with actions $X$, such that the actions are applicable.
A local step describes how the processor $\Pi_p$ changes its current state $\Sigma_p$ to its next state $\Sigma_p'$ using actions $X$.
The actions $X$ are applicable to $N_p$ and $M$ if all messages consumed by $X$ are available on the input streams of the processor.
The application of the actions $X$ results in $N_p'$ and $M'$, which are the incremented sequence numbers for the processor and the set of messages $M$ extended with the newly produced messages.
In case of taking a streaming step, the configuration transitions to the new configuration $\configuration{\Pi}{\Sigma\map{p}{\Sigma_p'}}{N\map{p}{N_p'}}{M'}{D}$.
In summary, the result of the streaming step is an update of the local state of the processor according to the local step, and an update of the sequence numbers and messages according to the actions $X$.
To simplify the analysis of streaming steps, auxiliary information about the processor ID, its sequence numbers, and the actions of the step is placed on the arrow of the execution step.
This information can be omitted when it is not needed by applying abstraction steps $\mathsc{S-AbsX}$ and $\mathsc{S-AbsP}$.

\[\infer*[Right=S-Step]{
    \Pi_p ~\Vdash~ \Sigma_p \xrightarrow{X} \Sigma_p'
\\
    X(N_p,\, M) = (N'_p,\, M')
}{
    \langle\, \Pi,\, \Sigma,\, N,\, M,\, D \,\rangle \xRightarrow[p]{N_p, X} \langle\, \Pi,\, \Sigma\map{p}{\Sigma_p'},\, N\map{p}{N'_p},\, M',\, D \,\rangle
}\]
\[\infer*[Right=S-AbsX]{
    c \xRightarrow[p]{N_p, X} c'
}{
    c \xRightarrow[p]{} c'
}\quad\quad\quad\quad\quad\quad\infer*[Right=S-AbsP]{
    c \xRightarrow[p]{} c'
}{
    c \Rightarrow c'
}\]

The streaming rule can be applied if there exists a derivation of the form $\Pi_p ~\Vdash~ \Sigma_p \xrightarrow{X} \Sigma_p'$ for a processor $\Pi_p$.
These are called local steps, since they have access only to the local data of
a processor, \ie its definition, state and locally accessible messages.
These rules describe the local step of a processor, in which the processor may produce and consume messages/actions $X$, and update its local state to $\Sigma_p'$.
The produced actions $X$ modify the sequence numbers of the processor $N_p$ and the messages in the system after application.
This is computed by the action application function $X(N_p,\, M)$ and results in the new sequence numbers $N'_p$ and messages $M'$ for the next configuration as defined below.

\subparagraph*{Action Application.}

The action application rule defines how actions modify the sequence numbers and messages.
A production action $+\,s\,d$ increases the sequence number of the stream $s$ for the producer, and adds the message to the sequence of messages.
Each stream has at most one producer; thus, we do not need to specify the producer in the action or message.
A consumption action $-\,s\,d$ increases the sequence number of the stream $s$ for the consumer, but does not remove it from the sequence of messages, as there may be other consumers waiting to consume the message.
To note is that the consumption action application is only defined if the message is present in the sequence of messages.
Due to this, local steps may only be applied in the context of the $\mathsc{S-Step}$ rule if the consumed message is present in the sequence of messages.
The remaining cases of the definition are for the recursive application of actions.

\begin{restatable}[Action Application]{definition}{actionApplication}\label{def:action-application}
\[\begin{array}[t]{r@{~}l l}
(+\,s\,d)(N_p,\,M)      &=  (N_p\map{s}{N_p(s)+1},\, M\cup\set{N_p(s)\,s\,d})\\
(-\,s\,d)(N_p,\,M)      &=  (N_p\map{s}{N_p(s)+1},\, M)~\text{if $N_p(s)\,s\,d \in M$, undefined otherwise}\\[0.1cm]
(\single{x}\concat X)(N_p,\,M) &= X(x(N_p,\,M))\\
\varepsilon(N_p,\,M)      &= (N_p,\,M)
\end{array}\]
\end{restatable}

According to the definition, it is not always possible to apply an action. This may be the case if, for example, a message for some sequence number is not yet available on its stream.
This enables indirectly ``passing'' messages to the local step rules.
Whereas the local step rule is defined for all possible steps for all messages that it may consume, cases in which the message consumption is not applicable by the action application definition are ruled out by the streaming global step rule.
This leaves only messages which are applicable to be applied to the steps, thus passing the message to the rule.

\subsection{Stateful Dataflow Model}

The presented stateful dataflow model consists of processing tasks, sources, and sinks.
A processing task consumes messages from a set of input streams, and produces messages on its output stream.
The task's behavior is defined by a function $f$ which processes the messages.
The function $f$ takes the task's state and an input message, and produces a new state and a sequence of output messages: $f(v, w) = v', \seq{W_i'}{i}{n}$.
The presented formal model does not provide a syntax and semantics for functions; they can be expressed using any suitable formalism.
The sources of the model are emulated by streams which are initialized in the first configuration to contain all the messages which are to be consumed from the source.
That is, each source is represented by its output stream, which in turn becomes an input to one of the tasks of the computational graph.
Sinks are also emulated as streams, however, in contrast to sources, they are initially empty.
The computation of the system, informally, takes inputs from the sources, processes them in the processing graph, and produces outputs to the sinks.

\begin{figure}[t]
\centering
\fcolorbox{colorGrayBase}{white}{\begin{minipage}{\textwidth-2\fboxsep-2\fboxrule}%
\begin{minipage}[t]{0.5\textwidth}%
\centering$\begin{array}[t]{r@{\hspace{1mm}}r@{\hspace{1mm}}l@{\hspace{2mm}}l}
& &v, ~w &\text{value}\\[0.3cm]
\pi &::=&\texttt{TK}\langle\, f,\,\sseq{S_i}{i}{\size{S}},\,o \,\rangle &\text{task}\\[0.1cm]
a &::=& \smapbuild{e}{v}{e} &\text{snapshot archive}\\[0.1cm]
\sigma &::=& \langle\, a,\, \sigma_\mathrm{V} \,\rangle &\text{state}\\[0.1cm]
\sigma_\mathrm{V} &::= &~&\mbox{volatile state}\\
& &\texttt{fl} &\mbox{\small\it failed state}\\
&\mid&\langle\, e,\, v \,\rangle &\mbox{\small\it normal state}
\end{array}$\end{minipage}%
\begin{minipage}[t]{0.5\textwidth}%
\centering$\begin{array}[t]{r@{\hspace{1mm}}r@{\hspace{1mm}}l@{\hspace{2mm}}l}
& &e\in\mathbb{N} &\text{epoch number}\\[0.3cm]
d &::=& \langle\, e,\, d_\mathrm{C} \,\rangle &\text{message}\\[0.1cm]
d_\mathrm{C} &::=& &\mbox{message cases}\\
& &\texttt{EV}\langle\, w \,\rangle &\mbox{\small\it event}\\
&\mid&\texttt{BD} &\mbox{\small\it epoch border}\\[0.1cm]
D &::=& M_0 &\text{initial input messages}
\end{array}$\end{minipage}%
\end{minipage}}
\caption{Stateful dataflow syntax.}\label{fig:low-level-syntax}
\end{figure}

\subparagraph*{Syntax.}
The syntax of the implementation model (\Cref{fig:low-level-syntax}) extends the shared streaming syntax and semantics (\Cref{fig:ssyntax}) by providing concrete instances of processors/tasks, messages, and state definitions.
A task $\texttt{TK}\langle\, f,\, S,\,o \,\rangle$ is a three-tuple of its processing function $f$, sequence of input streams $S$, and its output stream $o$.
Tasks process messages which are tuples of an epoch number $e$ and the message data $d_\mathrm{C}$.
There are two kinds of messages: normal events $\texttt{EV}\langle\, w \,\rangle$ and epoch borders $\texttt{BD}$.
The epoch border messages are markers used for the snapshotting algorithm, whereas the events are the actual data processed by the tasks.
When processing, the tasks manipulate state which consists of a persistent \emph{snapshot archive} $a$, \ie a map from epoch numbers to the corresponding local snapshots, and some \emph{volatile state} $\sigma_\mathrm{V}$.
The snapshot archive is a map from epoch numbers $e$ to the state $v$ of the processor at the end of the epoch.
The volatile state is either a \emph{failed state} $\texttt{fl}$ or a \emph{normal state} $\langle\, e,\, v \,\rangle$, consisting of the current epoch number and the state data value $v$ of the processor.
As with the messages, normal states are tagged by epoch numbers.
A processor is in a failed state if it has crashed and lost its volatile state.
The auxiliary data $D$ used for this model consists of the initial input messages for the system. As we may need to restore the messages which are yet to be consumed, we keep track of all the initial input messages as the global auxiliary data of the system.

\subsubsection{Derivation Rules}
The semantics of the model consists of seven rules.
Three of the rules, $\mathsc{I-Event}$, $\mathsc{I-Border}$, and $\mathsc{F-Fail}$, are local rules which enable deriving a local step of the form $\pi ~\Vdash~ \sigma \xrightarrow{X} \sigma'$.
Whereas the $\mathsc{I-Event}$ and $\mathsc{I-Border}$ rules model the processing of the system, the $\mathsc{F-Fail}$ rule models nondeterministic crash-failures of a processing task within the system.
These rules, together with the streaming rule $\mathsc{S-Step}$ and its abstraction rules $\mathsc{S-AbsX}$ and $\mathsc{S-AbsP}$, are used for deriving global steps.
The fourth rule, $\mathsc{F-Recover}$, is a global rule used for recovering the state of all processors after a failure.

\subparagraph*{Event Rule.}
The first rule, $\mathsc{I-Event}$, models tasks processing events:
\[\infer*[Right=I-Event]{
  f(v, w) = v', \seq{W'_i}{i}{n}
}{
  \texttt{TK}\langle\, f,\, S,\, o \,\rangle ~\Vdash~ \langle\, a,\, \langle\, e,\, v\,\rangle \,\rangle \xrightarrow{\ssingle{-\,S_j\,\langle\,e,\,\texttt{EV}\langle\, w \,\rangle\,\rangle} \concat \sseq{+\,o\,\langle\,e,\,\texttt{EV}\langle\, W'_i \,\rangle\,\rangle}{i}{n}} \langle\, a,\, \langle\, e,\, v'\,\rangle \,\rangle
}\]

The rule can perform a local step for a task $\texttt{TK}\langle\, f,\,\sseq{S_i}{i}{\size{S}},\,o \,\rangle$, if the current state of the task is a normal state $\langle\, e,\, v \,\rangle$, and the task can consume an event $\texttt{EV}\langle\, w \,\rangle$ from one of its inputs $S_j$.
Applying a task's function $f$ to its current state $v$ and the consumed event $w$ results in the task's next state $v'$ and a sequence of output events $\seq{W'_i}{i}{n}$.
The rule updates the state of the task to the new state $\langle\, e,\, v' \,\rangle$ and produces the output events $\sseq{\texttt{EV}\langle\, W'_i \,\rangle}{i}{n}$ on the output stream $o$.
The local step produces the actions which are the concatenation of the consumed and produced events.
For example, $\ssingle{-\,S_j\,\langle\,e,\,\texttt{EV}\langle\, w \,\rangle\,\rangle} \concat \ssingle{+\,o\,\langle\,e,\,\texttt{EV}\langle\, w' \,\rangle\,\rangle}$ is the action of consuming the event $\texttt{EV}\langle\, w \,\rangle$ with epoch number $e$ from the input stream $S_j$ and producing the event $\texttt{EV}\langle\, w' \,\rangle$ with epoch number $e$ on the output stream $o$.

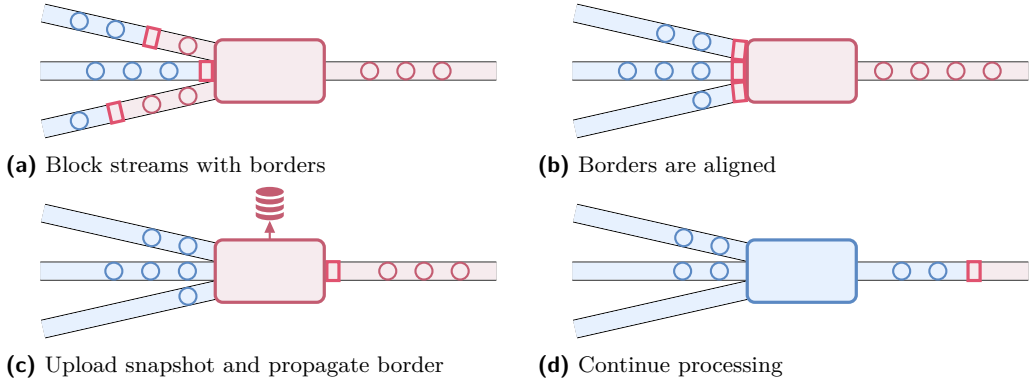
\begin{figure}\centering
  \newcommand*{\tikzscale}{1.5}
  \begin{subfigure}{0.5\textwidth}
  \centering
  \ifx\hideTikz\undefined
\begin{tikzpicture}[
  scale=\tikzscale, every node/.style={scale=\tikzscale},
  thick, colorGrayBase,
  node distance=0.5cm*\tikzscale and 1cm*\tikzscale,
  task/.style={rectangle, rounded corners=1mm, very thick, draw,text=black,inner sep=2mm,colorRedBase,fill=colorRedBack},
  event/.style={scale=0.45, circle, thick, draw, node on layer=foreground},
  border/.style={scale=0.275, rectangle, very thick, draw, colorRedEmph, fill=colorRedBack, sloped, rotate=90, node on layer=foreground},
  inlineImage/.style={scale=0.35, yscale=-1, xshift=-12, yshift=-12}
]
\begin{scope}[on background layer]
\node[scale=0.1] (s2) at (-2, 0) {};
\draw[double distance=1.5mm*\tikzscale] (s2.west) -- (-0.4,0) node[pos=0.3,event,colorBlueBase] {} node[pos=0.5,event,colorBlueBase] {} node[pos=0.7,event,colorBlueBase] {} node[pos=0.903,border] {\phantom{xx}};
\draw[line width=1.5mm*\tikzscale,colorBlueBack] (s2.west) -- (-0.4,0);
\node[scale=0.1,above=of s2] (s1) {};
\draw[double distance=1.5mm*\tikzscale] (s1.south east) -- (-0.4,0.15) node[pos=0.2,event,colorBlueBase] {} node[pos=0.4,event,colorBlueBase] {} node[pos=0.6,border] (bd1) {\phantom{xx}} node[pos=0.8,event,colorRedBase] {};
\draw[line width=1.5mm*\tikzscale,colorBlueBack] (s1.south east) -- (bd1);
\draw[line width=1.5mm*\tikzscale,colorRedBack] (bd1) -- (-0.4,0.15);
\node[scale=0.1,below=of s2] (s3) {};
\draw[double distance=1.5mm*\tikzscale] (s3.north east) -- (-0.4,-0.15) node[pos=0.2,event,colorBlueBase] {} node[pos=0.4,border] (bd2) {\phantom{xx}} node[pos=0.6,event,colorRedBase] {} node[pos=0.8,event,colorRedBase] {};
\draw[line width=1.5mm*\tikzscale,colorBlueBack] (s3.north east) -- (bd2);
\draw[line width=1.5mm*\tikzscale,colorRedBack] (bd2) -- (-0.4,-0.15);

\node[scale=0.1] (o) at (2, 0) {};
\draw[double distance=1.5mm*\tikzscale] (0.4,0) -- (o.west) node[pos=0.3,event,colorRedBase] {} node[pos=0.5,event,colorRedBase] {} node[pos=0.7,event,colorRedBase] {};
\draw[line width=1.5mm*\tikzscale,colorRedBack] (0.4,0) -- (o.west);

\node[task] (task) at (0,0) {\phantom{xxx}};
\end{scope}
\end{tikzpicture}
\else
hideTikz is set
\fi
  \caption{Block streams with borders}\label{fig:alignment:block}
  \end{subfigure}%
  \begin{subfigure}{0.5\textwidth}
  \centering
  \ifx\hideTikz\undefined
\begin{tikzpicture}[
  scale=\tikzscale, every node/.style={scale=\tikzscale},
  thick, colorGrayBase,
  node distance=0.5cm*\tikzscale and 1cm*\tikzscale,
  task/.style={rectangle, rounded corners=1mm, very thick, draw,text=black,inner sep=2mm,colorRedBase,fill=colorRedBack},
  event/.style={scale=0.45, circle, thick, draw, node on layer=foreground},
  border/.style={scale=0.275, rectangle, very thick, draw, colorRedEmph, fill=colorRedBack, sloped, rotate=90, node on layer=foreground},
  inlineImage/.style={scale=0.35, yscale=-1, xshift=-12, yshift=-12}
]
\begin{scope}[on background layer]
\node[scale=0.1] (s2) at (-2, 0) {};
\draw[double distance=1.5mm*\tikzscale] (s2.west) -- (-0.4,0) node[pos=0.3,event,colorBlueBase] {} node[pos=0.5,event,colorBlueBase] {} node[pos=0.7,event,colorBlueBase] {} node[pos=0.903,border] {\phantom{xx}};
\draw[line width=1.5mm*\tikzscale,colorBlueBack] (s2.west) -- (-0.4,0);
\node[scale=0.1,above=of s2] (s1) {};
\draw[double distance=1.5mm*\tikzscale] (s1.south east) -- (-0.4,0.15) node[pos=0.5,event,colorBlueBase] {} node[pos=0.7,event,colorBlueBase] {} node[pos=0.907,border,rotate=7] (bd1) {\phantom{xx}};
\draw[line width=1.5mm*\tikzscale,colorBlueBack] (s1.south east) -- (-0.4,0.15);
\node[scale=0.1,below=of s2] (s3) {};
\draw[double distance=1.5mm*\tikzscale] (s3.north east) -- (-0.4,-0.15) node[pos=0.7,event,colorBlueBase] {} node[pos=0.907,border,rotate=-7] (bd2) {\phantom{xx}};
\draw[line width=1.5mm*\tikzscale,colorBlueBack] (s3.north east) -- (-0.4,-0.15);

\node[scale=0.1] (o) at (2, 0) {};
\draw[double distance=1.5mm*\tikzscale] (0.4,0) -- (o.west) node[pos=0.2,event,colorRedBase] {} node[pos=0.4,event,colorRedBase] {} node[pos=0.6,event,colorRedBase] {} node[pos=0.8,event,colorRedBase] {};
\draw[line width=1.5mm*\tikzscale,colorRedBack] (0.4,0) -- (o.west);

\node[task] (task) at (0,0) {\phantom{xxx}};
\end{scope}
\end{tikzpicture}
\else
hideTikz is set
\fi
  \caption{Borders are aligned}\label{fig:alignment:aligned}
  \end{subfigure}
  \begin{subfigure}{0.5\textwidth}
  \centering
  \ifx\hideTikz\undefined
\begin{tikzpicture}[
  scale=\tikzscale, every node/.style={scale=\tikzscale},
  thick, colorGrayBase,
  node distance=0.5cm*\tikzscale and 1cm*\tikzscale,
  task/.style={rectangle, rounded corners=1mm, very thick, draw,text=black,inner sep=2mm,colorRedBase,fill=colorRedBack},
  event/.style={scale=0.45, circle, thick, draw, node on layer=foreground},
  border/.style={scale=0.275, rectangle, very thick, draw, colorRedEmph, fill=colorRedBack, sloped, rotate=90, node on layer=foreground},
  inlineImage/.style={scale=0.35, yscale=-1, xshift=-12, yshift=-12}
]
\begin{scope}[on background layer]
\node[scale=0.1] (s2) at (-2, 0) {};
\draw[double distance=1.5mm*\tikzscale] (s2.west) -- (-0.4,0) node[pos=0.4,event,colorBlueBase] {} node[pos=0.6,event,colorBlueBase] {} node[pos=0.8,event,colorBlueBase] {};
\draw[line width=1.5mm*\tikzscale,colorBlueBack] (s2.west) -- (-0.4,0);
\node[scale=0.1,above=of s2] (s1) {};
\draw[double distance=1.5mm*\tikzscale] (s1.south east) -- (-0.4,0.15) node[pos=0.6,event,colorBlueBase] {} node[pos=0.8,event,colorBlueBase] {};
\draw[line width=1.5mm*\tikzscale,colorBlueBack] (s1.south east) -- (-0.4,0.15);
\node[scale=0.1,below=of s2] (s3) {};
\draw[double distance=1.5mm*\tikzscale] (s3.north east) -- (-0.4,-0.15) node[pos=0.8,event,colorBlueBase] {};
\draw[line width=1.5mm*\tikzscale,colorBlueBack] (s3.north east) -- (-0.4,-0.15);

\node[scale=0.1] (o) at (2, 0) {};
\draw[double distance=1.5mm*\tikzscale] (0.4,0) -- (o.west) node[pos=0.097,border] (bd) {\phantom{xx}} node[pos=0.4,event,colorRedBase] {} node[pos=0.6,event,colorRedBase] {} node[pos=0.8,event,colorRedBase] {};
\draw[line width=1.5mm*\tikzscale,colorBlueBack] (0.4,0) -- (bd);
\draw[line width=1.5mm*\tikzscale,colorRedBack] (bd) -- (o.west);

\node[task,node on layer=foreground] (task) at (0,0) {\phantom{xxx}};
\node[scale=0] (storage) at ($(task.north)+(0,0.3)$) {};
\node[inlineImage] at (storage) {\storageImage{colorRedBase}};
\draw[-{Triangle[angle=45:5]}, thick, colorRedBase] (task.center) -- ($(storage)+(0,-0.15)$);
\end{scope}
\end{tikzpicture}
\else
hideTikz is set
\fi
  \caption{Upload snapshot and propagate border}\label{fig:alignment:snapshot}
  \end{subfigure}%
  \begin{subfigure}{0.5\textwidth}
  \centering
  \ifx\hideTikz\undefined
\begin{tikzpicture}[
  scale=\tikzscale, every node/.style={scale=\tikzscale},
  thick, colorGrayBase,
  node distance=0.5cm*\tikzscale and 1cm*\tikzscale,
  task/.style={rectangle, rounded corners=1mm, very thick, draw,text=black,inner sep=2mm,colorBlueBase,fill=colorBlueBack},
  event/.style={scale=0.45, circle, thick, draw, node on layer=foreground},
  border/.style={scale=0.275, rectangle, very thick, draw, colorRedEmph, fill=colorRedBack, sloped, rotate=90, node on layer=foreground},
  inlineImage/.style={scale=0.35, yscale=-1, xshift=-12, yshift=-12}
]
\begin{scope}[on background layer]
\node[scale=0.1] (s2) at (-2, 0) {};
\draw[double distance=1.5mm*\tikzscale] (s2.west) -- (-0.4,0) node[pos=0.6,event,colorBlueBase] {} node[pos=0.8,event,colorBlueBase] {};
\draw[line width=1.5mm*\tikzscale,colorBlueBack] (s2.west) -- (-0.4,0);
\node[scale=0.1,above=of s2] (s1) {};
\draw[double distance=1.5mm*\tikzscale] (s1.south east) -- (-0.4,0.15) node[pos=0.6,event,colorBlueBase] {} node[pos=0.8,event,colorBlueBase] {};
\draw[line width=1.5mm*\tikzscale,colorBlueBack] (s1.south east) -- (-0.4,0.15);
\node[scale=0.1,below=of s2] (s3) {};
\draw[double distance=1.5mm*\tikzscale] (s3.north east) -- (-0.4,-0.15);
\draw[line width=1.5mm*\tikzscale,colorBlueBack] (s3.north east) -- (-0.4,-0.15);

\node[scale=0.1] (o) at (2, 0) {};
\draw[double distance=1.5mm*\tikzscale] (0.4,0) -- (o.west) node[pos=0.3,event,colorBlueBase] {} node[pos=0.5,event,colorBlueBase] {} node[pos=0.7,border] (bd) {\phantom{xx}};
\draw[line width=1.5mm*\tikzscale,colorBlueBack] (0.4,0) -- (bd);
\draw[line width=1.5mm*\tikzscale,colorRedBack] (bd) -- (o.west);

\node[task] (task) at (0,0) {\phantom{xxx}};
\node[scale=0] (storage) at ($(task.north)+(0,0.3)$) {};
\node[inlineImage] at (storage) {\storageImage{white}};
\end{scope}
\end{tikzpicture}
\else
hideTikz is set
\fi
  \caption{Continue processing}\label{fig:alignment:continue}
  \end{subfigure}
  \caption{Epoch border alignment protocol (figure adapted from~\cite{carbone2018thesis}).}\label{fig:alignment}
\end{figure}

\subparagraph*{Border Rule.}
Whereas the event rule consumes a single event from a stream, the border rule ($\mathsc{I-Border}$) consumes one border event  $\texttt{BD}$ from \emph{every incoming stream}:
\[\infer*[Right=I-Border]{~}{
    \texttt{TK}\langle\, f,\,\sseq{S_i}{i}{n},\,o \,\rangle ~\Vdash~ \langle\, a,\, \langle\, e,\, v \,\rangle \,\rangle \xrightarrow{\sseq{-\,S_i\,\langle\, e,\,\texttt{BD} \,\rangle}{i}{n} \concat \ssingle{+\,o\,\langle\, e,\,\texttt{BD} \,\rangle}} \langle\, a\map{e}{v},\, \langle\, e+1,\, v \,\rangle \,\rangle
}\]

This consumption is enabled for a task if the next event to be consumed on every one of its incoming streams is a border event.
In other words, the event rule consumes events up until all streams are aligned by the border events, at which point the border rule consumes the border events from all its incoming streams.
The rule is a local step which, in addition to consuming border events from all incoming streams and producing a border event on its outgoing stream, stores the current state $v$ for epoch $e$ to the snapshot storage $a$ (by setting the new snapshot archive to $a[e \mapsto v]$), as well as incrementing the current epoch number.

Epochs are a key concept of Asynchronous Barrier Snapshotting.
Each epoch is a sequence of data-bearing \emph{events}, ending with an \emph{epoch border}, and are used to define the boundaries of state snapshots.
After regular processing for which some streams are blocked by border events  (\Cref{fig:alignment:block}), the rule aligns the streams by the borders (\Cref{fig:alignment:aligned}), takes a copy of the current state of the processor storing it to the snapshot archive (\Cref{fig:alignment:snapshot}), and propagates the epoch border message downstream and increments the epoch number, ready to process events from the next epoch (\Cref{fig:alignment:continue}).
The effect of this is that epochs of events are separated by the border events throughout the whole processing graph.

\subparagraph*{Failure Rule.}
Failures are introduced nondeterministically by the $\mathsc{F-Fail}$ rule:
\[\infer*[Right=F-Fail]{~}{
    \texttt{TK}\langle\, f, S, o \,\rangle ~\Vdash~ \langle\, a,\, \sigma_\mathrm{V} \,\rangle \rightarrow \langle\, a,\, \texttt{fl} \,\rangle
}\]

The failure rule sets the task's state to failed $\langle a, \texttt{fl} \rangle$, thus losing the task's volatile state.
Once a task is failed, it is no longer able to apply the steps $\mathsc{I-Event}$ and $\mathsc{I-Border}$, and will remain idle until the $\mathsc{F-Recover}$ rule has been applied.

\subparagraph*{Failure Recovery Rule.}
The last rule, $\mathsc{F-Recover}$, is a global rule which recovers the state of all failed tasks:
\[\infer*[Right=F-Recover]{
    \langle\, a,\, \texttt{fl} \,\rangle \in \Sigma
}{
    \langle\, \Pi,\, \Sigma,\, N,\, M,\, M_0 \,\rangle \Rightarrow \text{lcs}(\langle\, \Pi,\, \Sigma,\, N,\, M,\, M_0 \,\rangle)
}\]

The rule may be triggered nondeterministically if there exists a task in a failed state, and will reset the state of the system to the latest common snapshot.
The full details of how the latest common snapshot ($\text{lcs}$) is computed is discussed further below, as it depends on additional definitions.

The latest common snapshot is constructed by:
(1) calculating the greatest common epoch for which a snapshot has been taken by all processors in the system;
(2) restoring the state of all processors to their local snapshots at the greatest common epoch;
and (3) restoring sequence numbers and messages to undo any messages that were produced or consumed for epochs greater than the greatest common epoch.
The greatest common epoch is calculated by finding the minimum (\emph{common}) of the maximum (\emph{greatest}) epoch numbers of the local snapshots of all the processors.

\begin{restatable}[Greatest Common Epoch Number]{definition}{gceNumber}\label{def:gce-number}
The greatest common epoch number of a configuration $c = \langle\, \Pi,\, \Sigma,\, N,\, M,\, D \,\rangle$ is:
\[\mathrm{gce}(c) = \mathrm{min}\setbuild{\mathrm{max}(\mathrm{dom}(a))}{\Sigma_p = \langle\, a,\, \sigma_\mathrm{V} \,\rangle}\]
\end{restatable}

The persistent \emph{output messages} of the system consist of all messages produced up to and including the greatest common epoch.
These messages can be identified by comparing their epoch number $e$ to the greatest common epoch number $e \leq \text{gce}(c)$.
The recovery purges any messages which are not part of this set, bar the initial input messages $M_0$, thereby making these output messages (identified by $\mathrm{out}$) persistent.

\begin{restatable}[Output Messages]{definition}{outputMessages}\label{def:output-messages}
For a configuration $c = \langle \Pi, \Sigma, N, M, D \rangle$, its output messages are:
\[
\mathrm{out}(c) = \setbuild{n\,s\,\langle\, e,\, d \,\rangle}{(n\,s\,\langle\, e,\, d \,\rangle) \in M \land e \leq \mathrm{gce}(c)}
\]
\end{restatable}

\begin{definition}[Messages on a Stream]
The subset $M\downarrow s$ of messages on a particular stream is defined as:
\[M\downarrow s = \setbuild{n'\,s'\,d'}{(n'\,s'\,d') \in M \land s' = s}\]
\end{definition}

The $lcs$ function computes the latest common snapshot of a configuration for use as a recovery point in the $\mathsc{F-Recover}$ rule.
Its computation makes use of the greatest common epoch number (gce), and the output messages (out).
The states $\Sigma'$ are restored by removing any stored snapshots with an epoch number larger than the gce, and the volatile states are restored to the states captured by the snapshot of the gce.
The messages are updated to only keep the stable output messages $\text{out}(c)$ and the messages which are yet to be consumed $M_\mathrm{in}$.
The sequence numbers $N'$ are updated accordingly, setting the sequence number of a processor $p$ for a stream $s$ to the number of messages that the processor has either produced or consumed on the stream: $|\mathrm{out}(c)\downarrow s|$. Its complete definition is given below.

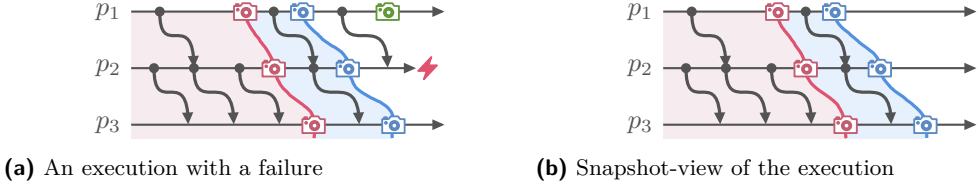
\begin{figure}[t]\centering
\begin{subfigure}{0.5\textwidth}
\centering
\ifx\hideTikz\undefined
\begin{tikzpicture}[
  scale=0.75, every node/.style={scale=1},
  thick, colorGrayBase,
  timeline/.style={-{Triangle[angle=45:5]}, thick},
  message/.style={-{Triangle[angle=45:5]}, very thick, controls={+(down:0.9) and +(up:1.1)}},
  messageNode/.style={scale=0.4, circle, fill, prefix after command={\pgfextra{\tikzset{every label/.style={label distance=-1.5mm, scale=0.75}}}}},
  inlineImage/.style={scale=0.35, yscale=-1, xshift=-12, yshift=-12},
  snapshotEdge/.style={very thick, handdrawn}
]
\coordinate (s12) at (3,2);
\coordinate (s22) at (3.8,1);
\coordinate (s32) at (4.6,0);

\begin{scope}
\clip (0,-0.25)--(0,2)--(s12) .. controls +(down:0.5) and +(up:0.5) .. (s22) .. controls +(down:0.5) and +(up:0.5) .. (s32)--($(s32)+(0,-0.25)$)--(0,-0.25);
\fill[colorBlueBack] (0,-0.25)--(0,2.1)--(8.2,2.1)--(8.2,-0.25)--(0,-0.25);
\end{scope}

\draw[colorBlueEmph, snapshotEdge](s12) .. controls +(down:0.5) and +(up:0.5) .. (s22) .. controls +(down:0.5) and +(up:0.5) .. (s32)--($(s32)+(0,-0.25)$);

\coordinate (s1) at (2,2);
\coordinate (s2) at (2.5,1);
\coordinate (s3) at (3.2,0);

\begin{scope}
\clip (0,-0.25)--(0,2)--(s1) .. controls +(down:0.5) and +(up:0.5) .. (s2) .. controls +(down:0.5) and +(up:0.5) .. (s3)--($(s3)+(0,-0.25)$)--(0,-0.25);
\fill[colorRedBack] (0,-0.25)--(0,2.1)--(5,2.1)--(5,-0.25)--(0,-0.25);
\end{scope}

\draw[colorRedEmph, snapshotEdge](s1) .. controls +(down:0.5) and +(up:0.5) .. (s2) .. controls +(down:0.5) and +(up:0.5) .. (s3)--($(s3)+(0,-0.25)$);

\node[label={left:{$p_1$}},scale=0] at (0,2) {};
\draw[timeline](0,2) -- (5.5,2);
\node[label={left:{$p_2$}},scale=0] at (0,1) {};
\draw[timeline](0,1) -- (5,1);
\node[label={left:{$p_3$}},scale=0] at (0,0) {};
\draw[timeline](0,0) -- (5.5,0);

\node[messageNode] (m11) at (0.5,2) {};
\draw[message](m11.center) to (1.1,1);
\node[messageNode] (m20) at (0.4,1) {};
\draw[message](m20.center) to (1,0);
\node[messageNode] (m21) at (1.1,1) {};
\draw[message](m21.center) to (1.8,0);
\node[messageNode] (m22p) at (1.9,1) {};
\draw[message](m22p.center) to (2.6,0);

\node[messageNode] (m13) at (2.5,2) {};
\draw[message](m13.center) to (3.2,1);
\node[messageNode] (m22) at (3.2,1) {};
\draw[message](m22.center) to (4,0);

\node[messageNode] (m3) at (3.7,2) {};
\draw[message](m3.center) to (4.5,1);

\node[inlineImage] at (s1) {\snapshotImage{colorRedBase}};
\node[inlineImage] at (s2) {\snapshotImage{colorRedBase}};
\node[inlineImage] at (s3) {\snapshotImage{colorRedBase}};
\node[inlineImage] at (s12) {\snapshotImage{colorBlueBase}};
\node[inlineImage] at (s22) {\snapshotImage{colorBlueBase}};
\node[inlineImage] at (s32) {\snapshotImage{colorBlueBase}};

\node[inlineImage] at (4.5,2) {\snapshotImage{colorGreenBase}};

\node[inlineImage] at (5.2,1) {\failureImage};
\end{tikzpicture}
\else
hideTikz is set
\fi
\caption{An execution with a failure}\label{fig:snapshot-view:original}
\end{subfigure}%
\begin{subfigure}{0.5\textwidth}
\centering
\ifx\hideTikz\undefined
\begin{tikzpicture}[
  scale=0.75, every node/.style={scale=1},
  thick, colorGrayBase,
  timeline/.style={-{Triangle[angle=45:5]}, thick},
  message/.style={-{Triangle[angle=45:5]}, very thick, controls={+(down:0.9) and +(up:1.1)}},
  messageNode/.style={scale=0.4, circle, fill, prefix after command={\pgfextra{\tikzset{every label/.style={label distance=-1.5mm, scale=0.75}}}}},
  inlineImage/.style={scale=0.35, yscale=-1, xshift=-12, yshift=-12},
  snapshotEdge/.style={very thick, handdrawn}
]
\coordinate (s12) at (3,2);
\coordinate (s22) at (3.8,1);
\coordinate (s32) at (4.6,0);

\begin{scope}
\clip (0,-0.25)--(0,2)--(s12) .. controls +(down:0.5) and +(up:0.5) .. (s22) .. controls +(down:0.5) and +(up:0.5) .. (s32)--($(s32)+(0,-0.25)$)--(0,-0.25);
\fill[colorBlueBack] (0,-0.25)--(0,2.1)--(8.2,2.1)--(8.2,-0.25)--(0,-0.25);
\end{scope}

\draw[colorBlueEmph, snapshotEdge](s12) .. controls +(down:0.5) and +(up:0.5) .. (s22) .. controls +(down:0.5) and +(up:0.5) .. (s32)--($(s32)+(0,-0.25)$);

\coordinate (s1) at (2,2);
\coordinate (s2) at (2.5,1);
\coordinate (s3) at (3.2,0);

\begin{scope}
\clip (0,-0.25)--(0,2)--(s1) .. controls +(down:0.5) and +(up:0.5) .. (s2) .. controls +(down:0.5) and +(up:0.5) .. (s3)--($(s3)+(0,-0.25)$)--(0,-0.25);
\fill[colorRedBack] (0,-0.25)--(0,2.1)--(5,2.1)--(5,-0.25)--(0,-0.25);
\end{scope}

\draw[colorRedEmph, snapshotEdge](s1) .. controls +(down:0.5) and +(up:0.5) .. (s2) .. controls +(down:0.5) and +(up:0.5) .. (s3)--($(s3)+(0,-0.25)$);

\node[label={left:{$p_1$}},scale=0] at (0,2) {};
\draw[timeline](0,2) -- (5.5,2);
\node[label={left:{$p_2$}},scale=0] at (0,1) {};
\draw[timeline](0,1) -- (5.5,1);
\node[label={left:{$p_3$}},scale=0] at (0,0) {};
\draw[timeline](0,0) -- (5.5,0);

\node[messageNode] (m11) at (0.5,2) {};
\draw[message](m11.center) to (1.1,1);
\node[messageNode] (m20) at (0.4,1) {};
\draw[message](m20.center) to (1,0);
\node[messageNode] (m21) at (1.1,1) {};
\draw[message](m21.center) to (1.8,0);
\node[messageNode] (m22p) at (1.9,1) {};
\draw[message](m22p.center) to (2.6,0);

\node[messageNode] (m13) at (2.5,2) {};
\draw[message](m13.center) to (3.2,1);
\node[messageNode] (m22) at (3.2,1) {};
\draw[message](m22.center) to (4,0);

\node[inlineImage] at (s1) {\snapshotImage{colorRedBase}};
\node[inlineImage] at (s2) {\snapshotImage{colorRedBase}};
\node[inlineImage] at (s3) {\snapshotImage{colorRedBase}};
\node[inlineImage] at (s12) {\snapshotImage{colorBlueBase}};
\node[inlineImage] at (s22) {\snapshotImage{colorBlueBase}};
\node[inlineImage] at (s32) {\snapshotImage{colorBlueBase}};
\end{tikzpicture}
\else
hideTikz is set
\fi
\caption{Snapshot-view of the execution}\label{fig:snapshot-view:snapshot}
\end{subfigure}%
\caption{Executions viewed through the latest common snapshot.}\label{fig:snapshot-view}
\end{figure}

\begin{restatable}[Latest Common Snapshot]{definition}{latestCommonSnapshot}\label{def:latest-common-snapshot}
The latest common snapshot of a configuration $c = \langle \Pi, \Sigma, N, M, M_0 \rangle$ is a configuration described by $\mathrm{lcs}(c)$:
\[
\mathrm{lcs}(c) = \langle~ \Pi,\, \Sigma',\, N',\, M_0\cup \mathrm{out}(c),\, M_0 ~\rangle, \text{where}\\
\]\[\begin{array}[t]{r@{~}l}
\Sigma' &= \mapbuild{p}{\langle\, \mathrm{A}(a),\, \langle\, \gce{c} + 1,\, a(\gce{c})\,\rangle\,\rangle}{\Sigma_p = \langle\, a,\, \sigma_\mathrm{V} \,\rangle} \\
\mathrm{A}(a) &= \mapbuild{e}{a(e)}{e\in\dom{a}\land e \leq \gce{c}}\\
N' &= \mapbuild{p}{\mapbuild{s}{|\mathrm{out}(c)\downarrow s|}{s \in \mathrm{dom}(N_p)}}{p \in \mathrm{dom}(N)}
\end{array}\]
\end{restatable}

Viewing computations through the lens of the latest common snapshot shows configurations which are caused by failure-free executions.
\Cref{fig:snapshot-view:original} shows an execution with a failed processor $p_2$ and an incompletely processed epoch (green).
In contrast, the latest common snapshot view of the same execution (\Cref{fig:snapshot-view:snapshot}) shows only the two completed epochs (red, blue), masking the failed epoch.
The snapshot is emulating an execution such that all the steps on epochs after the greatest common epoch are not taken, and all failed steps of incompletely processed epochs are ignored.
This reasoning is further elaborated for the proof of failure transparency in the next section, where we show that the implementation model is failure transparent when viewed through the lens of the output messages function.

\subsection{Assumptions}\label{sec:modeling-assumptions}
We make the following assumptions as a means to distill the essential mechanism of the failure recovery protocol.
We assume that the message channels are FIFO ordered, a common assumption for snapshotting protocols~\cite{chandy1985distributed}.
With regard to failures, we make common assumptions to asynchronous distributed systems~\cite{cachin2011introduction}.
Failures are assumed to be crash-recovery failures, in which a node looses its volatile state from crashing.
Further, we assume the existence of an eventually perfect failure detector, which is used for (eventually) triggering the recovery.
With regard to system components, we assume the following components which can be found in production dataflow systems.
The implicit coordinator instance is assumed to be failure free; in practice it is implemented using a distributed consensus protocol such as Paxos~\cite{DBLP:journals/tocs/Lamport98}.
The snapshot storage is assumed to be persistent and durable; a system such as HDFS~\cite{shvachko2010hadoop} would provide this.
Further, the input to the dataflow graph is assumed to be logged such that it can be replayed upon failure.
In practice, a durable log system such as Kafka~\cite{kreps2011kafka} would be used for this.
For our model, we make the following assumptions.
The recovery is assumed to be an atomic, synchronous system-wide step. In practice, it may be implemented as an asynchronous atomic step, which allows tasks to start processing before all have been recovered.
Further, the task's processing functions are assumed to be pure, \ie free from side effects.
A function $f$ may be re-executed multiple times due to failures; a common assumption in related work~\cite{burckhardt2021durable,kallas2023executing}.


\section{Failure Transparency}\label{sec:failure-transparency}

In this section, we define failure transparency such that it can be applied to systems described in small-step operational semantics with distinct failure-related rules.
We first provide a rationale behind failure transparency, followed by its formalization.

\subsection{Rationale}\label{sec:intuition}
The purpose of failure transparency is to provide an abstraction of a system which hides the internals of failures and failure recovery.
In particular, we would like to be able to show that the implementation model presented in the previous section is failure transparent.
In concrete terms, this entails showing that executions in the implementation model can be ``explained'' by failure-free executions, something which we explore in this section.

\begin{figure}
\centering
\newcommand*{\tikzscale}{0.79}
\ifx\hideTikz\undefined

\begin{tikzpicture}[thick, colorGrayBase, node distance=7mm, align=center, every node/.style={transform shape}]

\begin{scope}[scale=\tikzscale]

\newcommand*{\ftyshift}{-3.5mm}
\newcommand*{\fttextshift}{0.5mm}

\tikzstyle{config}    = [rectangle, draw, rounded corners=1mm, very thick,text=black,inner sep=1mm,fill=colorGrayBack, scale=0.875]
\tikzstyle{execstep}  = [-{Triangle[angle=45:4]}]
\tikzstyle{maps}      = [|->, shorten <=1pt, shorten >=1pt, color=colorBlueBase]
\tikzstyle{desc}      = [above, yshift=\fttextshift, font=\itshape\small]

\node[config]              (s0) {$av = 0$ \\ $a = a_0$};
\node[config, right=of s0] (s1) {$av = 1$ \\ $a = a_0$};
\node[config, right=of s1, xshift=-1mm] (s2) {$av = 1$ \\ $a = a_1$};
\node[config, right=of s2] (s3) {$av = 2$ \\ $a = a_1$};
\node[config, right=of s3, xshift=-1.25mm] (s4) {$av = \texttt{fl}$ \\ $a = a_1$};
\node[config, right=of s4, xshift=1.5mm] (s5) {$av = 1$ \\ $a = a_1$};
\node[config, right=of s5, xshift=3mm] (s6) {$av = 0$ \\ $a = a_1$};
\node[config, right=of s6] (s7) {$av = 3$ \\ $a = a_1$};
\node[config, right=of s7] (s8) {$av = 4$ \\ $a = a_1$};
\node[config, right=of s8, xshift=-1mm] (s9) {$av = 4$ \\ $a = a_2$};

\node[config, below=of s0, yshift=\ftyshift] (s0b) {$av = 0$ \\ $a = a_0$};
\node[config, below=of s1, yshift=\ftyshift] (s1b) {$av = 1$ \\ $a = a_0$};
\node[config, below=of s2, yshift=\ftyshift] (s2b) {$av = 1$ \\ $a = a_1$};
\node[config, below=of s6, yshift=\ftyshift] (s5b) {$av = 0$ \\ $a = a_1$};
\node[config, below=of s7, yshift=\ftyshift] (s7b) {$av = 3$ \\ $a = a_1$};
\node[config, below=of s8, yshift=\ftyshift] (s8b) {$av = 4$ \\ $a = a_1$};
\node[config, below=of s9, yshift=\ftyshift] (s9b) {$av = 4$ \\ $a = a_2$};


\draw[execstep] (s0) -- node[desc] {receive \\ $\smash{\integerEvent{1}}$} (s1);
\draw[execstep,colorBlueBase] (s1) -- node[desc,colorBlueBase] {process \\ $\smash{\mathtt{BD}}$s} (s2);
\draw[execstep] (s2) -- node[desc] {receive \\ $\smash{\integerEvent{3}}$} (s3);
\draw[execstep,colorRedBase] (s3) -- node[desc,colorRedBase] {fail} (s4);
\draw[execstep,colorGreenBase] (s4) -- node[desc,colorGreenBase] {recover to \\ $\smash{a_1(1)}$} (s5);
\draw[execstep] (s5) -- node[desc] {receive \\ $\smash{\resetEvent}$} (s6);
\draw[execstep] (s6) -- node[desc] {receive \\ $\smash{\integerEvent{3}}$} (s7);
\draw[execstep] (s7) -- node[desc] {receive \\ $\smash{\integerEvent{5}}$} (s8);
\draw[execstep,colorBlueBase] (s8) -- node[desc,colorBlueBase] {process \\ $\smash{\mathtt{BD}}$s} (s9);

\draw[execstep] (s0b) -- node[desc] {receive \\ $\smash{\integerEvent{1}}$} (s1b);
\draw[execstep,colorBlueBase] (s1b) -- node[desc,colorBlueBase] {process \\ $\smash{\mathtt{BD}}$s} (s2b);
\draw[execstep] (s2b) -- node[desc] {receive \\ $\smash{\resetEvent}$} (s5b);
\draw[execstep] (s5b) -- node[desc] {receive \\ $\smash{\integerEvent{3}}$} (s7b);
\draw[execstep] (s7b) -- node[desc] {receive \\ $\smash{\integerEvent{5}}$} (s8b);
\draw[execstep,colorBlueBase] (s8b) -- node[desc,colorBlueBase] {process \\ $\smash{\mathtt{BD}}$s} (s9b);

\draw[maps] (s0) -- (s0b);
\draw[maps] (s2) -- (s2b);
\draw[maps] (s9) -- (s9b);

\end{scope}
\end{tikzpicture}

\else
hideTikz is set
\fi
\caption{Execution of the incremental average task (\Cref{fig:incremental-average}). Top: execution with a failure and subsequent recovery.
Bottom: corresponding failure-free execution.
Snapshot archives: $a_0 = \map{0}{0}$, $a_1 = a_0 \map{1}{1}$, $a_2 = a_1 \map{2}{4}$.
}
\label{fig:ft-rationale}
\end{figure}
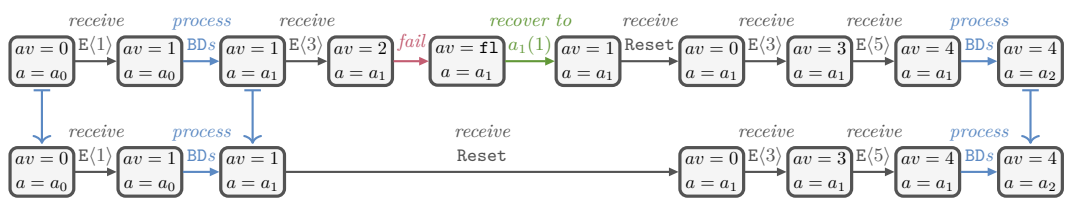

Consider the task of computing the incremental average from the previous example (\Cref{sec:stateful-dataflow}, \Cref{fig:incremental-average}).
The task consumes regular events \integerEvent{i}, reset events, and border events \texttt{BD}.
For this example, we consider a partial execution of the task in which it processes the events:
$[\integerEvent{1}, \texttt{BD}, \integerEvent{3}, \text{fail}, \text{recover}, \resetEvent, \integerEvent{3}, \integerEvent{5}, \texttt{BD}, \ldots]$.
The task's configurations consist of the task's current average value $av$, and its snapshot archive, $a$.
\Cref{fig:ft-rationale} shows at the top an execution of the task with a failure and subsequent failure recovery as the fourth and fifth events. After the recovery step, in its sixth configuration, the task's state is reset to its state for the snapshot $a_1(1)$, at which point it had the average value 1.

The question we ask is whether we can rely on the behavior of the task?
More specifically, can we use the average value $av = 2$ in the fourth configuration (after receiving the event \integerEvent{3})?
The problem is that the task will fail in its next step, and recover to a state in which the receiving of the event has been undone.
Moreover, the task continues its execution after recovery by processing the reset event first, and does never reach a state again in which its average value is $2$.
For this reason, we cannot blindly rely on the observed behaviors of the task as we may observe things which are later undone.
In more complex systems, failures may further result in duplications and reorderings of events, further complicating the reasoning about the system.

Dealing with these issues requires the observer of the system to reason about which events are effectful and which are to be discarded.
In some sense, the observer should be able to reason about the observed execution as if it was an ideal, failure-free execution, \ie an execution in which all events are effectful.
Put in another way, the solution is to find a corresponding failure-free execution, and reason about that one instead.
Intuitively, the observer should find some failure-free execution which ``explains'' the execution.
Considering the above example, a failure-free execution thereof would correspond to the bottom execution in \Cref{fig:ft-rationale}.
Note that there are no failure or recovery steps in the failure-free execution, yet its state progresses in a similar way to the original execution.

Even though the failure-free execution on an intuitive level correspond to the original execution, we would like to have a formal notion for this.
The idea is to lift the observed executions by means of ``observability functions'', to a level where failure-related events and states are hidden.
For example, for the executions above, we could define an observability function which takes the configuration of the task and keeps only the snapshot storage.
After this transformation, applying this function to every configuration in the executions, we will not be able to distinguish the two executions by observing the system at any point in time.
That is, common to both executions, we will first observe $a_0$, then $a_1$, and finally $a_2$.
On a technical level, for every configuration of the original execution, we can find a configuration in the failure-free execution which, after application of the observability functions, is equal to it (\eg the mapping from top to bottom configurations in \Cref{fig:ft-rationale}); this is what we mean by ``observable explainability''.
Thus, we can explain the original execution by the failure-free execution using the provided observability function.

The essence of our definition of failure transparency is derived from the notion of explaining the original executions by failure-free executions using observability functions.
Instead of reasoning about executions, we can reason about the observable output of executions at any given moment.
Using observability functions effectively hides the internals of the model and enables the user to focus on the output of the system.
That is, the user can reason about failure-free executions instead of faulty executions.

This informal introduction highlights three essential parts of failure transparency: the execution system, failures within the system, and the observability of the system.
The goal of the rest of this section is to define these terms and to provide a formal definition of \emph{failure transparency}.

\subsection{Executions}\label{sec:executions}
The execution system for the failure transparency analysis is modelled as a transition system for which the transition relation is provided as a set of inference rules.
In particular, we provide a formal definition for executions as a means to discuss the execution of systems.
With this notion, distributed programs can be formally modelled in small-step operational semantics, and consequently formally verified.
Although it may seem unintuitive to model distributed systems as transition systems for which the transition relation is defined over the global state, this is in fact commonly done in other formal frameworks such as TLA\textsuperscript{+}~\cite{lamport2002specifying}.

\begin{definition}[Execution Step]
A statement $c \Rightarrow c'$ is called an execution step from $c$ to $c'$.
We denote the derivability of an execution step in the set of rules $R$ by $R \vdash c \Rightarrow c'$.
\end{definition}

We reason about systems in terms of their executions.
An execution is a sequence of configurations $C$, connected by execution steps derivable in a set of rules $R$, starting from some initial configuration $C_0$.

\begin{definition}[Executions]
A sequence of configurations $\seq{C_i}{i}{n}$ is called an \emph{execution} in a set of rules $R$, if $\forall i < n .~  R \vdash C_{i-1} \Rightarrow C_{i}$.
The set of all possible executions starting from $C_0$ in $R$ is denoted as $\mathbb{E}_{C_0}^R$.
\end{definition}

The set of rules $R$ of an execution specifies its reducibility relation by providing $c \Rightarrow c'$ as a conclusion of some of its rules.
This approach is commonly known as \emph{small-step operational semantics}.
In our representation, the set of rules is explicit, whereas commonly it is implicit.
This is due to our need to explicitly distinguish between separate execution systems.
This allows us, for example, to separate an execution system into two parts: one with failures $R$ \suchthat the failure-related rules are a subset thereof $F \subseteq R$, and one without failures $(R \setminus F)$.

\subsection{Observational Explainability}\label{sec:observational-explainability}
The observability function represents the observer's view of the system.
It notably differs from the plain configurations in the following two ways: the observer may not observe all internal details of configurations, \ie some parts of the configuration are \emph{hidden} from the observer (\eg hiding commit messages~\cite{burckhardt2021durable}); and the observer may observe some derived views of the configuration.

\begin{definition}[Observability Function]
An \emph{observability function} $O$ of an execution system is a function which maps configurations to their observable outputs.
It is required to be monotonic with respect to execution steps possible in the set of rules $R$ for some partial order $\sqsubseteq_O$, that is: $\forall c, c'.~ (R \vdash c \Rightarrow c') \implies O(c) \sqsubseteq_O O(c')$.
\end{definition}

We say that an implementation's execution is observably explained by a specification's execution, if the observer cannot distinguish the two executions.
This is the case when, for every configuration in the implementation's execution, there is a corresponding configuration in the specification's execution, such that their observed values are equal after application of the respective observability functions.

\begin{definition}[Observational Explanation]\label{def:observational-explaination}
A sequence of configurations $C$ of length $n$ is explained by a sequence of configurations $C'$ of length $n'$ with respect to observability functions $O$ and $O'$, denoted as $C ~\,^{O}\negmedspace\rightleftharpoons\!^{O'}\, C'$, if:
\[
\forall m < n .~ \exists m' < n' .~ O(C_{m}) = O'(C'_{m'})
\]
\end{definition}

An implementation's system, in turn, is observably explainable by the specification's system, if for each execution of the implementation there exists an explaining execution in the specification.
We call this property \emph{observational explainability}.

\begin{definition}[Observational Explainability]\label{def:observational-explainability}
The set of rules $R$ is \emph{observationally explainable} by $R'$ with respect to their observability functions $O$ and $O'$ and the translation relation $T$, denoted as $R ~\,^{O}\negmedspace\xrightleftharpoons{T}\!^{O'}\, R'$, if:
\[
\forall\, c' \in \dom{T}.~ \forall c.~ c' T c \implies
\forall C \in \mathbb{E}_{c}^{R}.~
\exists\, C' \in \mathbb{E}_{c'}^{R'}.~
C ~\,^{O}\negmedspace\rightleftharpoons\!^{O'}\, C'
\]
\end{definition}

\subparagraph*{Properties of Observational Explainability.}
Observability functions are required to be monotonic, since observations should be regarded as stable.
That is, once a value has been observed, then it should remain observable in the future.
The system should not be able to undo something that has been observed, otherwise the observer would not be able to rely on the output.
The reason for this is twofold.
First, an observer may observe the system multiple times, and newer observations should provide more up-to-date views.
Second, the sequence of observations should correspond to a valid explanation with respect to the higher-level specification, this is explored next.

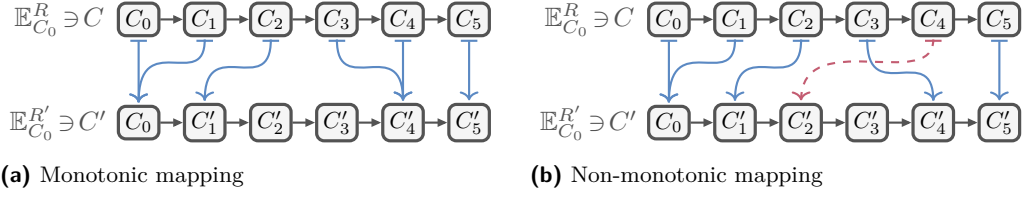
\begin{figure}[t]\centering
\newcommand*{\tikzscale}{1.0}
\begin{subfigure}[t]{0.49\textwidth}
  \centering
  \ifx\hideTikz\undefined

\begin{tikzpicture}[thick, colorGrayBase, node distance=3mm, align=center, every node/.style={transform shape}]

\begin{scope}[scale=\tikzscale]

\newcommand*{\ftyshift}{-3.5mm}
\newcommand*{\fttextshift}{0.5mm}

\tikzstyle{config}    = [rectangle, draw, rounded corners=1mm, very thick,text=black,inner sep=1mm,fill=colorGrayBack, scale=0.875]
\tikzstyle{execstep}  = [-{Triangle[angle=45:4]}]
\tikzstyle{maps}      = [|->, shorten <=1pt, shorten >=1pt, color=colorBlueBase]
\tikzstyle{desc}      = [above, yshift=\fttextshift, font=\itshape\small]

\node (ecr) {$\mathbb{E}^{R}_{C_0}\!\ni\! C$};
\node[config, right=of ecr, xshift=-3mm] (c0) {$C_0$};
\node[config, right=of c0] (c1) {$C_1$};
\node[config, right=of c1] (c2) {$C_2$};
\node[config, right=of c2] (c3) {$C_3$};
\node[config, right=of c3] (c4) {$C_4$};
\node[config, right=of c4] (c5) {$C_5$};

\node[config, below=0.9cm of c0] (cp0) {$C_0$};
\node[config, right=of cp0] (cp1) {$C'_1$};
\node[config, right=of cp1] (cp2) {$C'_2$};
\node[config, right=of cp2] (cp3) {$C'_3$};
\node[config, right=of cp3] (cp4) {$C'_4$};
\node[config, right=of cp4] (cp5) {$C'_5$};
\node (ecrp) [left= of cp0,xshift=+3mm] {$\mathbb{E}^{R'}_{C_0}\!\ni\! C'$};

\draw[execstep] (c0) -- (c1);
\draw[execstep] (c1) -- (c2);
\draw[execstep] (c2) -- (c3);
\draw[execstep] (c3) -- (c4);
\draw[execstep] (c4) -- (c5);

\draw[execstep] (cp0) -- (cp1);
\draw[execstep] (cp1) -- (cp2);
\draw[execstep] (cp2) -- (cp3);
\draw[execstep] (cp3) -- (cp4);
\draw[execstep] (cp4) -- (cp5);

\draw[maps] (c0) -- (cp0);
\draw[maps] (c4) -- (cp4);
\draw[maps] (c5) -- (cp5);
\draw[maps] (c1.south) -- ($(c1.south)+(0,-1pt)$) .. controls +(down:0.7) and +(up:1.0) .. ($(cp0.north)+(0,+1pt)$) -- (cp0.north);
\draw[maps] (c2.south) -- ($(c2.south)+(0,-1pt)$) .. controls +(down:0.8) and +(up:0.9) .. ($(cp1.north)+(0,+1pt)$) -- (cp1.north);
\draw[maps] (c3.south) -- ($(c3.south)+(0,-1pt)$) .. controls +(down:0.8) and +(up:0.9) .. ($(cp4.north)+(0,+1pt)$) -- (cp4.north);

\end{scope}
\end{tikzpicture}

\else
hideTikz is set
\fi
  \caption{Monotonic mapping}\label{fig:history:monotonic}
\end{subfigure}
\begin{subfigure}[t]{0.49\textwidth}
  \centering
  \ifx\hideTikz\undefined

\begin{tikzpicture}[thick, colorGrayBase, node distance=3mm, align=center, every node/.style={transform shape}]

\begin{scope}[scale=\tikzscale]

\newcommand*{\ftyshift}{-3.5mm}
\newcommand*{\fttextshift}{0.5mm}

\tikzstyle{config}    = [rectangle, draw, rounded corners=1mm, very thick,text=black,inner sep=1mm,fill=colorGrayBack, scale=0.875]
\tikzstyle{execstep}  = [-{Triangle[angle=45:4]}]
\tikzstyle{maps}      = [|->, shorten <=1pt, shorten >=1pt, color=colorBlueBase]
\tikzstyle{nonmaps}   = [|->, shorten <=1pt, shorten >=1pt, color=colorRedBase, dashed]
\tikzstyle{desc}      = [above, yshift=\fttextshift, font=\itshape\small]

\node (ecr) {$\mathbb{E}^{R}_{C_0}\!\ni\! C$};
\node[config, right=of ecr, xshift=-3mm] (c0) {$C_0$};
\node[config, right=of c0] (c1) {$C_1$};
\node[config, right=of c1] (c2) {$C_2$};
\node[config, right=of c2] (c3) {$C_3$};
\node[config, right=of c3] (c4) {$C_4$};
\node[config, right=of c4] (c5) {$C_5$};

\node[config, below=0.9cm of c0] (cp0) {$C_0$};
\node[config, right=of cp0] (cp1) {$C'_1$};
\node[config, right=of cp1] (cp2) {$C'_2$};
\node[config, right=of cp2] (cp3) {$C'_3$};
\node[config, right=of cp3] (cp4) {$C'_4$};
\node[config, right=of cp4] (cp5) {$C'_5$};
\node (ecrp) [left= of cp0,xshift=+3mm] {$\mathbb{E}^{R'}_{C_0}\!\ni\! C'$};

\draw[execstep] (c0) -- (c1);
\draw[execstep] (c1) -- (c2);
\draw[execstep] (c2) -- (c3);
\draw[execstep] (c3) -- (c4);
\draw[execstep] (c4) -- (c5);

\draw[execstep] (cp0) -- (cp1);
\draw[execstep] (cp1) -- (cp2);
\draw[execstep] (cp2) -- (cp3);
\draw[execstep] (cp3) -- (cp4);
\draw[execstep] (cp4) -- (cp5);

\draw[maps] (c0) -- (cp0);
\draw[maps] (c5) -- (cp5);
\draw[maps] (c1.south) -- ($(c1.south)+(0,-1pt)$) .. controls +(down:0.7) and +(up:1.0) .. ($(cp0.north)+(0,+1pt)$) -- (cp0.north);
\draw[maps] (c2.south) -- ($(c2.south)+(0,-1pt)$) .. controls +(down:0.8) and +(up:0.9) .. ($(cp1.north)+(0,+1pt)$) -- (cp1.north);
\draw[maps] (c3.south) -- ($(c3.south)+(0,-1pt)$) .. controls +(down:0.9) and +(up:0.8) .. ($(cp4.north)+(0,+1pt)$) -- (cp4.north);
\draw[nonmaps] (c4.south) -- ($(c4.south)+(0,-1pt)$) .. controls +(down:0.7) and +(up:1.0) .. ($(cp2.north)+(0,+1pt)$) -- (cp2.north);

\end{scope}
\end{tikzpicture}

\else
hideTikz is set
\fi
  \caption{Non-monotonic mapping}\label{fig:history:nonmonotonic}
\end{subfigure}
\caption{Monotonic and non-monotonic mapping of configurations.}\label{fig:history}
\end{figure}

In the general case, it is desirable to have a monotonic mapping of configurations between the abstract-level and implementation-level executions.
\Cref{fig:history:monotonic} shows a monotonic mapping of configurations between an implementation (top) and a specification (bottom).
What makes the mapping monotonic is that each subsequently mapped configuration of the implementation is mapped to a configuration with a monotonically growing index.
\Cref{fig:history:nonmonotonic}, on the other hand, shows a non-monotonic mapping, as indicated by the red dashed line.
Non-monotonic mappings, however, are not considered valid explanations.
For example, if the specification consists of the sequence $a$ followed by $b$, then an implementation which produces $b$ followed by $a$ is not considered a valid implementation thereof.
Thus, we should not use non-monotonic mappings for the explainability of executions.
We capture this notion in the definition of monotonic observational explanation.

\begin{definition}[Monotonic Observational Explanation]\label{def:monotonic-observational-explaination}
An observational explanation is \emph{monotonic} if it is a monotonic mapping of configurations.
That is, $[C_i]^n_i$ is \emph{monotonically} explained by $[C'_j]^{n'}_j$ w.r.t.\ $O$ and $O'$ if:
\[
  \exists \seq{h_k}{k}{n}.~ (\forall k < n.~ \forall k' \leq k.~ h_{k'} \leq h_k) ~ \land ~
  (\forall m < n.~ \exists m'  = h_{m} < n'.~ O(C_{m}) = O'(C'_{m'}))
\]
\end{definition}

The following lemma explicitly shows that our definition of \emph{observational explainability} is equivalent to the definition of \emph{monotonic observational explainability}.
That is, our definition does not have the problem with non-monotonic mappings of configurations since the observability functions are required to be monotonic.
For this reason, we do not distinguish between the two definitions in the following sections.

\begin{restatable}{lemma}{monotonicity}\label{lem:monotonicity}
If $R$ is observationally explainable by $R'$ w.r.t.\ $O$, $O'$, $T$, then it is also \emph{monotonically} observationally explainable:
\begin{align*}
&\forall\, c' \in \dom{T}.~ \forall c.~ c' T c \implies
\forall C \in \mathbb{E}_{c}^{R}.~
\exists\, C' \in \mathbb{E}_{c'}^{R'}.~ \\
&\quad C \text{ is monotonically explained by } C' \text{ w.r.t. } O \text{ and } O'
\end{align*}
\end{restatable}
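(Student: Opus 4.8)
The plan is to reuse the very execution $C'$ handed to us by observational explainability and to build the required index function $h$ explicitly, exploiting that both $O$ and $O'$ increase along executions. First I would fix $c' \in \dom{T}$ and $c$ with $c' T c$, and take an arbitrary $C \in \mathbb{E}_{c}^{R}$ of length $n$. The hypothesis supplies some $C' \in \mathbb{E}_{c'}^{R'}$ of length $n'$ with $C ~\,^{O}\negmedspace\rightleftharpoons\!^{O'}\, C'$; by \Cref{def:observational-explaination} this means that for every $m < n$ the set $\{\, m' < n' \mid O(C_m) = O'(C'_{m'}) \,\}$ is non-empty. I would then set $h_m = \min\{\, m' < n' \mid O(C_m) = O'(C'_{m'}) \,\}$, the least matching index. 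This is well-defined for each $m < n$, and by construction $h_m < n'$ and $O(C_m) = O'(C'_{h_m})$, which is exactly the second conjunct of \Cref{def:monotonic-observational-explaination}. It then remains to verify the first conjunct, namely that $h$ is non-decreasing.

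For that conjunct I would chain the monotonicity of the observability functions. From the defining property of executions each consecutive pair satisfies $R \vdash C_{i-1} \Rightarrow C_i$, so by the monotonicity requirement on the observability function $O$ and transitivity of the order, $O(C_{m_1}) \sqsubseteq O(C_{m_2})$ whenever $m_1 \le m_2$; the same reasoning applied to $C'$ and $O'$ gives $O'(C'_{j_1}) \sqsubseteq O'(C'_{j_2})$ whenever $j_1 \le j_2$. Now fix $m_1 \le m_2 < n$. If $O(C_{m_1}) = O(C_{m_2})$, the two matching sets are identical and hence $h_{m_1} = h_{m_2}$. Otherwise $O(C_{m_1}) \sqsubset O(C_{m_2})$, so $O'(C'_{h_{m_1}}) \sqsubset O'(C'_{h_{m_2}})$; were $h_{m_2} < h_{m_1}$, monotonicity of $O'$ would force $O'(C'_{h_{m_2}}) \sqsubseteq O'(C'_{h_{m_1}})$, contradicting antisymmetry of the order. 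Thus $h_{m_1} \le h_{m_2}$ in every case, which is precisely $\forall k < n.\, \forall k' \le k.\, h_{k'} \le h_k$. Together with the previous paragraph this witnesses that $C$ is monotonically explained by $C'$, closing the proof; the degenerate case $n = 0$ is handled by the empty $h$.

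The step I expect to be the crux is the cross-system inference that $O(C_{m_1}) \sqsubset O(C_{m_2})$ yields $O'(C'_{h_{m_1}}) \sqsubset O'(C'_{h_{m_2}})$, which tacitly treats the order $\sqsubseteq_O$ witnessing monotonicity of $O$ and the order $\sqsubseteq_{O'}$ witnessing monotonicity of $O'$ as one and the same. This identification is genuinely necessary: if the two orders could differ, a two-step implementation with $O(C_0) \sqsubseteq_O O(C_1)$ matched against a specification carrying those values in the opposite $\sqsubseteq_{O'}$-order would still be observationally explained while forcing $h_0 > h_1$, so the monotonic version would fail. I would therefore carry out the argument under the intended reading that $O$ and $O'$ are monotone with respect to a single common partial order $\sqsubseteq$ on the shared space of observable outputs, invoking its antisymmetry exactly once, at the point ruling out $h_{m_2} < h_{m_1}$. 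Everything else — non-emptiness and hence well-definedness of the minima, the equality witnesses, and the transitive chaining of consecutive monotonicity steps — is routine.
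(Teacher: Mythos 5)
Your proof is correct and follows essentially the same route as the paper's: the paper also builds the index sequence $h$ from matching indices subject to the constraint that equal observations receive equal indices (your $\min$ construction realizes this automatically), and then derives monotonicity of $h$ by the same case split — equal observations give equal indices, strictly increasing observations force strictly increasing indices via monotonicity of $O'$. The identification of $\sqsubseteq_O$ and $\sqsubseteq_{O'}$ that you flag as the crux is likewise tacitly assumed in the paper's step transferring $O(C_{k'}) \sqsubseteq_O O(C_k)$ to $O'(C'_{h_{k'}}) \sqsubseteq_{O'} O'(C'_{h_k})$ "by the equality of observations", so your explicit treatment of it is, if anything, a small improvement in rigor rather than a divergence.
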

\begin{proof}
\appRef{prf:monotonicity}.
\end{proof}

To further aid the use of these definitions within proofs, we also show that the definition of observational explainability is transitive, as well as a compositionality lemma on the observability functions.
The parametrization of the observable explainability enables reasoning about models which differ in their initial states, and for which we want to apply different observability functions at the different levels.
That is, it can be used for reasoning about sets of rules which differ in their initial states, and for which we want to apply different observability functions at the different levels.

\begin{restatable}[Transitivity]{lemma}{transitivity}\label{lem:transitivity}
$
\rotor{R}{O}{T}{O'}{R'}
\land
\rotor{R'}{O'}{T'}{O''}{R''}
~{~}{~}\implies~{~}{~}
\rotor{R}{O}{T \circ T'}{O''}{R''}
$
\end{restatable}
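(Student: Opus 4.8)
The plan is to unfold the definition of observational explainability (\Cref{def:observational-explainability}) on both sides and chain the two hypotheses through a shared intermediate execution living at the $R'$ level. First I would isolate an auxiliary fact: transitivity of the per-execution relation, the \emph{observational explanation} $\rightleftharpoons$ of \Cref{def:observational-explaination}. Concretely, if $C\,{}^{O}\!\rightleftharpoons^{O'} C'$ and $C'\,{}^{O'}\!\rightleftharpoons^{O''} C''$, then $C\,{}^{O}\!\rightleftharpoons^{O''} C''$. Its proof is a direct chaining of witnesses: given any index $m < \size{C}$, the first relation yields some $m' < \size{C'}$ with $O(C_m) = O'(C'_{m'})$, and the second, instantiated at $m'$, yields some $m'' < \size{C''}$ with $O'(C'_{m'}) = O''(C''_{m''})$; transitivity of equality gives $O(C_m) = O''(C''_{m''})$. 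This step needs neither monotonicity of the observability functions nor any property of the rule sets, so it is purely combinatorial.

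Next I would expand the goal $\rotor{R}{O}{T \circ T'}{O''}{R''}$. Fix a configuration $c''$ and a configuration $c$ with $c''\,(T \circ T')\,c$. By the definition of relational composition (forced by the types, since $T'$ relates $R''$-configurations to $R'$-configurations and $T$ relates $R'$-configurations to $R$-configurations) there is an intermediate configuration $c'$ with $c''\,T'\,c'$ and $c'\,T\,c$; in particular $c' \in \dom{T}$ and $c'' \in \dom{T'}$, which is precisely what the two hypotheses demand. Then I would take an arbitrary execution $C \in \executions{c}{R}$.

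I would instantiate the first hypothesis $\rotor{R}{O}{T}{O'}{R'}$ at $c'\,T\,c$ and $C$ to obtain $C' \in \executions{c'}{R'}$ with $C\,{}^{O}\!\rightleftharpoons^{O'} C'$. Feeding this $C'$ into the second hypothesis $\rotor{R'}{O'}{T'}{O''}{R''}$ at $c''\,T'\,c'$ yields $C'' \in \executions{c''}{R''}$ with $C'\,{}^{O'}\!\rightleftharpoons^{O''} C''$. The transitivity fact from the first paragraph then gives $C\,{}^{O}\!\rightleftharpoons^{O''} C''$, so $C''$ is the required explaining execution, completing the argument.

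The only genuine care-point, and hence the main obstacle, is bookkeeping around the direction of the composition $T \circ T'$: one must arrange it to compose $T'$ with $T$ so that the shared middle configuration $c'$ sits at the $R'$ level. This alignment is exactly what makes the \emph{same} observability function $O'$ serve simultaneously as the target of the first explanation and the source of the second, which is what licenses the transitivity chaining. Once the levels are matched so that $O'$ coincides on both sides, the rest is a routine transport of existential witnesses, with no estimation or nontrivial construction involved.
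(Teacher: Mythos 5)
Your proposal is correct and follows essentially the same route as the paper's proof: decompose $c''\,(T\circ T')\,c$ into $c''\,T'\,c'$ and $c'\,T\,c$, obtain the explaining executions $C'$ and then $C''$ from the two hypotheses, and chain the equalities $O(C_{m}) = O'(C'_{m'}) = O''(C''_{m''})$. Packaging the witness-chaining as a standalone transitivity fact for the per-execution relation $\rightleftharpoons$ is only a cosmetic difference from the paper's inline argument (and, if anything, your explicit \emph{Pick} of $C'$ from the first hypothesis is cleaner than the paper's quantifier bookkeeping at that step).
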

\begin{proof}
\appRef{prf:transitivity}.
\end{proof}

\begin{restatable}[Composition]{lemma}{composition}\label{lem:composition}
$
  \forall O''.~
  \rotor{R}{O}{T}{O'}{R'}
  {~}{~}\implies{~}{~}
  \rotor{R}{O'' \circ O}{T}{O'' \circ O'}{R'}
$
\end{restatable}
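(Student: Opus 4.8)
The plan is to unfold \Cref{def:observational-explainability} on both the hypothesis and the goal, observe that the \emph{same} witness execution produced by the hypothesis already witnesses the conclusion, and reduce the remaining per-index obligation to applying $O''$ to an equality of observations that the hypothesis has already delivered. The argument touches the execution dynamics only through the hypothesis, so no reasoning about the rules $R$, $R'$ themselves is required.

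First I would fix an arbitrary $O''$ and assume $\rotor{R}{O}{T}{O'}{R'}$. To establish $\rotor{R}{O'' \circ O}{T}{O'' \circ O'}{R'}$, I would introduce arbitrary $c' \in \dom{T}$, arbitrary $c$ with $c' T c$, and arbitrary $C \in \executions{c}{R}$. Instantiating the hypothesis with exactly these yields an execution $C' \in \executions{c'}{R'}$ together with an observational explanation $C ~\,^{O}\negmedspace\rightleftharpoons\!^{O'}\, C'$, that is, $\forall m < \size{C}.~ \exists m' < \size{C'}.~ O(C_m) = O'(C'_{m'})$. I would then reuse this very same $C'$ as the witness for the goal. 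To discharge $C ~\,^{O'' \circ O}\negmedspace\rightleftharpoons\!^{O'' \circ O'}\, C'$, I would fix $m < \size{C}$, obtain the matching $m' < \size{C'}$ with $O(C_m) = O'(C'_{m'})$ from the explanation above, and apply $O''$ to both sides: since $O''$ is a function, $(O'' \circ O)(C_m) = O''(O(C_m)) = O''(O'(C'_{m'})) = (O'' \circ O')(C'_{m'})$, which is precisely the observation equality required at index $m$. As $m$ was arbitrary, the explanation holds and hence so does the conclusion.

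The explanation property of \Cref{def:observational-explaination} transfers verbatim because it depends solely on equalities between observations, and equalities are preserved under application of any function. Consequently the genuinely substantive content of the lemma is trivial, and the only point warranting care is \emph{well-formedness}: that $O'' \circ O$ and $O'' \circ O'$ still qualify as observability functions, i.e.\ remain monotonic. Since $O$ and $O'$ share the observation codomain (they are compared by equality), this reduces to requiring $O''$ to be order-preserving on that codomain, whereupon monotonicity of each composite follows from the standard fact that a composition of monotone maps is monotone, chaining $\sqsubseteq_O$ (resp.\ $\sqsubseteq_{O'}$) through $O''$ to the order on its target. I expect this bookkeeping to be the main, though minor, obstacle; under the reading where the quantified $O''$ ranges over plain functions with monotonicity tracked as a side condition, even this step is immediate.
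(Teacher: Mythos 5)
Your proposal is correct and follows essentially the same route as the paper's proof: unfold the definition, reuse the witness execution $C'$ from the hypothesis, and observe that the per-index equality $O(C_m) = O'(C'_{m'})$ is preserved under application of $O''$. Your side remark about monotonicity of the composites is a reasonable extra care point, but the paper, like your main argument, treats the explainability formula as depending only on observation equalities and does not need it.
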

\begin{proof}
\appRef{prf:composition}.
\end{proof}

\subsection{Defining Failure Transparency}\label{sec:defining-failure-transparency}
The general goal of failure transparency is to provide an abstraction of a system which masks failures from the users.
We express this notion using observational explainability between the implementation and its failure-free part.
That is, the implementation should be observationally explainable by the implementation without failures.
By explicitly separating the set of failure-related rules $F$, it is easy to define the two systems: namely, the implementation system with all rules, \ie $R$; and another system with all rules except the failure-related rules, \ie $R \setminus F$.
To fully instantiate the observational equivalence, we further use the same observability function $O$ on both the low and high levels, and as a translation relation we use the identity relation on the set of initial configurations.

\begin{definition}[Failure Transparency]\label{def:failure-transparency}
A set of rules $R$ is \emph{failure-transparent} with respect to failure rules $F \subseteq R$ for a monotonic observability function $O$ and a set of initial configurations $K$, this is denoted as $R \bbslash^O_K F$, iff:
\[{R} ~\,^{O}\negmedspace\xrightleftharpoons{\ssetbuild{(c,\, c)}{c\in K}}\!^{O}\, {(R\setminus F)}\]
\end{definition}

\section{Failure Transparency of Stateful Dataflow}\label{sec:failure-transparency-of-the-implementation}
In this section, we show that the presented implementation model (\Cref{sec:implementation-model}) is failure transparent (\Cref{def:failure-transparency}) for the observability function $\mathrm{out}$ (\Cref{def:output-messages}).
In order to prove this, instead of reasoning about executions directly, we reason about the traces of steps which are performed to obtain these executions.
This simplifies the proof, enabling us to reorder and remove specific steps in and from a trace; in contrast, doing the same with a configuration from an execution affects all following configurations.
In this section, we first define traces and a causal order relation on traces, and then prove the failure transparency of the implementation model by manipulating traces.
Finally, we complete our analysis of the model by formulating and proving its liveness, showing that the implementation model eventually produces outputs for all epochs in its input.

\subsection{Traces and Causality}\label{sec:steps-traces-and-causality}

A trace is a sequence of steps, for which each step is a compact representation of the derivation of a transition from one configuration to another.

\begin{restatable}[Trace]{definition}{traceDefinition}\label{def:trace}
A trace $Z$ is a sequence of trace steps.
A trace step $z$ is one of:
$\langle \mathsc{I-Event}, p, N_p, X \rangle$;
$\langle \mathsc{I-Border}, p, N_p, X \rangle$;
$\langle \mathsc{F-Fail}, p \rangle$;
$\langle \mathsc{F-Recover}\rangle$.
Here $\mathsc{I-Event}$, $\mathsc{I-Border}$, $\mathsc{F-Fail}$, and $\mathsc{F-Recover}$ play the role of the discriminant, where the trace step is a tagged union.
\end{restatable}

For example, if in the derivation tree of an execution step from the $i$th to the $i + 1$th configuration, \ie of $R \vdash C_{i} \Rightarrow C_{i+1}$, $\mathsc{F-Recover}$ was the root rule, then this execution step corresponds to the step $\langle \mathsc{F-Recover} \rangle$ in the trace.
To link traces with executions, we use the following definition of trace application.

\begin{definition}[Trace Application]\label{def:trace-from-execution}
A trace $Z$ of length $n$ applied to a configuration $c$ results in a sequence of configurations $C$ of length $n+1$, \ie $Z(c) = C$, if, for all steps $Z_i$, the represented derivation of an execution step can be applied to the $i$th configuration producing the $i+1$th configuration.
\end{definition}

Traces can be generated from executions; however, not every trace corresponds to an execution.
This may be the case if a trace has been constructed incorrectly, or reordered in some way.
For this reason, we define valid traces, which are traces that correspond to executions.

\begin{restatable}[Valid Trace]{definition}{validTrace}\label{def:valid-trace}
A trace $Z$ is valid from configuration $c$ if it is applicable to it, \ie if there exists an execution $C\in\executions{c}{I}$ such that $Z(c) = C$.
\end{restatable}

As the proof reasons about the reordering of steps in a trace, it is important to formulate which reorderings of steps preserve the validity of the trace.
To handle this, we define a causal order relation on trace steps similar to the happens-before relation~\cite{DBLP:journals/cacm/Lamport78}, and show how it can be used to reason about traces.

\begin{definition}[Causal Order]\label{def:causal-order}
\techReport{(See \Cref{def:causal-order-full} for the formal definition)}
\mainReport{(See technical report~\cite{veresov2024technicalreport} for the formal definition)}
A step $Z_i$ happens before $Z_j$ with $i < j$ if:
\begin{enumerate}
  \item One of them is an $\mathsc{F-Recover}$ step (global recovery)
  \item They both occur on the same processor (intraprocessor order)
  \item If $Z_i$ produced a message which is consumed by $Z_j$ (interprocessor order)
  \item If there exists some step $Z_k$ such that $Z_i$ happens before $Z_k$ and $Z_k$ happens before $Z_j$ (transitivity)
\end{enumerate}
\end{definition}

Finally, we state a lemma that causality-preserving permutations, \ie permutations that preserve the causal order relation~\appCite{def:permutationpreserving}{Definition B.5}, also preserve the validity and the end result of their application.
Intuitively, it follows from the fact that causally unrelated steps should not influence each other.

\begin{restatable}[Application of Causality-Preserving Permutations]{lemma}{validityCausal}\label{lem:validity-similarity-causal}
For a trace $Z$ valid from $c$ with size $\size{Z}=n$, if $Z'$ is a causality-preserving permutation of $Z$, then: $Z'$ is valid from $c$; $Z$ and $Z'$ end in the same configuration after application to $c$, \ie $Z(c)_n = Z'(c)_n$.
\end{restatable}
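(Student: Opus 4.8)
The plan is to reduce the statement about an arbitrary causality-preserving permutation to the special case of a single swap of two \emph{adjacent} causally-independent steps, and then lift the result back up by induction. The key observation enabling this is that, since $\mathsc{F-Recover}$ is causally related to every other step (condition~1 of \Cref{def:causal-order}), two causally-independent steps can never include a global recovery; they must both be among $\mathsc{I-Event}$, $\mathsc{I-Border}$, and $\mathsc{F-Fail}$, and by condition~2 they must live on two \emph{distinct} processors $p \neq q$. This locality is exactly what makes the two steps commute.

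First I would prove a commutation (``diamond'') lemma: if $Z$ is valid from $c$ and its steps $Z_i, Z_{i+1}$ are causally independent, then the trace $Z''$ obtained by swapping them is also valid from $c$ and agrees with $Z$ on the configuration reached after both steps, \ie $Z(c)_{i+2}=Z''(c)_{i+2}$, hence on all subsequent configurations as well. The argument rests on three structural facts about the implementation model. The states $\Sigma$ and the sequence numbers $N$ are indexed by processor, so the updates performed by the two steps touch the disjoint components $\Sigma_p, N_p$ and $\Sigma_q, N_q$ and therefore commute. By \Cref{def:action-application} a consumption never removes a message, so $M$ only grows; combined with causal independence (condition~3: $Z_i$ does not produce a message consumed by $Z_{i+1}$) this means every message consumed by either step is already present before the pair is taken, so each step remains applicable in the swapped order. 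Finally, since every stream has a unique producer, the messages emitted by the two steps land on distinct streams $o_p \neq o_q$ with sequence numbers $N_p(o_p), N_q(o_q)$ that are unaffected by the other step; hence the resulting messages and sequence numbers are identical regardless of order. A short case split over the three local rule kinds then confirms that $Z''$ is valid and reaches the same configuration.

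Next I would show that any causality-preserving permutation $Z'$ of $Z$ is reachable from $Z$ by finitely many such adjacent independent swaps, the standard fact that two linear extensions of a partial order are connected by transpositions of incomparable neighbours. Concretely, I induct on $n=\size{Z}$: let $a = Z'_0$ and let $j$ be its position in $Z$. For every $k < j$ the step $Z_k$ is causally independent of $a$: it cannot be that $a$ happens before $Z_k$ (the index condition of \Cref{def:causal-order} fails, as $a$ sits at the later position $j$), and it cannot be that $Z_k$ happens before $a$ either, since then $Z'$, being causality-preserving, would have to place $Z_k$ before $a$, contradicting $a = Z'_0$. Thus $a$ can be bubbled to the front of $Z$ by $j$ adjacent independent swaps; by the commutation lemma each swap preserves validity and the final configuration. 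Removing the now-leading $a$ leaves a causality-preserving permutation of a shorter trace, to which the induction hypothesis applies, completing the decomposition. Chaining the commutation lemma over all the swaps yields both that $Z'$ is valid from $c$ and that $Z(c)_n = Z'(c)_n$.

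I expect the commutation lemma to be the main obstacle: its proof requires the explicit case analysis over the pairs of local rules and careful bookkeeping of how each rule rewrites $\Sigma$, $N$, and $M$, so that applicability in the swapped order and equality of the two resulting configurations are both established rigorously. The only other point demanding care is aligning the positional phrasing of \Cref{def:causal-order} with the formal notion of a causality-preserving permutation, so that the independence claims used in the bubbling argument are justified; once that correspondence is fixed, the combinatorial decomposition and the final induction are routine.
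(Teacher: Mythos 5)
Your proposal is correct, and its core engine is the same as the paper's: causally independent steps cannot involve $\mathsc{F-Recover}$, must sit on distinct processors, and therefore commute because $\Sigma$ and $N$ are per-processor, messages are never removed, and each stream has a unique producer. The difference is in how that engine is deployed. The paper proves \emph{validity} by a direct induction over prefixes of $Z'$ (arguing for each next step of $Z'$ that its consumed messages were already produced and its local state is unchanged, by conditions (2) and (3) of \Cref{def:causal-order}), and only uses the adjacent-swap argument for the \emph{similarity} claim -- and even there it merely asserts that "it suffices" to handle a swap of the last two steps, leaving the reduction of an arbitrary causality-preserving permutation to a sequence of adjacent transpositions implicit. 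You instead make that reduction explicit via the bubbling argument (using the index condition of the happens-before relation to show $Z'_0$ is independent of everything preceding it in $Z$), and then derive both validity and the final-configuration equality from the single diamond lemma. Your route is more uniform and actually discharges the combinatorial step the paper glosses over; the paper's route avoids the bookkeeping of iterated transpositions for the validity half at the cost of doing two separate inductions. Both are sound; just make sure, as you note, that causal independence established in $Z$ is stable across the intermediate traces produced during bubbling (it is, since the relation depends only on step contents and relative order, and transitive chains are already accounted for in $Z$).
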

\begin{proof}
\appRef{prf:validity-similarity-causal}.
\end{proof}

\subsection{Proving Failure Transparency}
As it is required by the definition of failure transparency, we first define the sets of rules, namely $\rulesI$, $\rulesF$, and $(\rulesI \setminus \rulesF)$; and the set of valid initial configurations $K$.

The semantics of the model consist of seven rules, defining two separate sets of rules.
The set of rules with failures $\rulesI$ consists of all seven rules that have been defined for the stateful dataflow implementation model; it corresponds to the implementation model presented in \Cref{sec:implementation-model}.
The set of failure-related rules $\rulesF$ within the implementation model consists of the two rules $\mathsc{F-Fail}$ and $\mathsc{F-Recover}$.
This way, the rules without failures are defined as the set $(\rulesI \setminus \rulesF)$.

\begin{definition}[Implementation Model Rules]\label{def:low-level}
$\rulesI = \{ \mathsc{S-Step}, \mathsc{S-AbsX}, \mathsc{S-AbsP}, \mathsc{I-Event}, \allowbreak \mathsc{I-Border} \} \cup \mathrm{F}$
\end{definition}

\begin{definition}[Failure-Related Rules]\label{def:failures}
$\rulesF = \{\mathsc{F-Fail}, \mathsc{F-Recover}\}$
\end{definition}

The sets of initial configurations which are considered are any acyclic graph structures which are properly initialized.

\begin{definition}[Valid Initial Configurations]\label{def:valid-initial-configurations}
$K = \configuration{\Pi}{\Sigma}{N}{M}{M_0}$
such that:
the graph defined by $\Pi$ is acyclic, and the tasks' functions $f$ do not output infinite sequences;
$\Sigma$ are the initial well-formed states;
$N$ are sequence numbers initialized to 0 for the streams;
$M$ consists of the well-formed inputs to the streams;
$M_0 = M$.
\end{definition}

\begin{restatable}[Failure Transparency of the Implementation Model]{theorem}{failureTransparency}\label{thm:failure-transparency}
$I \bbslash^{\mathrm{out}}_{K} F$, \ie the set of rules $\mathrm{I} = \{\mathsc{S-Step}, \mathsc{S-AbsX}, \mathsc{S-AbsP}, \mathsc{I-Event}, \mathsc{I-Border}\}\cup \mathrm{F}$ is failure transparent with respect to the failure rules $\mathrm{F} = \{\mathsc{F-Fail}, \mathsc{F-Recover}\}$ for the observability function $\mathrm{out}$ and the set of initial configurations $K$.
\end{restatable}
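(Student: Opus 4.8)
The plan is to unfold \Cref{def:failure-transparency} through \Cref{def:observational-explainability} and \Cref{def:observational-explaination}, reducing the claim to the following concrete statement: for every $c \in K$ and every execution $C \in \executions{c}{\rulesI}$ there is a failure-free execution $C' \in \executions{c}{\rulesI \setminus \rulesF}$ such that for each index $m$ some index $m'$ satisfies $\mathrm{out}(C_m) = \mathrm{out}(C'_{m'})$. By \Cref{def:observational-explaination} only the \emph{existence} of a witness $m'$ is needed (not a monotone ordering), and \Cref{lem:monotonicity} would upgrade any such match to a monotonic one for free. It therefore suffices to build a single failure-free execution whose configurations realise every output value that $C$ attains. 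Following the setup of \Cref{sec:steps-traces-and-causality}, I would conduct the whole argument on \emph{traces} (\Cref{def:trace}) rather than executions, so that individual steps can be deleted and permuted.

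\textbf{Construction.}
Given the trace $Z$ valid from $c$ (\Cref{def:valid-trace}) underlying $C$, I would form a failure-free trace $Z'$ by (i)~deleting every $\mathsc{F-Fail}$ and $\mathsc{F-Recover}$ step, and (ii)~deleting every $\mathsc{I-Event}$ and $\mathsc{I-Border}$ step whose effect is later discarded, \ie every step acting on an epoch strictly greater than $\gce{c}$ at the point of the next $\mathsc{F-Recover}$, since $\lcs{c}$ (\Cref{def:latest-common-snapshot}) rolls exactly those epochs back. The surviving steps are the \emph{committed} ones, whose epoch eventually becomes common to all processors. The key observation is that none of the deleted steps changes $\mathrm{out}$: $\mathsc{F-Fail}$ touches only volatile state; $\mathsc{F-Recover}$ preserves $\mathrm{out}$ because $\lcs{c}$ leaves $\gce{c}$ unchanged and retains precisely $M_0 \cup \mathrm{out}(c)$; and a deleted processing step acts on a not-yet-committed epoch $e > \gce{c}$, whose messages are invisible to $\mathrm{out}$ by \Cref{def:output-messages} and \Cref{def:gce-number}.

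\textbf{Validity.}
The core of the proof is to show that $Z'$ is valid from $c$ in $\rulesI \setminus \rulesF$. I would proceed by induction on the number of $\mathsc{F-Recover}$ steps, peeling off one fail/recover cycle at a time and reducing to a trace with one fewer recovery while preserving the attained output values. Within each step I would appeal to the causal order (\Cref{def:causal-order}) and to \Cref{lem:validity-similarity-causal}: once the rolled-back duplicate work and the failure steps are excised, the remaining committed steps can be rearranged by a causality-preserving permutation into a genuine failure-free run, using that a recovery behaves as a global synchronisation barrier and that purity and determinism of $f$ make the post-recovery redo of an epoch agree with a single clean pass from the restored snapshot.

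\textbf{Matching and main obstacle.}
With $C' = Z'(c)$ established as a valid failure-free execution, the required match follows: since each $\mathrm{out}(C_m)$ equals the output of the committed computation completed by configuration $m$, and $C'$ attains each such value as it replays the committed steps in a valid order, the witness $m'$ always exists. I expect the validity argument above to be the main obstacle. Concretely, I must verify that deleting the rolled-back work preserves the message-availability side condition of action application (\Cref{def:action-application}) — a surviving consumption step must still have its message produced by a surviving step — and that the induced ordering respects \Cref{def:causal-order} so that \Cref{lem:validity-similarity-causal} applies. Handling several failures interleaved across different processors, and checking that each post-recovery redo splices cleanly onto the committed prefix, is where the bulk of the technical effort will go.
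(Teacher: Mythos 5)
Your proposal follows essentially the same route as the paper: work on traces, partition the execution at the $\mathsc{F-Recover}$ steps, discard the failure steps and all steps on epochs above the generation's greatest common epoch, justify validity via causality-preserving permutations (\Cref{lem:validity-similarity-causal}), and match observations using the fact that only committing border steps change $\mathrm{out}$ and that the latest common snapshot coincides with the configuration reached by the committed steps alone. The one place where the paper's order of operations resolves your stated "main obstacle" more cleanly is that it first \emph{permutes} each generation (moving the to-be-discarded steps to the end, which is causality-preserving by the epoch-order lemma) and only then \emph{truncates} to a prefix — so validity follows from the permutation lemma plus prefix-validity, avoiding any direct argument that in-place deletion preserves the message-availability side condition of action application.
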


Before proceeding with the proof itself, we provide a sketch of it.
The proof idea is to construct a failure-free observational explanation of an arbitrary execution in the implementation model.

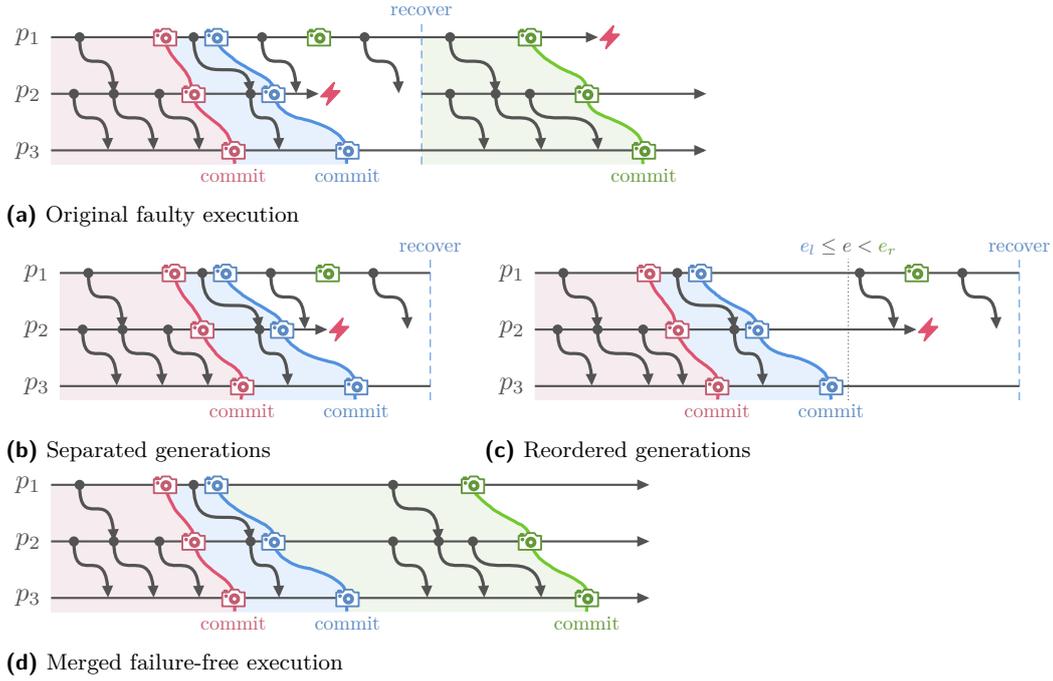
\begin{figure}[t]\centering
\begin{subfigure}{1.0\textwidth}
\centering
\begin{minipage}{1.0\textwidth}
\ifx\hideTikz\undefined
\begin{tikzpicture}[
  scale=0.75, every node/.style={scale=1},
  thick, colorGrayBase,
  timeline/.style={-{Triangle[angle=45:5]}, thick},
  message/.style={-{Triangle[angle=45:5]}, very thick, controls={+(down:0.9) and +(up:1.1)}},
  messageNode/.style={scale=0.4, circle, fill, prefix after command={\pgfextra{\tikzset{every label/.style={label distance=-1.5mm, scale=0.75}}}}},
  inlineImage/.style={scale=0.35, yscale=-1, xshift=-12, yshift=-12},
  snapshotEdge/.style={very thick, handdrawn}
]
\coordinate (s12) at (2.9,2);
\coordinate (s22) at (3.9,1);
\coordinate (s32) at (5.2,0);

\begin{scope}
\clip (0,-0.25)--(0,2)--(s12) .. controls +(down:0.5) and +(up:0.5) .. (s22) .. controls +(down:0.5) and +(up:0.5) .. (s32)--($(s32)+(0,-0.25)$)--(0,-0.25);
\fill[colorBlueBack] (0,-0.25)--(0,2.1)--(8.2,2.1)--(8.2,-0.25)--(0,-0.25);
\end{scope}

\draw[colorBlueEmph, snapshotEdge](s12) .. controls +(down:0.5) and +(up:0.5) .. (s22) .. controls +(down:0.5) and +(up:0.5) .. (s32)--($(s32)+(0,-0.25)$);

\coordinate (s1) at (2,2);
\coordinate (s2) at (2.5,1);
\coordinate (s3) at (3.2,0);

\begin{scope}
\clip (0,-0.25)--(0,2)--(s1) .. controls +(down:0.5) and +(up:0.5) .. (s2) .. controls +(down:0.5) and +(up:0.5) .. (s3)--($(s3)+(0,-0.25)$)--(0,-0.25);
\fill[colorRedBack] (0,-0.25)--(0,2.1)--(5,2.1)--(5,-0.25)--(0,-0.25);
\end{scope}

\draw[colorRedEmph, snapshotEdge](s1) .. controls +(down:0.5) and +(up:0.5) .. (s2) .. controls +(down:0.5) and +(up:0.5) .. (s3)--($(s3)+(0,-0.25)$);

\coordinate (s13) at (8-0.6+1,2);
\coordinate (s23) at (9-0.6+1,1);
\coordinate (s33) at (10-0.6+1,0);

\begin{scope}
\clip (0,-0.25)--(0,2)--(s13) .. controls +(down:0.5) and +(up:0.5) .. (s23) .. controls +(down:0.5) and +(up:0.5) .. (s33)--($(s33)+(0,-0.25)$)--(0,-0.25);
\fill[colorGreenBack] (6.5,-0.25)--(6.5,2.1)--(10.5,2.1)--(10.5,-0.25)--(6.5,-0.25);
\end{scope}

\draw[colorGreenEmph, snapshotEdge](s13) .. controls +(down:0.5) and +(up:0.5) .. (s23) .. controls +(down:0.5) and +(up:0.5) .. (s33)--($(s33)+(0,-0.25)$);

\node[label={left:{$p_1$}},scale=0] at (0,2) {};
\draw[timeline](0,2) -- (8.6+1,2);
\node[label={left:{$p_2$}},scale=0] at (0,1) {};
\draw[timeline](0,1) -- (4.7,1);
\draw[densely dashed, thin, colorBlueEmph] (6.5,-0.25) -- (6.5,2.25);
\node[label={[scale=0.75, colorBlueBase]above:{recover}},scale=0] at (6.5,2.25) {};
\draw[timeline](6.5,1) -- (10.5+1,1);
\node[label={left:{$p_3$}},scale=0] at (0,0) {};
\draw[timeline](0,0) -- (10.5+1,0);

\node[label={[scale=0.75, colorRedBase]below:{commit}},scale=0] at ($(s3.center)+(0,-0.2)$) {};
\node[label={[scale=0.75, colorBlueBase]below:{commit}},scale=0] at ($(s32.center)+(0,-0.2)$) {};
\node[label={[scale=0.75, colorGreenBase]below:{commit}},scale=0] at ($(s33.center)+(0,-0.2)$) {};

\node[messageNode] (m11) at (0.5,2) {};
\draw[message](m11.center) to (1.1,1);
\node[messageNode] (m20) at (0.4,1) {};
\draw[message](m20.center) to (1,0);
\node[messageNode] (m21) at (1.1,1) {};
\draw[message](m21.center) to (1.8,0);
\node[messageNode] (m22p) at (1.9,1) {};
\draw[message](m22p.center) to (2.6,0);

\node[messageNode] (m13) at (2.5,2) {};
\draw[message,controls={+(down:1.1) and +(up:0.9)}](m13.center) to (3.5,1);
\node[messageNode] (m22) at (3.5,1) {};
\draw[message](m22.center) to (4,0);

\node[messageNode] (m3) at (3.7,2) {};
\draw[message](m3.center) to (4.3,1);
\node[messageNode] (m4) at (5.5,2) {};
\draw[message](m4.center) to (6.1,1);

\node[messageNode] (m00) at (6+1,1) {};
\draw[message](m00.center) to (6.6+1,0);
\node[messageNode] (m01) at (6+1,2) {};
\draw[message](m01.center) to (6.8+1,1);
\node[messageNode] (m02) at (6.8+1,1) {};
\draw[message](m02.center) to (7.6+1,0);
\node[messageNode] (m03) at (7.4+1,1) {};
\draw[message](m03.center) to (8.6+1,0);

\node[inlineImage] at (s1) {\snapshotImage{colorRedBase}};
\node[inlineImage] at (s2) {\snapshotImage{colorRedBase}};
\node[inlineImage] at (s3) {\snapshotImage{colorRedBase}};
\node[inlineImage] at (s12) {\snapshotImage{colorBlueBase}};
\node[inlineImage] at (s22) {\snapshotImage{colorBlueBase}};
\node[inlineImage] at (s32) {\snapshotImage{colorBlueBase}};
\node[inlineImage] at (s13) {\snapshotImage{colorGreenBase}};
\node[inlineImage] at (s23) {\snapshotImage{colorGreenBase}};
\node[inlineImage] at (s33) {\snapshotImage{colorGreenBase}};

\node[inlineImage] at (4.7,2) {\snapshotImage{colorGreenBase}};

\node[inlineImage] at (4.9,1) {\failureImage};
\node[inlineImage] at (8.8+1,2) {\failureImage};
\end{tikzpicture}
\else
hideTikz is set
\fi
\end{minipage}
\caption{Original faulty execution}\label{fig:failure-transparency:original}
\end{subfigure}\\%
\begin{subfigure}{0.45\textwidth}
\centering
\begin{minipage}{1.0\textwidth}
\ifx\hideTikz\undefined
\begin{tikzpicture}[
  scale=0.75, every node/.style={scale=1},
  thick, colorGrayBase,
  timeline/.style={-{Triangle[angle=45:5]}, thick},
  message/.style={-{Triangle[angle=45:5]}, very thick, controls={+(down:0.9) and +(up:1.1)}},
  messageNode/.style={scale=0.4, circle, fill, prefix after command={\pgfextra{\tikzset{every label/.style={label distance=-1.5mm, scale=0.75}}}}},
  inlineImage/.style={scale=0.35, yscale=-1, xshift=-12, yshift=-12},
  snapshotEdge/.style={very thick, handdrawn}
]
\coordinate (s12) at (2.9,2);
\coordinate (s22) at (3.9,1);
\coordinate (s32) at (5.2,0);

\begin{scope}
\clip (0,-0.25)--(0,2)--(s12) .. controls +(down:0.5) and +(up:0.5) .. (s22) .. controls +(down:0.5) and +(up:0.5) .. (s32)--($(s32)+(0,-0.25)$)--(0,-0.25);
\fill[colorBlueBack] (0,-0.25)--(0,2.1)--(8.2,2.1)--(8.2,-0.25)--(0,-0.25);
\end{scope}

\draw[colorBlueEmph, snapshotEdge](s12) .. controls +(down:0.5) and +(up:0.5) .. (s22) .. controls +(down:0.5) and +(up:0.5) .. (s32)--($(s32)+(0,-0.25)$);

\coordinate (s1) at (2,2);
\coordinate (s2) at (2.5,1);
\coordinate (s3) at (3.2,0);

\begin{scope}
\clip (0,-0.25)--(0,2)--(s1) .. controls +(down:0.5) and +(up:0.5) .. (s2) .. controls +(down:0.5) and +(up:0.5) .. (s3)--($(s3)+(0,-0.25)$)--(0,-0.25);
\fill[colorRedBack] (0,-0.25)--(0,2.1)--(5,2.1)--(5,-0.25)--(0,-0.25);
\end{scope}

\draw[colorRedEmph, snapshotEdge](s1) .. controls +(down:0.5) and +(up:0.5) .. (s2) .. controls +(down:0.5) and +(up:0.5) .. (s3)--($(s3)+(0,-0.25)$);

\node[label={left:{$p_1$}},scale=0] at (0,2) {};
\draw[timeline,-](0,2) -- (6.5,2);
\node[label={left:{$p_2$}},scale=0] at (0,1) {};
\draw[timeline](0,1) -- (4.7,1);
\draw[densely dashed, thin, colorBlueEmph] (6.5,-0.25) -- (6.5,2.25);
\node[label={[transparent,scale=0.75]above:{${\color{colorBlueBase}e_l}\leq e < {\color{colorGreenBase}e_r}$}},scale=0] at (5.5,2.2) {};
\node[label={[scale=0.75, colorBlueBase]above:{recover}},scale=0] at (6.5,2.25) {};
\node[label={left:{$p_3$}},scale=0] at (0,0) {};
\draw[timeline,-](0,0) -- (6.5,0);

\node[label={[scale=0.75, colorRedBase]below:{commit}},scale=0] at ($(s3.center)+(0,-0.2)$) {};
\node[label={[scale=0.75, colorBlueBase]below:{commit}},scale=0] at ($(s32.center)+(0,-0.2)$) {};

\node[messageNode] (m11) at (0.5,2) {};
\draw[message](m11.center) to (1.1,1);
\node[messageNode] (m20) at (0.4,1) {};
\draw[message](m20.center) to (1,0);
\node[messageNode] (m21) at (1.1,1) {};
\draw[message](m21.center) to (1.8,0);
\node[messageNode] (m22p) at (1.9,1) {};
\draw[message](m22p.center) to (2.6,0);

\node[messageNode] (m13) at (2.5,2) {};
\draw[message,controls={+(down:1.1) and +(up:0.9)}](m13.center) to (3.5,1);
\node[messageNode] (m22) at (3.5,1) {};
\draw[message](m22.center) to (4,0);

\node[messageNode] (m3) at (3.7,2) {};
\draw[message](m3.center) to (4.3,1);
\node[messageNode] (m4) at (5.5,2) {};
\draw[message](m4.center) to (6.1,1);

\node[inlineImage] at (s1) {\snapshotImage{colorRedBase}};
\node[inlineImage] at (s2) {\snapshotImage{colorRedBase}};
\node[inlineImage] at (s3) {\snapshotImage{colorRedBase}};
\node[inlineImage] at (s12) {\snapshotImage{colorBlueBase}};
\node[inlineImage] at (s22) {\snapshotImage{colorBlueBase}};
\node[inlineImage] at (s32) {\snapshotImage{colorBlueBase}};

\node[inlineImage] at (4.7,2) {\snapshotImage{colorGreenBase}};

\node[inlineImage] at (4.9,1) {\failureImage};
\end{tikzpicture}
\else
hideTikz is set
\fi
\end{minipage}
\caption{Separated generations}\label{fig:failure-transparency:generation}
\end{subfigure}%
\begin{subfigure}{0.55\textwidth}
\centering
\begin{minipage}{1.0\textwidth}
\ifx\hideTikz\undefined
\begin{tikzpicture}[
  scale=0.75, every node/.style={scale=1},
  thick, colorGrayBase,
  timeline/.style={-{Triangle[angle=45:5]}, thick},
  message/.style={-{Triangle[angle=45:5]}, very thick, controls={+(down:0.9) and +(up:1.1)}},
  messageNode/.style={scale=0.4, circle, fill, prefix after command={\pgfextra{\tikzset{every label/.style={label distance=-1.5mm, scale=0.75}}}}},
  inlineImage/.style={scale=0.35, yscale=-1, xshift=-12, yshift=-12},
  snapshotEdge/.style={very thick, handdrawn}
]
\coordinate (s12) at (2.9,2);
\coordinate (s22) at (3.9,1);
\coordinate (s32) at (5.2,0);

\begin{scope}
\clip (0,-0.25)--(0,2)--(s12) .. controls +(down:0.5) and +(up:0.5) .. (s22) .. controls +(down:0.5) and +(up:0.5) .. (s32)--($(s32)+(0,-0.25)$)--(0,-0.25);
\fill[colorBlueBack] (0,-0.25)--(0,2.1)--(8.2,2.1)--(8.2,-0.25)--(0,-0.25);
\end{scope}

\draw[colorBlueEmph, snapshotEdge](s12) .. controls +(down:0.5) and +(up:0.5) .. (s22) .. controls +(down:0.5) and +(up:0.5) .. (s32)--($(s32)+(0,-0.25)$);

\coordinate (s1) at (2,2);
\coordinate (s2) at (2.5,1);
\coordinate (s3) at (3.2,0);

\begin{scope}
\clip (0,-0.25)--(0,2)--(s1) .. controls +(down:0.5) and +(up:0.5) .. (s2) .. controls +(down:0.5) and +(up:0.5) .. (s3)--($(s3)+(0,-0.25)$)--(0,-0.25);
\fill[colorRedBack] (0,-0.25)--(0,2.1)--(5,2.1)--(5,-0.25)--(0,-0.25);
\end{scope}

\draw[colorRedEmph, snapshotEdge](s1) .. controls +(down:0.5) and +(up:0.5) .. (s2) .. controls +(down:0.5) and +(up:0.5) .. (s3)--($(s3)+(0,-0.25)$);

\node[label={left:{$p_1$}},scale=0] at (0,2) {};
\draw[timeline,-](0,2) -- (6.5+2,2);
\node[label={left:{$p_2$}},scale=0] at (0,1) {};
\draw[timeline](0,1) -- (4.7+2,1);
\draw[densely dashed, thin, colorBlueEmph] (6.5+2,-0.25) -- (6.5+2,2.25);
\node[label={[scale=0.75, colorBlueBase]above:{recover}},scale=0] at (6.5+2,2.25) {};
\node[label={left:{$p_3$}},scale=0] at (0,0) {};
\draw[timeline,-](0,0) -- (6.5+2,0);
\draw[densely dotted, thin] (5.5,-0.25) -- (5.5,2.25);
\node[label={[scale=0.75]above:{${\color{colorBlueBase}e_l}\leq e < {\color{colorGreenBase}e_r}$}},scale=0] at (5.5,2.2) {};

\node[label={[scale=0.75, colorRedBase]below:{commit}},scale=0] at ($(s3.center)+(0,-0.2)$) {};
\node[label={[scale=0.75, colorBlueBase]below:{commit}},scale=0] at ($(s32.center)+(0,-0.2)$) {};

\node[messageNode] (m11) at (0.5,2) {};
\draw[message](m11.center) to (1.1,1);
\node[messageNode] (m20) at (0.4,1) {};
\draw[message](m20.center) to (1,0);
\node[messageNode] (m21) at (1.1,1) {};
\draw[message](m21.center) to (2,0);
\node[messageNode] (m22p) at (1.9,1) {};
\draw[message](m22p.center) to (2.6,0);

\node[messageNode] (m13) at (2.5,2) {};
\draw[message,controls={+(down:1.1) and +(up:0.9)}](m13.center) to (3.5,1);
\node[messageNode] (m22) at (3.5,1) {};
\draw[message](m22.center) to (4,0);

\node[messageNode] (m3) at (3.7+2,2) {};
\draw[message](m3.center) to (4.3+2,1);
\node[messageNode] (m4) at (5.5+2,2) {};
\draw[message](m4.center) to (6.1+2,1);

\node[inlineImage] at (s1) {\snapshotImage{colorRedBase}};
\node[inlineImage] at (s2) {\snapshotImage{colorRedBase}};
\node[inlineImage] at (s3) {\snapshotImage{colorRedBase}};
\node[inlineImage] at (s12) {\snapshotImage{colorBlueBase}};
\node[inlineImage] at (s22) {\snapshotImage{colorBlueBase}};
\node[inlineImage] at (s32) {\snapshotImage{colorBlueBase}};

\node[inlineImage] at (4.7+2,2) {\snapshotImage{colorGreenBase}};

\node[inlineImage] at (4.9+2,1) {\failureImage};
\end{tikzpicture}
\else
hideTikz is set
\fi
\end{minipage}
\caption{Reordered generations}\label{fig:failure-transparency:reordering}
\end{subfigure}\\%
\begin{subfigure}{1.0\textwidth}
\centering
\begin{minipage}{1.0\textwidth}
\ifx\hideTikz\undefined
\begin{tikzpicture}[
  scale=0.75, every node/.style={scale=1},
  thick, colorGrayBase,
  timeline/.style={-{Triangle[angle=45:5]}, thick},
  message/.style={-{Triangle[angle=45:5]}, very thick, controls={+(down:0.9) and +(up:1.1)}},
  messageNode/.style={scale=0.4, circle, fill, prefix after command={\pgfextra{\tikzset{every label/.style={label distance=-1.5mm, scale=0.75}}}}},
  inlineImage/.style={scale=0.35, yscale=-1, xshift=-12, yshift=-12},
  snapshotEdge/.style={very thick, handdrawn}
]
\coordinate (s13) at (8-0.6,2);
\coordinate (s23) at (9-0.6,1);
\coordinate (s33) at (10-0.6,0);

\begin{scope}
\clip (0,-0.25)--(0,2)--(s13) .. controls +(down:0.5) and +(up:0.5) .. (s23) .. controls +(down:0.5) and +(up:0.5) .. (s33)--($(s33)+(0,-0.25)$)--(0,-0.25);
\fill[colorGreenBack] (0,-0.25)--(0,2.1)--(10.5,2.1)--(10.5,-0.25)--(0,-0.25);
\end{scope}

\draw[colorGreenEmph, snapshotEdge](s13) .. controls +(down:0.5) and +(up:0.5) .. (s23) .. controls +(down:0.5) and +(up:0.5) .. (s33)--($(s33)+(0,-0.25)$);

\coordinate (s12) at (2.9,2);
\coordinate (s22) at (3.9,1);
\coordinate (s32) at (5.2,0);

\begin{scope}
\clip (0,-0.25)--(0,2)--(s12) .. controls +(down:0.5) and +(up:0.5) .. (s22) .. controls +(down:0.5) and +(up:0.5) .. (s32)--($(s32)+(0,-0.25)$)--(0,-0.25);
\fill[colorBlueBack] (0,-0.25)--(0,2.1)--(8.2,2.1)--(8.2,-0.25)--(0,-0.25);
\end{scope}

\draw[colorBlueEmph, snapshotEdge](s12) .. controls +(down:0.5) and +(up:0.5) .. (s22) .. controls +(down:0.5) and +(up:0.5) .. (s32)--($(s32)+(0,-0.25)$);

\coordinate (s1) at (2,2);
\coordinate (s2) at (2.5,1);
\coordinate (s3) at (3.2,0);

\begin{scope}
\clip (0,-0.25)--(0,2)--(s1) .. controls +(down:0.5) and +(up:0.5) .. (s2) .. controls +(down:0.5) and +(up:0.5) .. (s3)--($(s3)+(0,-0.25)$)--(0,-0.25);
\fill[colorRedBack] (0,-0.25)--(0,2.1)--(5,2.1)--(5,-0.25)--(0,-0.25);
\end{scope}

\draw[colorRedEmph, snapshotEdge](s1) .. controls +(down:0.5) and +(up:0.5) .. (s2) .. controls +(down:0.5) and +(up:0.5) .. (s3)--($(s3)+(0,-0.25)$);

\node[label={left:{$p_1$}},scale=0] at (0,2) {};
\draw[timeline](0,2) -- (10.5,2);
\node[label={left:{$p_2$}},scale=0] at (0,1) {};
\draw[timeline](0,1) -- (10.5,1);
\node[label={left:{$p_3$}},scale=0] at (0,0) {};
\draw[timeline](0,0) -- (10.5,0);

\node[label={[scale=0.75, colorRedBase]below:{commit}},scale=0] at ($(s3.center)+(0,-0.2)$) {};
\node[label={[scale=0.75, colorBlueBase]below:{commit}},scale=0] at ($(s32.center)+(0,-0.2)$) {};
\node[label={[scale=0.75, colorGreenBase]below:{commit}},scale=0] at ($(s33.center)+(0,-0.2)$) {};

\node[messageNode] (m11) at (0.5,2) {};
\draw[message](m11.center) to (1.1,1);
\node[messageNode] (m20) at (0.4,1) {};
\draw[message](m20.center) to (1,0);
\node[messageNode] (m21) at (1.1,1) {};
\draw[message](m21.center) to (1.8,0);
\node[messageNode] (m22p) at (1.9,1) {};
\draw[message](m22p.center) to (2.6,0);

\node[messageNode] (m13) at (2.5,2) {};
\draw[message,controls={+(down:1.1) and +(up:0.9)}](m13.center) to (3.5,1);
\node[messageNode] (m22) at (3.5,1) {};
\draw[message](m22.center) to (4,0);

\node[messageNode] (m00) at (6,1) {};
\draw[message](m00.center) to (6.6,0);
\node[messageNode] (m01) at (6,2) {};
\draw[message](m01.center) to (6.8,1);
\node[messageNode] (m02) at (6.8,1) {};
\draw[message](m02.center) to (7.6,0);
\node[messageNode] (m03) at (7.4,1) {};
\draw[message](m03.center) to (8.6,0);

\node[inlineImage] at (s1) {\snapshotImage{colorRedBase}};
\node[inlineImage] at (s2) {\snapshotImage{colorRedBase}};
\node[inlineImage] at (s3) {\snapshotImage{colorRedBase}};
\node[inlineImage] at (s12) {\snapshotImage{colorBlueBase}};
\node[inlineImage] at (s22) {\snapshotImage{colorBlueBase}};
\node[inlineImage] at (s32) {\snapshotImage{colorBlueBase}};
\node[inlineImage] at (s13) {\snapshotImage{colorGreenBase}};
\node[inlineImage] at (s23) {\snapshotImage{colorGreenBase}};
\node[inlineImage] at (s33) {\snapshotImage{colorGreenBase}};
\end{tikzpicture}
\else
hideTikz is set
\fi
\end{minipage}
\caption{Merged failure-free execution}\label{fig:failure-transparency:merged}
\end{subfigure}
\caption{The step-wise construction of a failure-free execution trace from an execution with failures.}\label{fig:failure-transparency}
\end{figure}

The construction is done using traces; we reorder and manipulate the original trace so that failures, recoveries, and discarded trace steps are removed from it.
\Cref{fig:failure-transparency} illustrates the construction:
(1) first, we split the trace by the recovery steps into generations;
(2) next, the trace steps are reordered such that all discarded steps are moved to the end of the generation;
(3) then, these steps are safely discarded;
(4) finally, we concatenate the generations to get the final trace.

Next, we have to show that:
(i) the constructed trace is valid, \ie it corresponds to a failure-free execution; and
(ii) that the execution is an observational explanation of the original execution.
We do so by reasoning about the preservation of validity and observable outputs in each step of the construction.
For trace validity, the most complicated step is the reordering (step 2 of the construction).
We show that the reordering is causality-preserving and thus, by \Cref{lem:validity-similarity-causal}, it produces a valid trace.
For observational explanation, throughout the construction we maintain a mapping of observations from the steps of the original trace to the steps of the constructed trace.
The challenge lies in the reordering of steps (step 2 of the construction) and the fusion of generations (step 4 of the construction).
For the reordering, we show a lemma that the observable output is not changed by the discarded steps; and, for the fusion, we show that the latest common snapshot of a generation is exactly the configuration obtained by the reordering and removal of the discarded steps.
This, accompanied by an analysis of the rules, lets us show that the sequence of observable outputs is the same for the original and the failure-free traces.

\begin{proof}
Expanding the definitions, we need to prove that,
for all executions in $\rulesI$ with potential failures, there is an observational explanation in the failure-free model $(\rulesI \setminus \rulesF)$.
Given an arbitrary execution $C$ of length $n$ in $\rulesI$ from initial configuration $c \in K$, \ie $\seq{C_i}{i}{n} \in \mathbb{E}_{c}^{\mathrm{I}}$, the goal is to construct a failure-free execution $\seq{C'_j}{j}{n'}$ such that:
\[
\seq{C'_j}{j}{n'} \in \mathbb{E}_{c}^{\mathrm{I}\setminus\mathrm{F}} \land
\forall m < n .~ \exists m' < n' .~
\mathrm{out}(C_{m}) = \mathrm{out}(C'_{m'})
\]

This execution is constructed indirectly, by first constructing a trace $Z'$ which then generates it.
First, we need to prove that the constructed failure-free trace $Z'$ is valid from $c$, \ie $Z(c)$;
next, we need to show that the corresponding execution $C' = Z'(c)$ is an observational explanation of the original execution, that is, for each configuration in the original execution, we have to provide an observationally equal configuration in the constructed execution.
From the original trace $Z$, for which $Z(c) = C$, we construct the failure-free trace $Z'$ in four steps as outlined in the proof sketch and illustrated in \Cref{fig:failure-transparency}.

(1) First, the trace is split by the recovery steps into generations, giving us a sequence of generations $G$ (\Cref{fig:failure-transparency:generation}).
Each generation is a sequence of $\mathsc{S-Step}$s ending with an $\mathsc{F-Recover}$ step; in the case of the last generation it may not necessarily end with an $\mathsc{F-Recover}$ step.
By construction, each generation is a valid trace as each of them is a contiguous part of a valid trace.
We construct the observability mapping by mapping the configurations of the original trace to their closest preceding committing border steps.
A \emph{committing border step} is an $\mathsc{I-Border}$ step which changes the greatest common epoch number, $\mathrm{gce}$, and thus also the observed output, $\mathrm{out}$; such steps are labeled with ``commit'' in \Cref{fig:failure-transparency}.
The equality of observations holds, since, by inspection of the rules, only a committing border step can change the observable output~\appCite{lem:committing-border-output}{Lemma B.6}.

(2) Next, from each generation $g = G_i$, we construct a new reordered trace $g' = G'_i$ so that all the steps of epochs above the greatest common epoch of the generation are placed after the steps of epochs below it (\Cref{fig:failure-transparency:reordering}).
In effect, this moves all the discarded steps to the end of the generation, since they are discarded by the recovery, which in turn is done to the greatest common epoch of the generation.
In other words, $g' = \mathrm{filter}(x \in g.~ \mathrm{epoch}(x) \leq e) \concat \mathrm{filter}(x \in g.~ \mathrm{epoch}(x) > e)$, where $e = \gce{g_{\size{g}-1}}$ is the epoch number to which the recovery is done.
The new traces are still valid, as the reordering is causality preserving~\appCite{lem:epoch-order}{Lemma B.7}, and thus the validity follows from \Cref{lem:validity-similarity-causal}.
The mapping of observations is kept intact, since the outputs of the committing border steps are not changed by the reordering~\appCite{lem:reordering-output}{Lemma B.8}.
This follows from the fact that the observable output is only changed by committing border steps~\appCite{lem:committing-border-output}{Lemma B.6}, that causality-preserving permutations result in the same configuration~(\Cref{lem:validity-similarity-causal}), and that the reordering is preserving causality.

(3) Then, from each $G'_i$ we construct a new trace $G''_i$ by removing the discarded steps and the recovery step.
That is, the suffix consisting of the failure steps, recovery steps, and any steps of epochs greater than the greatest common epoch of the generation are removed.
The new trace is a prefix of $G'_i$, and is thus still a valid trace~\appCite{lem:validity-prefix}{Lemma B.2}.
We keep the same mapping of observations for the steps that were not removed.
As, within a generation, only the suffix is removed, it does not affect the observed outputs of the remaining steps, and thus the mapping of observations is kept unchanged.

(4) Finally, we concatenate all stripped generations $G''_i$ to get the merged trace $Z'$ (\Cref{fig:failure-transparency:merged}).
We show that the last configuration of each of the generations $G''_i$ is exactly the latest common snapshot of the original generation $G_i$~\appCite{lem:snapshot,lem:snapshot-small}{Lemmas B.10-11}, in other words, the latest common snapshot is a view of a configuration as if only the committed steps occurred.
Since the recovery is done to the latest common snapshot, it is also the same configuration as the first configuration of the following generation $G''_{i+1}$.
For this reason, the concatenation of all generations forms a trace $Z'$ valid from $c$.
The observed outputs are not changed by the merge, and we maintain the same mapping.

By these four steps we have constructed a failure-free observational explanation of the faulty execution, which means that the implementation model is observationally explainable (\Cref{def:observational-explainability}) by its failure-free version, or, in other words, it is failure transparent (\Cref{def:failure-transparency}).
\end{proof}

\subsection{Liveness}
The proposed definition of failure transparency is a safety property~\cite{alpern1985defining,lamport1977proving}, \ie it prohibits the implementation from reaching invalid states.
Being as such, failure transparency does not require the implementation to take any observable execution steps; an implementation that never takes a step would trivially satisfy the property.
In contrast, ensuring that the implementation eventually does something is a \emph{liveness} property~\cite{alpern1985defining,lamport1977proving}.
To complete our analysis, we would like to show that the implementation model eventually produces outputs for all epochs in its input.
This is a liveness property which, consequently, does not concern itself with the correctness of the outputs.
However, in combination with the failure transparency property, the properties ensure that the presented implementation model \emph{eventually} produces the \emph{correct} outputs.
For this reason, we prove the following theorem about the liveness of the implementation model.

\begin{restatable}[Liveness of the Implementation Model]{theorem}{livenessOfImplementation}\label{thm:liveness}
For every input epoch present in the initial configuration, eventually a corresponding epoch appears in the output of a fair execution. That is:
\begin{gather*}
\forall k = \langle \Pi, \Sigma, M, N, D \rangle \in K.~ \forall C \in \mathbb{E}_k^\rulesI.~ \mathrm{fair}(C) \implies\\
\quad \forall (n\,s\,\langle e, d \rangle) \in M.~ \exists c \in C.~ \exists (n'\,s'\,\langle e', d' \rangle) \in \mathrm{out}(c).~ e = e'
\end{gather*}
where a $\mathrm{fair}$ execution is maximal (\ie it is not a prefix of another execution), has a finite amount of failures, and eventually executes any step which is {}eventually always enabled \techReport{(see \Cref{def:fair-execution} for the formal definition of $\mathrm{fair}$ execution)}\mainReport{(see the technical report~\cite{veresov2024technicalreport} for the formal definition of $\mathrm{fair}$ execution)}.
\lipicsEnd
\end{restatable}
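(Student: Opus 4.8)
The plan is to reduce the claim to a progress argument over a failure-free suffix of the execution, and then to establish progress by a well-founded induction on the epoch number nested with a topological induction over the acyclic processor graph. Since a $\mathrm{fair}$ execution has only finitely many failures, there is a last $\mathsc{F-Fail}$ and a corresponding last $\mathsc{F-Recover}$ step; let $c^\star \in C$ be the configuration immediately after it (or $C_0$ if there are no failures). From $c^\star$ onward the suffix of $C$ contains only $\mathsc{S-Step}$-derived $\mathsc{I-Event}$ and $\mathsc{I-Border}$ steps, and it is itself a fair, failure-free execution. A key observation is that recovery restores $M_0$ into the message component and resets the volatile state to the latest common snapshot; hence no input epoch is lost across recoveries. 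Because $M_0 = M$ for every $k \in K$ and consumption never removes messages, each original input message $n\,s\,\langle e, d\rangle$ persists in the message component throughout $C$, and every such message is either already committed (its epoch $\leq \gce{}$, so it lies in $\mathrm{out}$) or still available to be processed from $c^\star$.

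Next I would prove the central progress lemma: in the failure-free suffix, for every epoch $e$ occurring in the input and every processor $p$, eventually $p$ \emph{completes} epoch $e$, meaning it takes the $\mathsc{I-Border}$ step for epoch $e$ and thereby records $e \in \dom{a}$ in its snapshot archive. The argument is a combined induction. For the epoch induction, assume every processor has completed all epochs below $e$. For the topological induction, order the processors so that producers precede consumers (possible since $\Pi$ is acyclic). For a processor fed only by source streams, all epoch-$e$ events and the epoch-$e$ border are present on its inputs; for an internal processor, the topological hypothesis guarantees that each upstream processor has already emitted all epoch-$e$ events followed by the epoch-$e$ border on the shared stream. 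In either case, while some input stream still has an event of epoch $e$ as its next message, the $\mathsc{I-Event}$ step for that stream is enabled, and once all inputs present the border, the $\mathsc{I-Border}$ step is enabled. Crucially, since the suffix is failure-free the processor's state remains normal, and since a processor's read position on a stream changes only through its own steps, each enabling condition, once true, persists until the corresponding step is taken; i.e.\ the step is \emph{eventually always enabled}. Because each epoch is a finite sequence of events (well-formed input) and each application of $f$ yields a finite output, only finitely many steps are needed, so by the $\mathrm{fair}$ness assumption all of them are eventually executed and $p$ completes epoch $e$.

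Finally, I would assemble the conclusion. Fix an arbitrary input message $n\,s\,\langle e, d\rangle \in M$. By the progress lemma there is a configuration $c \in C$ at which \emph{all} processors have completed epoch $e$; at that point $\mathrm{max}(\dom{a}) \geq e$ for every processor, so by \Cref{def:gce-number} we have $\gce{c} \geq e$. Since the input message persists in the message component and $e \leq \gce{c}$, \Cref{def:output-messages} places it in $\mathrm{out}(c)$, yielding a message $n\,s\,\langle e, d\rangle \in \mathrm{out}(c)$ with matching epoch $e' = e$, as required.

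The main obstacle I anticipate is the rigorous handling of the $\mathrm{fair}$ness condition: turning the informal claim that an enabling condition persists until its step is taken into the formal ``eventually always enabled'' premise needed to invoke fairness. This requires a stability argument that neither the processor's own alternative steps nor any concurrent steps of other processors can disable a pending $\mathsc{I-Event}$ or $\mathsc{I-Border}$ step (relying on read positions being updated only locally, consumption not deleting messages, and the absence of failures in the suffix), together with the finiteness facts that bound the number of steps per epoch so that progress cannot be indefinitely deferred. A secondary subtlety is justifying soundness of the reduction to the post-last-recovery suffix, which rests on the observation that recovery neither erases committed output nor discards unconsumed input.
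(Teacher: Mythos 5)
Your proposal is correct and follows essentially the same route as the paper's proof: reduce to a failure-free portion of the execution via the finite-failures clause of fairness, establish progress by induction over the acyclic processor graph in topological order, use persistence of enabledness plus fairness to force each processor to eventually consume and propagate the epoch-$e$ border, and conclude via $\mathrm{gce}$ and $\mathrm{out}$. The only differences are presentational — you make the epoch induction and the post-last-recovery suffix explicit where the paper folds these into its ``processes one epoch until completion'' remark and its always/eventually temporal argument (statements \texttt{S4}--\texttt{S7}) — and the obstacle you flag about formalizing ``eventually always enabled'' is treated at the same level of rigor in the paper itself.
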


The liveness theorem states that for any fair execution $C$ starting from a valid initial configuration $k$, and for all input epochs $e$, eventually there is a configuration $c$ in the execution for which the output $\mathrm{out}(c)$ contains the epoch $e$.

\begin{proof}
\mainReport{The complete proof is available in the companion technical report~\cite{veresov2024technicalreport}, it is summarized as follows. }%
\techReport{The complete proof is available in~\ref{prf:liveness}, it is summarized as follows. }%
First, we show that it suffices to demonstrate that, continuing from any configuration $c$ reachable from the valid initial configuration $k$, one or both of the following are true: eventually there is a failure; or eventually the epoch is visible in the output.
As the considered executions have only finite amounts of failures, we further simplify the proof goal: it suffices to show that eventually the epoch appears in the output under the assumption that there are no more failures.
We handle this simplified case by inductive reasoning on the acyclic dataflow graph of processors.
The induction's base case is the graph consisting of the source input streams but with no processors.
The induction hypothesis states that all streams are well-formed and that the border message of all input epochs eventually appear on all streams; this is satisfied for the base case by validity of the initial configuration.
Then, in the induction step, we construct the graph by adding one processor at a time, given that all of its input streams are already handled, as either source inputs or as outputs of other processors in the previous step's graph.
The assumption of fair scheduling allows us to reason about the processor locally, since, by definition of fairness, if a message has arrived to the processor, it will eventually be consumed.
As a conclusion of the induction, each processor will eventually have processed a border of each epoch present in the initial configuration; thus, eventually all processors will process a border of each initial epoch.
This, in turn, by analysis of $\mathsc{I-Border}$, $\mathrm{gce}$, and $\mathrm{out}$, shows that the border messages of the epoch will eventually be in the output.
\end{proof}

\section{Related Work}\label{sec:related-work}

\subparagraph*{Failure Transparency, Observational Explainability.}
There has been a significant body of research on failure transparency~\cite{Lee1990}.
To our knowledge, the earliest work on failure transparency was by von Neumann in 1956~\cite{neumann1956probabilistic} on creating reliable systems from unreliable components.
Later work by Wensley in 1972~\cite{DBLP:conf/afips/Wensley72} discussed software techniques for failure transparent computing.
Lowell and Chen discussed failure transparency in the context of consistent failure recovery protocols~\cite{lowell99theory}.
In their work, they introduced ``equivalence functions'' for comparing executions, a concept which inspired the observability functions in this paper.
Our work, in contrast, restricts these functions to be monotonic, and discusses their application to both levels (low and high) of the system, which facilitates the presented transitivity lemma (\Cref{lem:transitivity}).
Around the same time as Lowell and Chen, Gärtner discussed general models for fault-tolerant computing~\cite{gartner1999fundamentals}.
Similar to our work, Gärtner separated fault-tolerant programs into two separate sets of rules (actions): the rules for normal behavior; and the rules for failure behavior.
With this separation, Gärtner discussed various properties and forms of fault-tolerant programs.
In the context of Gärtner's work, our definition of failure transparency would be considered ``failure masking'', in the sense that the system can recover from failures and continue its normal operation.
Whereas these works defined failure transparency as a conjunct of \emph{safety and liveness}~\cite{lamport1977proving,alpern1985defining}, we have only considered its safety property for our definition.

The presented definition for observational explainability is closely related to previous definitions of refinement (\eg TLA~\cite{lamport2002specifying,lamport1994temporal}, Compiler Correctness~\cite{patterson2019next}), implementation (\eg I/O Automata~\cite{lynch1988introduction}), and simulation.
In simplified terms, one set of executions implements another if it is a subset thereof (modulo stuttering and multistep executions).
Our definition of observational explainability, in some sense, extends the notion of refinement to directly include a refinement mapping~\cite{abadi1991existence} on both sides via observability functions.
It resembles notions from related work such as observational equivalence~\cite{burckhardt2021durable} and observational refinement~\cite{kallas2023executing}; in contrast to these works, we provide a formal definition thereof.
Different from inductive proof approaches as typical for TLA~\cite{lamport2002specifying} and simulation proof strategies, our proof approach reasons about the whole sequence.
This makes it not necessary to include notions for ghost variables ~\cite{marcus1996using} (also known as auxiliary variables~\cite{lamport2017auxiliary}) for the purpose of reasoning about past or future events.

\subparagraph*{Failure Transparency Proofs.}
Failure transparency and observational explainability can be proven in various ways.
For example, Burckhardt et al.~\cite{burckhardt2021durable} prove ``observational equivalence'' for their serverless programming model.
Mukherjee et al.~\cite{mukherjee2019reliable} propose a failure transparency theorem for their system of reliable state machines: an execution of the implementation is a refinement of an execution without failures ``with respect to its observable behavior'', reminiscent of our definition of failure transparency.
Other works include models for distributed reliable actor communication~\cite{tardieu2023reliable}, serverless microservices and observational refinement~\cite{kallas2023executing}, and reliable state machines~\cite{mukherjee2019reliable}.
Their specific approaches may differ, some use simulation~\cite{burckhardt2021durable,kallas2023executing}, others model failures explicitly~\cite{kallas2023executing,tardieu2023reliable,mukherjee2019reliable}, and others use notions similar to observability functions~\cite{burckhardt2021durable}.
Another approach is to prove the proper restoration of applications to the exact configuration as before the crash~\cite{mogk2019fault}.
Our presented failure transparency proof shares similarities to the proof of the Asynchronous Barrier Snapshotting protocol~\cite{carbone2018thesis}, such as reasoning about causal orderings; however, our proof relies to a greater degree on abstraction in terms of refinement of models.

\subparagraph*{Distributed, Resilient Programming Models.}
Stateful dataflow has had a high impact~\cite{fragkoulis2023survey} through systems such as: MapReduce~\cite{dean2008mapreduce}, Apache Spark~\cite{zaharia2010spark,ZahariaCDDMMFSS12}, Apache Flink~\cite{carbone2015lightweight}, Google Dataflow~\cite{akidau2015dataflow}, IBM Streams~\cite{jacques2016consistent}, Portals~\cite{spenger2022portals}, and others~\cite{balazinska2005fault,shah2004highly}.
However, there are other notable resilient programming models and systems, including: Pregel, a graph-based system~\cite{malewicz2010pregel}, Resilient X10~\cite{cunningham2014resilient}, virtually resilient immortals~\cite{goldstein2020ambrosia}, fault-tolerant reactives~\cite{mogk2019fault}, thread-safe reactive programming~\cite{DBLP:journals/pacmpl/DrechslerMSM18}, Durable Functions~\cite{burckhardt2021durable}, stateful entities~\cite{DBLP:conf/edbt/PsarakisZFSK24}, the eXchange Calculus~\cite{DBLP:journals/jss/AudritoCDSV24}, and others~\cite{tardieu2023reliable,kallas2023executing,mukherjee2019reliable,DBLP:conf/cidr/CheungCHM21}.
In general, these resilient programming models provide system means to recover from failures, the user does not need to implement the failure recovery mechanisms themselves.
Actor models, in contrast, provide the users with manual failure-handling constructs.
For example, the failure-handling constructs in Erlang, such as actor monitors and supervision~\cite{Armstrong93}, have been used successfully for building reliable services within the telecom industry~\cite{armstrong1996erlang}.
Moreover, other programming models such as Argus~\cite{liskov1988distributed} and transactors~\cite{field2005transactors} provide constructs for transactions, which in turn can be used for building reliable services.

The formalization of distributed systems has been a long-standing research topic.
Notably, formalization frameworks such as TLA~\cite{lamport2002specifying} and I/O Automata~\cite{lynch1988introduction}, have been used to reason about distributed systems.
Examples of this include a dataflow system that was formalized using I/O Automata~\cite{lynch1989proof}.
The ABS protocol for stateful dataflow has been formalized with transition systems~\cite{carbone2018thesis}.
Recently, operational semantics have been used to model and reason about such systems~\cite{burckhardt2021durable,tardieu2023reliable,kallas2023executing,mukherjee2019reliable,haller2018programming}.

\subparagraph*{Failure Recovery.}
A general overview of rollback-recovery protocols was given by Elnozahy et al.~\cite{elnozahy2002survey}, comparing between checkpointing-based and logging-based protocols.
Stateful dataflow systems use either checkpointing, or a combination of the two~\cite{balazinska2005fault,shah2004highly,akidau2015dataflow,carbone2015lightweight,wang2019lineage,ZahariaCDDMMFSS12,dean2008mapreduce,jacques2016consistent}.
The MapReduce system performs failure recovery by detecting failed nodes, and replaying the computation from sources or from persisted intermediate results~\cite{dean2008mapreduce}.
Apache Spark, in contrast, improves the recovery by replaying from the sources through what is called lineage recovery~\cite{ZahariaCDDMMFSS12}.
A similar idea is used in a dynamic dataflow system within Ray~\cite{wang2019lineage}.
This paper focused on the ABS protocol used in Apache Flink, which, in contrast to previous works, uses an asynchronous checkpointing technique~\cite{carbone2015lightweight}.
It has been proven to provide high performance and has since been widely adopted~\cite{siachamis2024checkmate}.
The current version of Apache Flink's runtime offers an opt-in feature for ``unaligned checkpoints'', which allow the checkpoint markers to be treated at a higher priority, decreasing the end-to-end latency at the cost of some overhead as buffered events may become part of the snapshots~\cite{flinkunalignedcheckpoints}.
Other adaptations of the Flink protocol include Clonos~\cite{silvestre2021clonos}, which logs the nondeterminism to facilitate faster partial recovery after failures.
Failure recovery remains an open research topic, as it has great impact on the performance characteristics of fault-tolerant systems~\cite{siachamis2024checkmate}.

\section{Conclusions and Future Work}\label{sec:conclusions-and-future-work}
This paper studies failure transparency of stateful dataflow systems. We propose a novel definition of failure transparency for programming models expressed in small-step operational semantics.
For the definition of failure transparency we introduce observational explainability, a notion which resembles refinement but on the level of observations of executions.
We provide an implementation model of a stateful dataflow system using the Asynchronous Barrier Snapshotting protocol in a small-step operational semantics, and prove that the model is failure transparent and guarantees liveness.

In future work, we plan to implement a fully verified implementation of a stateful dataflow system based on the semantics presented in this paper, starting from our Coq mechanization.
Furthermore, we would like to apply our definitions to existing related work.

\bibliography{base}

\ifx\removeAppendix\undefined
\newpage

\appendix
\section{Syntax and Semantics}\label{appendix:sec:syntax-and-semantics}

\subsection{Syntax}\label{appendix:sec:syntax}

\begin{Figure}
\centering
\fcolorbox{colorGrayBase}{white}{\begin{minipage}{\textwidth-2\fboxsep-2\fboxrule}
\begin{minipage}{\textwidth}
\centering$\begin{array}[t]{c@{\hspace{14mm}}c@{\hspace{14mm}}c}
p,~q ~{~}~{~}\text{processor ID}
&s,~o ~{~}~{~}\text{stream name}
&n \in \mathbb{N} ~{~}~{~}\text{sequence number}
\end{array}$\end{minipage}\\\begin{minipage}{\textwidth}
\[\begin{array}[t]{c@{\hspace{10mm}}c@{\hspace{10mm}}c@{\hspace{10mm}}c}
\pi ~{~}~{~}\mbox{processor}
&\sigma ~{~}~{~}\mbox{state}
&d ~{~}~{~}\mbox{message data}
&D ~{~}~{~}\mbox{auxiliary data}
\end{array}\]\end{minipage}\\[1mm]\begin{minipage}{\textwidth}
\begin{minipage}{0.59\textwidth}
\[\begin{array}[t]{r@{\hspace{1mm}}r@{\hspace{1mm}}l@{\hspace{2mm}}l}
\Pi &::=& \seq{\pi}{p}{\size{\Pi}} &\mbox{processors}\\[0.1cm]
\Sigma &::=& \seq{\sigma}{p}{\size{\Pi}} &\mbox{states}\\[0.1cm]
M &::=& \seq{m}{i}{\size{M}} &\mbox{messages}\\[0.1cm]
N &::=& \seq{N_p}{p}{\size{\Pi}} &\mbox{sequence numbers}\\[0.1cm]
N_p &::=& \mapbuild{s}{n}{s} &\mbox{sequence numbers of $p$}\\[0.1cm]
\end{array}\]\end{minipage}%
\begin{minipage}{0.41\textwidth}
\[\begin{array}[t]{r@{\hspace{1mm}}r@{\hspace{1mm}}l@{\hspace{2mm}}l}
X &::=& \seq{x}{i}{\size{X}} &\mbox{actions}\\[0.1cm]
x &::=& &\mbox{action}\\
& &+\,s\,d     &\mbox{\small\it production}\\
&\mid &-\,s\,d &\mbox{\small\it consumption}\\[0.1cm]
m &::=& n\,s\,{d}    &\mbox{message}\\[0.1cm]
\end{array}\]\end{minipage}\end{minipage}
\end{minipage}}
\caption{Streaming syntax.}\label{appendix:fig:ssyntax}
\end{Figure}

\actionApplication*

\begin{Figure}
\centering
\fcolorbox{colorGrayBase}{white}{\begin{minipage}{\textwidth-2\fboxsep-2\fboxrule}
\begin{minipage}[t]{0.5\textwidth}
\centering$\begin{array}[t]{r@{\hspace{1mm}}r@{\hspace{1mm}}l@{\hspace{2mm}}l}
& &v, ~w &\text{value}\\[0.3cm]
\pi &::=&\texttt{TK}\langle\, f,\,\sseq{S_i}{i}{\size{S}},\,o \,\rangle &\text{task}\\[0.1cm]
a &::=& \smapbuild{e}{v}{e} &\text{snapshot archive}\\[0.1cm]
\sigma &::=& \langle\, a,\, \sigma_\mathrm{V} \,\rangle &\text{state}\\[0.1cm]
\sigma_\mathrm{V} &::= &~&\mbox{volatile state}\\
& &\texttt{fl} &\mbox{\small\it failed state}\\
&\mid&\langle\, e,\, v \,\rangle &\mbox{\small\it normal state}\\[0.1cm]
\end{array}$\end{minipage}%
\begin{minipage}[t]{0.5\textwidth}
\centering$\begin{array}[t]{r@{\hspace{1mm}}r@{\hspace{1mm}}l@{\hspace{2mm}}l}
& &e\in\mathbb{N} &\text{epoch number}\\[0.3cm]
d &::=& \langle\, e,\, d_\mathrm{C} \,\rangle &\text{message}\\[0.1cm]
d_\mathrm{C} &::=& &\mbox{message cases}\\
& &\texttt{EV}\langle\, w \,\rangle &\mbox{\small\it event}\\
&\mid&\texttt{BD} &\mbox{\small\it epoch border}\\[0.1cm]
D &::=& M_0 &\text{initial input messages}\\[0.1cm]
\end{array}$\end{minipage}
\end{minipage}}
\caption{Stateful dataflow syntax.}\label{appendix:fig:low-level-syntax}
\end{Figure}
\gceNumber*
\outputMessages*
\latestCommonSnapshot*

\subsection{Derivation Rules}\label{appendix:sec:rules}

\[\infer*[Right=S-Step]{
    \Pi_p ~\Vdash~ \Sigma_p \xrightarrow{X} \Sigma_p'
\\
    X(N_p,\, M) = (N'_p,\, M')
}{
    \langle\, \Pi,\, \Sigma,\, N,\, M,\, D \,\rangle \xRightarrow[p]{N_p, X} \langle\, \Pi,\, \Sigma\map{p}{\Sigma_p'},\, N\map{p}{N'_p},\, M',\, D \,\rangle
}\]
\[\infer*[Right=S-AbsX]{
    c \xRightarrow[p]{N_p, X} c'
}{
    c \xRightarrow[p]{} c'
}\quad\quad\quad\quad\quad\quad\infer*[Right=S-AbsP]{
    c \xRightarrow[p]{} c'
}{
    c \Rightarrow c'
}\]
\[\infer*[Right=I-Event]{
  f(v, w) = v', \seq{W'_i}{i}{n}
}{
  \texttt{TK}\langle\, f,\, S,\, o \,\rangle ~\Vdash~ \langle\, a,\, \langle\, e,\, v\,\rangle \,\rangle \xrightarrow{\ssingle{-\,S_j\,\langle\,e,\,\texttt{EV}\langle\, w \,\rangle\,\rangle} \concat \sseq{+\,o\,\langle\,e,\,\texttt{EV}\langle\, W'_i \,\rangle\,\rangle}{i}{n}} \langle\, a,\, \langle\, e,\, v'\,\rangle \,\rangle
}\]
\[\infer*[Right=I-Border]{~}{
    \texttt{TK}\langle\, f,\,\sseq{S_i}{i}{n},\,o \,\rangle ~\Vdash~ \langle\, a,\, \langle\, e,\, v \,\rangle \,\rangle \xrightarrow{\sseq{-\,S_i\,\langle\, e,\,\texttt{BD} \,\rangle}{i}{n} \concat \ssingle{+\,o\,\langle\, e,\,\texttt{BD} \,\rangle}} \langle\, a\map{e}{v},\, \langle\, e+1,\, v \,\rangle \,\rangle
}\]
\[\infer*[Right=F-Fail]{~}{
    \texttt{TK}\langle\, f, S, o \,\rangle ~\Vdash~ \langle\, a,\, \sigma_\mathrm{V} \,\rangle \rightarrow \langle\, a,\, \texttt{fl} \,\rangle
}\]
\[\infer*[Right=F-Recover]{
    \langle\, a,\, \texttt{fl} \,\rangle \in \Sigma
}{
    \langle\, \Pi,\, \Sigma,\, N,\, M,\, M_0 \,\rangle \Rightarrow \text{lcs}(\langle\, \Pi,\, \Sigma,\, N,\, M,\, M_0 \,\rangle)
}\]

\section{Definitions and Proofs}\label{appendix:sec:proofs}

\subsection{Properties of Observational Explainability}\label{appendix:sec:properties-of-observational-explainability}

We follow a structured style of proofs based on~\cite{lamport1995write,lamport2012write}. In this style, the goal is to \emph{Prove} a statement under given \emph{Assumptions}. The goal to be proven can be updated through a \emph{Suffices} statement. To eliminate quantifiers, we either introduce a \emph{New} variable ($\forall$), or \emph{Pick} a variable ($\exists$).

\monotonicity*\prflabel{lem:monotonicity}{prf:monotonicity}
\begin{proof}~

\assume
\singleitem{ $
      \forall\, c' \in \dom{T}.~ \forall c.~ c' T c \implies
      \forall [C_i]^n_i \in \mathbb{E}_{c}^{R}.~
      \exists\, [C'_j]^m_j \in \mathbb{E}_{c'}^{R'}.~\\
      \forall n' < n.~ \exists m' < m.~
      O(C_{n'}) = O'(C'_{m'})
      $ \label{asn:assumption_validhistory}
}

\prove
\singleitem{
    $
    \forall\, c' \in \dom{T}.~ \forall c.~ c' T c \implies
    \forall [C_i]^n_i \in \mathbb{E}_{c}^{R}.~
    \exists\, [C'_j]^m_j \in \mathbb{E}_{c'}^{R'}.~\\
    \exists \seq{h_k}{k}{n}.~ (\forall k < n.~ \forall k' \leq k.~ h_{k'} \leq h_k) ~ \land \\
    \forall n' < n.~ \exists m' = h_{n'} < m.~
    O(C_{n'}) = O'(C'_{m'})
    $
}

\begin{enumerate}
    \item \new $c' \in \dom{T}$.
    \item \new $c$ \suchthat $c' T c$.
    \item \new $\seq{C_i}{i}{n} \in \mathbb{E}_{c}^{R}$.
    \item \pick $\seq{C'_j}{j}{m} \in \mathbb{E}_{c'}^{R'}$ \suchthat
    \begin{enumerate}
        \item $\forall n' < n.~ \exists m' < m.~ O(C_{n'}) = O'(C'_{m'})$ \label{stp:validhistory_a}
        \proofby{Exists by the assumption.}
    \end{enumerate}
    \item \pick $\seq{h_k}{k}{n}$ \suchthat
    \begin{enumerate}
        \item $\forall k' < n.~ \exists m' < m.~ O(C_{k'}) = O'(C'_{m'}) \land h_{k'} = m'$ \label{stp:validhistory_b}
        \proofby{Such $m'$ exists by step \ref{stp:validhistory_a}, and thus we can construct the sequence with $h_{k'} = m'$.}
        \item $\forall k' < n. \forall k'' \leq k'.~ O(C_{k'}) = O(C_{k''}) \implies h_{k''} = h_{k'}$  \label{stp:validhistory_d}
        \proofby{If two $k'$ and $k''$ have the same observation, then we can pick the same $m'$ for both.}
    \end{enumerate}
    \item \suffices
        \begin{enumerate}
          \item $\forall k < n.~ \forall k' \leq k.~ h_{k'} \leq h_k$
          \item $\forall n' < n.~ \exists m' = h_{n'} < m.~
          O(C_{n'}) = O'(C'_{m'})$
        \end{enumerate}
    \item $\forall k < n.~ \forall k' \leq k.~ h_{k'} \leq h_k$
    \begin{enumerate}
        \item $\forall k < n.~ \forall k' \leq k.~ O(C_{k'}) \leq_{O} O(C_{k})$ \label{stp:validhistory_c}
        \item $\forall l < m.~ \forall l' \leq l.~ O'(C_{l'}) \leq_{O'} O'(C_{l})$
            \proofby{By the definition of the monotonicity of $O$ and $O'$.}
        \item \new $k < n$.
        \item \new $k' \leq k$.
        \item $O(C_{k'}) \leq_{O} O(C_{k})$
            \proofby{By step \ref{stp:validhistory_c}}
        \item $O'(C_{h_{k'}}) \leq_{O'} O'(C_{h_k})$
            \proofby{Follows from previous step by the equality of observations $O$ and $O'$ in step \ref{stp:validhistory_b}.}
        \item \case $O'(C_{h_{k'}}) =_{O'} O'(C_{h_k})$
        \begin{enumerate}
          \item $O(C_{k'}) =_O O(C_k)$
          \item $h_{k'} = h_k$
          \proofby{Follows step \ref{stp:validhistory_d} applied to previous step.}
          \item $h_{k'} \leq h_k$
        \end{enumerate}
        \item \case $O'(C_{h_{k'}}) <_{O'} O'(C_{h_k})$
        \begin{enumerate}
          \item $h_{k'} < h_k$
          \proofby{Follows from the definition of monotonicity of $O'$.}
          \item $h_{k'} \leq h_k$
        \end{enumerate}
        \qeditem
    \end{enumerate}
    \item $\forall n' < n.~ \exists m' = h_{n'} < m.~ O(C_{n'}) = O'(C'_{m'})$
    \begin{enumerate}
        \item \new $n' < n$.
        \item \pick $m' = h_{n'} < m$.
        \item $O(C_{n'}) = O'(C'_{m'})$
              \proofby{By construction of the sequence in step \ref{stp:validhistory_b} with $n'$ in place of $k'$.}
        \qeditem
    \end{enumerate}
    \qeditem
\end{enumerate}
\end{proof}

\transitivity*\prflabel{lem:transitivity}{prf:transitivity}
\begin{proof}~

\assume
\begin{enumerate}
    \item $R ~\,^{O}\negmedspace\xrightleftharpoons{T}\!^{O'}\, R'$ \label{asn:assumption_trans_1}
    \item $R' ~\,^{O'}\negmedspace\xrightleftharpoons{T'}\!^{O''}\, R''$ \label{asn:assumption_trans_2}
\end{enumerate}

\prove
\singleitem{
  $R ~\,^{O}\negmedspace\xrightleftharpoons{T \circ T'}\!^{O''}\, R''$
}

\suffices
\singleitem{
  $
  \forall\, c'' \in \dom{T'}.~ \forall c.~ c'' (T \circ T') c \implies
  \forall [C_i]^n_i \in \mathbb{E}_{c}^{R}.~
  \exists\, [C''_j]^m_j \in \mathbb{E}_{c''}^{R''}.~\\
  \forall n' < n .~ \exists m' < m .~
  O(C_{n'}) = O'(C''_{m'})
  $
}

\begin{enumerate}
    \item \new $c'' \in dom(T')$.
    \item \new $c'$ \suchthat $c'' T' c'$.
    \item \new $c$ \suchthat $c' T c$.
          \smallnote{$c'' T' c' \land c' T c$ therefore $c'' (T \circ T') c$.}
    \item \new $\seq{C_k'}{k}{l} \in \mathbb{E}_{c'}^{R'}$.
    \item \new $\seq{C_i}{i}{n} \in \mathbb{E}_{c}^{R}$.
    \item \pick $\seq{C_j''}{j}{m} \in \mathbb{E}_{c''}^{R''}$ \suchthat
          $
          \forall k' < l.~ \exists m' < m.~
          O'(C_{k'}) = O''(C''_{m'})
          $
          \label{stp:transitivity_a}
        \proofby{Exists by expanded definition of assumption \ref{asn:assumption_trans_2}.}
    \item \suffices
          $
          \forall n' < n .~ \exists m' < m .~
          O(C_{n'}) = O'(C''_{m'})
          $
          \proofby{By definition after elimination of quantifiers.}
    \item \new $n' < n$.
    \item \pick $k' < l$ \suchthat
    \begin{enumerate}
      \item $O(C_{n'}) = O'(C'_{k'})$ \label{stp:transitivity_b}
    \end{enumerate}
    \proofbyitem{Exists by assumption \ref{asn:assumption_trans_1}.}
    \item \pick $m'$ \suchthat
    \begin{enumerate}
      \item $m' < m$
      \item $O'(C_{k'}) = O''(C''_{m'})$ \label{stp:transitivity_c}
    \end{enumerate}
    \proofbyitem{Exists by step \ref{stp:transitivity_a}.}
    \item \suffices $O(C_{n'}) = O'(C''_{m'})$.
          \proofby{By definition after elimination of quantifiers.}
    \item $O(C_{n'}) = O'(C'_{k'}) = O''(C''_{m'})$.
          \proofby{Follows from steps \ref{stp:transitivity_b} and \ref{stp:transitivity_c}.}
    \qeditem
\end{enumerate}
\end{proof}

\composition*\prflabel{lem:composition}{prf:composition}
\begin{proof}~

\begin{enumerate}
  \item \new $O''$
  \item \new $R, O, T, O', R'$ \suchthat $\rotor{R}{O}{T}{O'}{R'}$
  \begin{enumerate}
    \item \expand $\forall c \in \dom{T}.~ \forall c'. c' T c \implies \forall [C_i]^n_i \in \mathbb{E}_{c}^{R}.~ \exists [C_j']^m_j \in \mathbb{E}_{c'}^{R'}.~ \forall n' < n.~ \exists m' < m.~ \highlight{O(C_{n'}) = O'(C_{m'})}$\label{stp:composition_observability_b}
  \end{enumerate}
  \item \suffices $\rotor{R}{O'' \circ O}{T}{O'' \circ O'}{R'}$
  \begin{enumerate}
    \item \expand $\forall c \in \dom{T}.~ \forall c'. c' T c \implies \forall [C_i]^n_i \in \mathbb{E}_{c}^{R}.~ \exists [C_j']^m_j \in \mathbb{E}_{c'}^{R'}.~ \forall n' < n.~ \exists m' < m.~ \highlight{O''(O(C_{n'})) = O''(O'(C_{m'}))}$
    \smallnote{This expanded form differs only by the highlighted regions to step \ref{stp:composition_observability_b}.}
  \end{enumerate}
  \item $\forall C, C'.~ O(C) = O'(C') \implies O''(O(C)) = O''(O'(C'))$\label{stp:composition_observability_a}
  \item $\rotor{R}{O}{T}{O'}{R'} \implies \rotor{R}{O'' \circ O}{T}{O'' \circ O'}{R'}$
        \proofby{By expansion of the definitions, it follows directly by application of step \ref{stp:composition_observability_a} to the highlighted regions.}
  \qeditem
\end{enumerate}
\end{proof}

\subsection{Properties of Traces and Causality}\label{appendix:sec:properties-of-streaming}
In this section we define traces and how they relate to executions, and further define a causal order relation on traces, with the goal to show that any causal-order preserving permutation of a trace is a valid trace.
In addition to the material presented in the main body of the paper (\Cref{sec:steps-traces-and-causality}), this section presents the definitions in more detail.

\traceDefinition*

\begin{definition}[Trace Application (Extended Definition)]\label{def:trace-from-execution-full}
A trace $Z$ of length $n$ applied to a configuration $c$ results in a sequence of configurations $C$ of length $n+1$, $Z(c) = C$, if $\forall i < n$:
\[\begin{array}{r@{~}l@{~}l@{~}c@{~}r@{\,}l}
& (\exists p, N_p, X.~& Z_i = \langle \mathsc{I-Event}, p, N_p, X \rangle &\land& \{ \mathsc{I-Event}, \mathsc{S-Step} \} &\vdash C_i \xRightarrow[p]{N_p, X} C_{i+1}) \\
\lor & (\exists p, N_p, X.~& Z_i = \langle \mathsc{I-Border}, p, N_p, X \rangle &\land& \{ \mathsc{I-Border}, \mathsc{S-Step} \} &\vdash C_i \xRightarrow[p]{N_p, X} C_{i+1}) \\
\lor & (\exists p.~& Z_i = \langle \mathsc{F-Fail}, p \rangle &\land& \{ \mathsc{F-Fail}, \mathsc{S-AbsX}, \mathsc{S-Step} \} &\vdash C_i \xRightarrow[p]{\phantom{N_p, X}} C_{i+1}) \\
\lor & (&Z_i = \langle \mathsc{F-Recover} \rangle &\land& \{ \mathsc{F-Recover} \} &\vdash C_i \xRightarrow{\phantom{N_p, X}} C_{i+1})
\end{array}\]
\end{definition}

\validTrace*

Note, that by definition of execution and trace application, any sequence of configurations produced by a trace application is an execution.

It is easy to show that any prefix of a valid trace is also a valid trace.
Still, this is a useful property, so we state it as a lemma.

\begin{restatable}[Validity of Prefix]{lemma}{validityPrefix}\label{lem:validity-prefix}
All prefixes of a valid trace from $c$ are valid traces from $c$.
\end{restatable}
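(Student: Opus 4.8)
The plan is to unfold the definitions of valid trace (\Cref{def:valid-trace}) and trace application (\Cref{def:trace-from-execution}, with the expanded \Cref{def:trace-from-execution-full}) and to observe that both are characterized by conditions quantified \emph{locally} over indices, so that restricting the index range to a prefix preserves them automatically. First I would fix a valid trace $Z$ of length $n$ from $c$, so that by \Cref{def:valid-trace} there exists an execution $C = Z(c) \in \executions{c}{\rulesI}$ of length $n+1$; by \Cref{def:trace-from-execution} this means that for every $i < n$ the derivation represented by $Z_i$ carries $C_i$ to $C_{i+1}$. I would then take an arbitrary prefix $Z'$ of $Z$, say of length $k \le n$ with $Z'_i = Z_i$ for all $i < k$, and exhibit the candidate applied sequence $C' = \seq{C_i}{i}{k+1}$, the corresponding prefix of $C$.

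The two things to check are (i) that $Z'(c) = C'$ and (ii) that $C' \in \executions{c}{\rulesI}$. For (i), the per-step applicability condition required by trace application for each $Z'_i$ with $i < k$ is literally the condition that already held for $Z_i$ against $C_i$ and $C_{i+1}$ in the original trace; since $C'$ reuses the very same configurations on indices $0,\dots,k$, the condition transfers verbatim, so $Z'$ is applicable to $c$ and yields $C'$. For (ii), the definition of execution requires $\rulesI \vdash C'_{i-1} \Rightarrow C'_i$ for all $i < k+1$, which is just the restriction of the same universally quantified property of $C$ to a smaller range of indices; hence a prefix of an execution is an execution. Combining (i) and (ii) gives a witnessing execution $C' = Z'(c) \in \executions{c}{\rulesI}$, so $Z'$ is valid from $c$.

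I do not expect a genuine obstacle here: the lemma is essentially bookkeeping, and the only point requiring care is making the restriction-of-quantifier argument precise against the exact formulations in \Cref{def:trace-from-execution-full} and the definition of executions, rather than appealing informally to ``a prefix of an execution is an execution.'' If a cleaner induction is preferred, I would instead induct on the number of steps removed from the tail of $Z$, with the base case $Z' = Z$ immediate and the inductive step dropping a single final trace step; dropping one step cannot invalidate any of the earlier per-step conditions, which again reduces to the same observation that applicability is a local property of each index.
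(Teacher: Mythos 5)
Your proposal is correct and matches the paper's own argument, which likewise takes the prefix $C'$ of the witnessing execution $C = Z(c)$ and observes that $Z'(c) = C'$ follows because the per-index applicability conditions restrict to the smaller range. The paper states this in two sentences; your version merely spells out the same bookkeeping in more detail.
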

\begin{proof}
This property follows directly from the definition of a valid trace. Given a valid trace $Z$ from $c$, there exists an execution $C$ starting from $c$ such that $Z(c) = C$, then we can simply construct an execution $C'$ and trace $Z'$ as prefixes from $C$ and $Z'$ respectively, such that $Z'(c) = C'$.
\end{proof}

Using the notion of traces, we can define a causal order relation on trace steps similar to the happens-before relation~\cite{DBLP:journals/cacm/Lamport78}.

\begin{definition}[Causal Order (Extended Definition)]\label{def:causal-order-full}
A step $Z_i$ happens before $Z_j$ with $i < j$ if one of the four cases apply:
\begin{bracketenumerate}
\item $Z_i = \mathsc{F-Recover} \lor Z_j = \mathsc{F-Recover}$ then $Z_i$ happens before $Z_j$, by total order on the recovery steps.
\item if they are not recovery steps, \ie for some $p$ and $p'$, $N_p, N_p', X, X'$:
\[Z_i = \langle \mathsc{I-Event}, p, N_p, X \rangle \lor Z_i = \langle \mathsc{I-Border}, p, N_p, X \rangle \lor Z_i = \langle \mathsc{F-Fail}, p \rangle, \text{~and}\]
\[Z_j = \langle \mathsc{I-Event}, p', N_p', X' \rangle \lor Z_j = \langle \mathsc{I-Border}, p', N_p', X' \rangle \lor Z_j = \langle \mathsc{F-Fail}, p' \rangle\]
and $p = p'$ then $Z_i$ happens before $Z_j$, by intraprocessor order.
\item if they are action-producing steps, \ie for some $p$ and $p'$, $N_p$ and $N_p'$, $X$ and $X'$:
\[Z_i = \langle \mathsc{I-Event}, p, N_p, X \rangle \lor Z_i = \langle \mathsc{I-Border}, p, N_p, X \rangle, \text{~and}\]
\[Z_j = \langle \mathsc{I-Event}, p', N_p', X' \rangle \lor Z_j = \langle \mathsc{I-Border}, p', N_p', X' \rangle\]
and there is a message $m$ produced by $Z_i$, that is
\[\exists M, M'', N_p''.~ X(N_p, M) = (N_p'', M'') \land m\in(M''\setminus M')\]
which is consumed by $Z_j$, that is
\[\forall M'.~ m\notin M' \implies (X'(N_p', M') ~\text{is undefined})\]
then $Z_i$ happens before $Z_j$, by interprocessor order.
\item if there exists a step $Z_k$ such that $Z_i$ happens before $Z_k$ and $Z_k$ happens before $Z_j$, then $Z_i$ happens before $Z_j$, by transitivity.
\end{bracketenumerate}
\lipicsEnd
\end{definition}

Reorderings of traces are captured as permutations.
We show that permutations that preserve the causal order relation are valid in the sense that they produce valid traces.

\begin{definition}[Trace Permutation]\label{def:permutation}
A trace $Z'$ of length $n$ is a permutation of another trace $Z$ of length $n$ if there is a bijection $f$ from $\dom{Z}$ to itself such that $\forall i\in\dom{Z}.~ Z_i = Z'_{f(i)}$.
\end{definition}

\begin{definition}[Trace Permutation Preserving a Relation]\label{def:permutationpreserving}
A trace $Z'$ of length $n$ is a permutation of another trace $Z$ of length $n$ preserving relation $R$ if, for a bijection $f$ defining the permutation, $\forall i, j.~ R(Z_i, Z_j) \implies R(Z_{f(i)}, Z_{f(j)})$.
\end{definition}

Finally, we state a lemma about the validity and similarity of permutations preserving causal order.
Intuitively, it follows from the fact that causally unrelated steps should not influence each other.

\validityCausal*\prflabel{lem:validity-similarity-causal}{prf:validity-similarity-causal}
\begin{proof}
The proof is done by first (1) proving the validity; and then (2) proving the similarity of causality-preserving permutations.

\proofsubparagraph{Validity.}
Given a trace $Z$ valid from $c$, and any causality-preserving permutation $Z'$ thereof such that $Z_i = Z'_{f(i)}$,
to show is that: $Z'$ is valid from $c$.
We show this by induction over the length of prefixes of $Z'$.
\begin{enumerate}
  \item Induction base: $\seq{Z'_i}{i}{0}$ is valid from $c$.
  \begin{enumerate}
    \item $\seq{Z'_i}{i}{0} = \varepsilon = \seq{Z_i}{i}{0}$ which is a prefix of $Z$, and, by \Cref{lem:validity-prefix}, is therefore valid from $c$.
  \end{enumerate}
  \item Induction step: assuming $\seq{Z'_i}{i}{j}$ is valid from $c$, to show is that $\seq{Z'_i}{i}{j+1}$ is valid from $c$.
  \begin{enumerate}
    \item By assumption, $\seq{Z'_i}{i}{j}$ is a valid trace from $c$, thereby there exists an execution $C'$ such that $\seq{Z'_i}{i}{j}(c) = C'$.
    \item Let's proceed by case-analysis on the $(j+1)$'th step $Z'_j$.
    It suffices to show that $Z'_j$ can be applied to configuration $C'_{j}$.
    \begin{enumerate}
      \item case $Z'_j = \langle \mathsc{I-Event}, p, N_p, X \rangle$
      \begin{enumerate}
        \item By assumption, $Z'$ is a causality-preserving permutation of $Z$.
        By rule (3) of the causal order relation for step $Z'_j$, any of the consumed messages in $X$ have been produced by previous steps and are thus available for consumption in $C_j$.
        By rule (2) of the causal order relation, all steps $Z_k$ with $k < f(j)$ on the same processor $p$ as $Z'_j$, appear once in the prefix of $Z'$ before $Z'_k$. That is, the intraprocessor order has been preserved.
        For this reason, the state $\Sigma_p$ of $p$ in configuration $C'_j$ is the same as the state of $\Sigma_p$ in $C_{f(j)}$ (for some $C$ with $Z(c) = C$).
        For these reasons, the step $Z'_j$ can be applied to configuration $C'_j$ for actions $X$ on processor $p$, as the consumed messages are available in $C'_j$.
      \end{enumerate}
      \item case $Z'_j = \langle \mathsc{I-Border}, p, X \rangle$, follows analogously.
      \item case $Z'_j = \langle \mathsc{F-Fail}, p \rangle$
      \begin{enumerate}
        \item The rule $\mathsc{F-Fail}$ is always applicable as can be seen by its definition; it does not depend on the configuration. For this reason the step $Z'_j$ can be applied to $C'_j$.
      \end{enumerate}
      \item case $Z'_j = \langle \mathsc{F-Recover} \rangle$
      \begin{enumerate}
          \item By assumption, $Z'$ is a causality-preserving permutation of $Z$.
          By rule (1) of the causal order relation for step $Z'_j$, for all other steps $Z'_k$ with $f(k) < f(j)$, we have that $k < j$.
          As the step was enabled in the trace $Z$, we know that some fail step $\langle \mathsc{F-Fail}, p \rangle$ must occur in one of these other steps before the recovery step. For this reason, the recovery step is enabled and $Z'_j$ can be applied to $C'_j$.
      \end{enumerate}
    \end{enumerate}
  \end{enumerate}
\end{enumerate}

\proofsubparagraph{Similarity.}
It suffices to show that, given any trace $Z$ of length $n$ valid from $c$, for which the second to last step is not causally ordered with respect to the last step, swapping these two steps will result in the same final configuration when applied to $c$.
That is, given $Z = Z_{prefix} : [Z_{n-2}] : [Z_{n-1}]$, we need to show that for $Z' = Z_{prefix} : [Z_{n-1}] : [Z_{n-2}]$ we have that $Z(c)_{n} = Z'(c)_{n}$.
In simpler terms, it suffices to show that $[Z_{n-2}] : [Z_{n-1}](Z_{prefix}(c)) = [Z_{n-1}] : [Z_{n-2}](Z_{prefix}(c))$.
The proof is done by case analysis on the last two steps.
By assumption, $Z_{n-2}$ and $Z_{n-1}$ are not causally ordered, therefore, by rule (1) neither of them can be an $\mathsc{F-Recover}$ step, by rule (2) the steps cannot be on the same processor, by rule (3) one of the steps may not be consuming messages produced by the other step, or by rule (4) there are no transitive causal dependencies between the steps.
This leaves the following two cases.
(Case 1) If one of them is an $\mathsc{F-Fail}$ step, then reordering the two steps will not affect the final configuration.
This is the case because the steps are on different processors, therefore the steps will not be influenced by the local state effect of either step, further, the $\mathsc{F-Fail}$ step does not consume or produce any messages, therefore it cannot influence the other step. For this reason, the application of these two steps will produce the same final configuration.
(Case 2) If the steps are either a $\mathsc{I-Event}$ or $\mathsc{I-Border}$ step on different processors. As the steps are on different processors, they do not influence each other in terms of the local processor state. However, a step on one processor may produce a message which is consumed later by another processor. Fortunately, this case is ruled out by rules (3) and (4). For this reason, either step will be enabled regardless of their order, and consume and produce the same messages. Thus, resulting in the same final configuration.
\end{proof}

\subsection{Properties of the Implementation Model}\label{appendix:sec:properties-of-the-low-level}
This section presents some lemmas about the implementation model which are later used in the proof of failure transparency.

Many of the lemmas center around border steps, in particular, the committing border step, and the non-committing border step.
The committing border step is the last border step for a given epoch, it occurs when all processors have executed the border step for the epoch.
Formally, we capture this step as:
$(\{\mathsc{S-Step}, \mathsc{I-Border}\} \vdash c \Rightarrow c') \land \mathrm{gce}(c) \neq \mathrm{gce}(c')$, that is, it is a border step in which the greatest common epoch number is increased.
A non-committing border step, on the other hand, does not change the greatest common epoch number.

\begin{lemma}\label{lem:committing-border-output}
All execution steps derivable in $I$, except for the committing border step, do not affect the observed output.
\end{lemma}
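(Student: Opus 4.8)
The plan is to observe that the observed output $\mathrm{out}(c)$ of a configuration $c = \configuration{\Pi}{\Sigma}{N}{M}{D}$ depends, by \Cref{def:output-messages}, only on the message set $M$ and on the number $\gce{c}$: it is exactly the set of messages in $M$ whose epoch does not exceed $\gce{c}$. I would therefore prove the lemma by a case analysis on the root rule of the execution step $c \Rightarrow c'$ (equivalently, on the kind of trace step of \Cref{def:trace}), and in each of the four non-excluded cases show $\mathrm{out}(c) = \mathrm{out}(c')$ by tracking what happens to $M$ and to $\gce{\cdot}$. Throughout I rely on a well-formedness invariant of reachable configurations: a processor in a normal volatile state $\langle e, v\rangle$ has a snapshot archive $a$ with contiguous domain $\{0, \ldots, e-1\}$ (so $\max(\dom{a}) = e-1$), and the initial messages satisfy $M_0 \subseteq M$. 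These are preserved by every rule and hold at each $c \in K$, so I would either establish them as a preliminary lemma or state the present lemma over reachable configurations.

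For the event step ($\mathsc{I-Event}$ under $\mathsc{S-Step}$) and the non-committing border step, the key inequality is that the epoch $e$ carried by every produced message strictly exceeds $\gce{c}$. Indeed the acting processor is at epoch $e$, so by the invariant its archive has maximum $e-1$, and since $\gce{c}$ is the minimum over all processors of these maxima, $\gce{c} \le e-1 < e$. The event step leaves all archives untouched, hence $\gce{c'} = \gce{c}$, and only adds messages to $M$ (consumption never removes any); all added messages have epoch $e > \gce{c}$ and are thus excluded from $\mathrm{out}(c')$, while the inclusion status of every previously present message is unchanged, giving $\mathrm{out}(c') = \mathrm{out}(c)$. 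The non-committing border step adds a single border message of epoch $e$ and raises the acting processor's archive maximum from $e-1$ to $e$, but by the definition of \emph{non-committing} we have $\gce{c'} = \gce{c}$; since $e > \gce{c'}$ the new border is again excluded, and the same argument yields $\mathrm{out}(c') = \mathrm{out}(c)$.

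The $\mathsc{F-Fail}$ step changes only the volatile component of one state to $\texttt{fl}$, leaving every archive and the message set $M$ intact; since $\gce{\cdot}$ reads only the archives, both $\gce{\cdot}$ and $M$ are unchanged, so $\mathrm{out}$ is unchanged. For $\mathsc{F-Recover}$ we have $c' = \lcs{c}$, and I would first compute $\gce{\lcs{c}} = \gce{c}$: by \Cref{def:latest-common-snapshot} each restored archive is $\mathrm{A}(a)$, whose domain is $\dom{a}$ truncated at $\gce{c}$; using contiguity and the fact that every processor's archive maximum is $\ge \gce{c}$, each truncated archive has maximum exactly $\gce{c}$, so their minimum is $\gce{c}$. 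The restored message set is $M_0 \cup \mathrm{out}(c)$, so $\mathrm{out}(\lcs{c}) = \{m \in M_0 \cup \mathrm{out}(c) \mid \mathrm{epoch}(m) \le \gce{c}\}$. Every element of $\mathrm{out}(c)$ already has epoch $\le \gce{c}$, so that part contributes exactly $\mathrm{out}(c)$; and by $M_0 \subseteq M$ every qualifying element of $M_0$ already lies in $\mathrm{out}(c)$. Hence $\mathrm{out}(\lcs{c}) = \mathrm{out}(c)$.

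The main obstacle is not any single case but securing the well-formedness invariant that ties a processor's current epoch to the domain of its archive: this is precisely what guarantees that freshly produced messages outrun the greatest common epoch, and it is also what makes the $\gce{\lcs{c}} = \gce{c}$ computation and the $M_0 \subseteq M$ step go through in the recovery case. I expect to discharge it by a routine induction over execution steps, checking that each of the seven rules preserves contiguity of archive domains, the relation $\max(\dom{a}) = e-1$ for normal states, and $M_0 \subseteq M$; once that is in hand, the four cases above are immediate.
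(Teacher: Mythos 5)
Your proof is correct and follows essentially the same route as the paper's: a case analysis on the root rule of the step, showing in each non-excluded case that neither $\mathrm{gce}$ nor the set of messages with epoch $\leq \mathrm{gce}$ changes, with the recovery case handled by computing $\mathrm{gce}(\mathrm{lcs}(c)) = \mathrm{gce}(c)$ and $M_0 \cup \mathrm{out}(c)$ restricted to epochs $\leq \mathrm{gce}(c)$ equal to $\mathrm{out}(c)$. The only difference is that you make explicit the well-formedness invariant (contiguous archive domain with $\max(\mathrm{dom}(a)) = e-1$ for a processor at epoch $e$, and $M_0 \subseteq M$) that the paper invokes only implicitly when it asserts that all processors have epoch number greater than $\mathrm{gce}(c)$ ``by definition of $\mathrm{gce}$''; this is a tightening of the same argument rather than a different approach.
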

\begin{proof}
The proof is done by case analysis of the rules in $\mathrm{I}$ given the configurations $c = \langle\, \Pi,\, \Sigma,\, N,\, M,\, M_0 \,\rangle$ and $c' = \langle\, \Pi',\, \Sigma',\, N',\, M',\, M_0' \,\rangle$.

To show is that: $(\{\mathsc{S-Step}, \mathsc{I-Event}, \mathsc{F-Fail}, \mathsc{F-Recover}\} \vdash c \Rightarrow c') \implies \mathrm{out}(c) = \mathrm{out}(c')$ and $(\{\mathsc{S-Step}, \mathsc{I-Border}\} \vdash c \Rightarrow c') \land \mathrm{gce}(c) = \mathrm{gce}(c') \implies \mathrm{out}(c) = \mathrm{out}(c')$.

\proofsubparagraph{Failure Step.}
The $\mathsc{F-Fail}$ rule does not introduce any change of output, since it does not change the epoch number nor the messages, and the output cannot be changed without changing the epoch number or the messages.
Our goal is to show that:
\[
(\{\mathsc{S-Step}, \mathsc{F-Fail}\} \vdash c \Rightarrow c') \implies
\mathrm{out}(c) = \mathrm{out}(c')
\]
First, let's show that:
\[
(\{\mathsc{S-Step}, \mathsc{F-Fail}\} \vdash c \Rightarrow c') \implies
\mathrm{gce}(c) = \mathrm{gce}(c') \land M = M'
\]
Since the only local step is $\mathsc{F-Fail}$, the only possible execution step is:
\[\infer*{
    \Pi_p = \texttt{TK}\langle\, f, S, o \,\rangle
\\
    \Sigma_p = \langle\, a,\, \sigma_\mathrm{V} \,\rangle
\\\\
    X = \varepsilon
\\
    \Sigma_p' = \langle\, a,\, \texttt{fl} \,\rangle
\\
    (N'_p,\, M') = X(N_p,\, M)
}{
    \langle\, \Pi,\, \Sigma,\, N,\, M,\, D \,\rangle \Rightarrow \langle\, \Pi,\, \Sigma\map{p}{\Sigma_p'},\, N\map{p}{N'_p},\, M',\, D \,\rangle
}\]
By the definition of action application, $\varepsilon(N_p,\, M) = (N_p,\, M)$, therefore $M = M'$. $\Sigma' = \Sigma\map{p}{\langle\, a,\, \texttt{fl} \,\rangle}$ where $\Sigma_p = \langle\, a,\, \sigma_\mathrm{V} \,\rangle$. From this follows that:
\[\forall q, a, a', \sigma_\mathrm{V}, \sigma_\mathrm{V}'.~ \Sigma_p = \langle\, a,\, \sigma_\mathrm{V} \,\rangle \land \Sigma_p' = \langle\, a',\, \sigma_\mathrm{V}' \,\rangle \implies a = a'\]
Further, by expanding the definitions, we can show that the $\mathrm{gce}$ is not changed:
\begin{align*}
\mathrm{gce}(c)
    &= \mathrm{min}\setbuild{\mathrm{max}(\mathrm{dom}(a))}{\Sigma_p  = \langle\, a,\, \sigma_\mathrm{V} \,\rangle}\\
    &= \mathrm{min}\setbuild{\mathrm{max}(\mathrm{dom}(a))}{\Sigma_p' = \langle\, a,\, \sigma_\mathrm{V} \,\rangle}\\
    &= \mathrm{gce}(c')
\end{align*}
From which follows our intended result:
\[
(\{\mathsc{S-Step}, \mathsc{F-Fail}\} \vdash c \Rightarrow c') \implies
\mathrm{gce}(c) = \mathrm{gce}(c') \land M = M'
\]
Next, it suffices to show that the observable output is not changed given that the $\mathrm{gce}$ and the messages are not changed:
\[
\mathrm{gce}(c) = \mathrm{gce}(c') \land M = M' \implies \mathrm{out}(c) = \mathrm{out}(c')
\]
The conclusion follows directly from the premises and the definition of $\mathrm{out}$:
\begin{align*}
\mathrm{out}(c) = \setbuild{n\,s\,\langle\, e,\, d \,\rangle}{(n\,s\,\langle\, e,\, d \,\rangle) \in M \land e \leq \mathrm{gce}(c)} &=
\\
\setbuild{n\,s\,\langle\, e,\, d \,\rangle}{(n\,s\,\langle\, e,\, d \,\rangle) \in M' \land e \leq \mathrm{gce}(c')} &= \mathrm{out}(c')
\end{align*}

\proofsubparagraph{Event Step.}
The $\mathsc{I-Event}$ rule does not affect the greatest common epoch number (gce), nor does it produce any messages with an epoch number lower than the gce.
For this reason, by the definition of $\mathrm{out}$, the output is not changed.

\proofsubparagraph{Non-Committing Border Step.}
A non-committing (local) border step, \ie an execution step for which $(\{\mathsc{S-Step}, \mathsc{I-Border}\} \vdash c \Rightarrow c') \land \mathrm{gce}(c) = \mathrm{gce}(c')$, does not affect the observable output.
By assumption, the $\mathrm{gce}$ is not changed. Further, no new messages with $e \leq \mathrm{gce}(c)$ can be produced, since all processors have epoch number $e > \mathrm{gce}(c)$ by definition of $\mathrm{gce}$, therefore these messages are not included in $\mathrm{out}$.

\proofsubparagraph{Recovery Step.}
The recovery rule does not change the observed output. Here we should reason about the latest common snapshot ($\mathrm{lcs}$), in particular, that it preserves the output messages.
Our goal is to show that:
\[
(\{\mathsc{F-Recover}\} \vdash c \Rightarrow c') \implies
\mathrm{out}(c) = \mathrm{out}(c')
\]
By applying the recovery rule, we get that:
\[
c' = \mathrm{lcs}(c)
\]
By the definition of $\mathrm{lcs}$, it the $\mathrm{gce}$ is not changed, \ie $\mathrm{gce}(c') = \mathrm{gce}(lcs(c)) = \mathrm{gce}(c)$, since $\forall \langle a', \sigma_\mathrm{V}' \rangle \in \Sigma'.~ \dom{a'} = \set{\mathrm{gce}(c)}$.
Further, $M' = M_0 \cup out(c)$, leaving the events with epoch number $e \leq \mathrm{gce}(c)$ unchanged. Therefore, the observed output is not changed, \ie $\mathrm{out}(c) = \mathrm{out}(c')$.
\end{proof}

\begin{lemma}\label{lem:epoch-order}
If a trace $Z$ consists of a sequence of $\mathsc{S-Step}$s (\ie excluding $\mathsc{F-Recover}$ steps) valid from a configuration $c$, then a trace step $Z_i$ with epoch number $e$ cannot happen before a trace step $Z_j$ with epoch number $e'$ for which $e' < e$.
\end{lemma}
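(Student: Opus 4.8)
The plan is to prove the contrapositive through a single monotonicity property: whenever $Z_i$ happens before $Z_j$ in a recovery-free valid trace, the epoch number of $Z_i$ is at most the epoch number of $Z_j$. The lemma follows immediately, since $e' < e$ would contradict $\mathrm{epoch}(Z_i) = e \leq e' = \mathrm{epoch}(Z_j)$. I would argue this by induction on the derivation of the happens-before relation of \Cref{def:causal-order-full}. Because $Z$ contains no $\mathsc{F-Recover}$ steps by hypothesis, case~(1) of that definition never applies, so only the intraprocessor base case~(2), the interprocessor base case~(3), and transitivity~(4) remain to be treated.

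For the interprocessor case I would appeal directly to the shape of the rules. By inspection of $\mathsc{I-Event}$ and $\mathsc{I-Border}$, a step taken at epoch $e$ consumes and produces only messages tagged with epoch $e$: the event step consumes $\langle e, \texttt{EV}\langle w\rangle\rangle$ and emits $\langle e, \texttt{EV}\langle W'_i\rangle\rangle$, and the border step consumes $\langle e, \texttt{BD}\rangle$ and emits $\langle e, \texttt{BD}\rangle$. Hence if $Z_i$ produces a message consumed by $Z_j$, that message carries the common epoch of both steps, giving $\mathrm{epoch}(Z_i) = \mathrm{epoch}(Z_j)$; note that $\mathsc{F-Fail}$ neither produces nor consumes messages, so it never participates in this case. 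For the intraprocessor case I would establish that along the steps of a single processor $p$ the epoch is non-decreasing: an $\mathsc{I-Event}$ step leaves the epoch unchanged, an $\mathsc{I-Border}$ step increments it by one, and (as discussed below) an $\mathsc{F-Fail}$ step is terminal. Since happens-before on the same processor forces $i < j$, and steps on $p$ occur in trace order within a valid trace, this yields $\mathrm{epoch}(Z_i) \leq \mathrm{epoch}(Z_j)$. Transitivity is then immediate from the induction hypothesis.

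The main obstacle is the treatment of $\mathsc{F-Fail}$ in the intraprocessor case, where the epoch is not carried into the $\texttt{fl}$ state. The key observation, which I would make explicit using validity, is that in a recovery-free trace a failed processor is terminal for processing: both $\mathsc{I-Event}$ and $\mathsc{I-Border}$ require a normal state $\langle e, v\rangle$ and are therefore inapplicable once $p$ has failed, and only $\mathsc{F-Recover}$ could restore it, which is excluded. Consequently any $\mathsc{F-Fail}$ step on $p$ is followed only by further (idempotent) $\mathsc{F-Fail}$ steps, so assigning it the epoch held immediately before the crash keeps the per-processor sequence non-decreasing and preserves the monotonicity argument. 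Collecting the three cases establishes the monotonicity property, and the contrapositive delivers the lemma.
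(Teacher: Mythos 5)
Your proposal is correct and takes essentially the same route as the paper: both arguments rest on per-processor epoch monotonicity for case~(2), the fact that a step at epoch $e$ only produces and consumes messages tagged $e$ for case~(3), and closure under transitivity for case~(4), with case~(1) excluded by the recovery-free hypothesis. Your explicit treatment of $\mathsc{F-Fail}$ (assigning it the pre-crash epoch and noting the processor is terminal for processing absent recovery) is a slightly more careful handling of a point the paper glosses over with ``for event steps and failure steps it stays the same,'' but it does not change the structure of the argument.
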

\begin{proof}
By the definition of local steps in $I$, the epoch number of a processor can only grow.
In particular, for event steps and failure steps it stays the same, and for border steps it is incremented.
Therefore, if the steps are on the same processor, by case (2) of \Cref{def:causal-order-full}, for epoch numbers $e'$ and $e$ such that $e' < e$, any $e'$ step happens before all $e$ steps on the same processor.
Further, if the steps are on different processors, by (3, 4) of \Cref{def:causal-order-full}, if $e' < e$ then $Z_i$ cannot happen before $Z_j$, as: (a) any messages emitted by $Z_i$ cannot be received by $Z_j$ as $e' \neq e$; and (b) nor can there exist a transitive causal dependency, as neither of the rules (2, 3) allow a direct dependency such that a higher epoch-numbered step happens before a lower epoch-numbered step.
\end{proof}

\begin{lemma}\label{lem:reordering-output}
If a trace $Z$ is valid from a configuration $c$, and $Z'$ is a causality-preserving permutation of $Z$, such that the permutation preserves the total order of all trace steps with an epoch number less than $e$, then the observable outputs obtained by the committing border steps of all epochs below $e$ are equal for $Z$ and $Z'$.
\end{lemma}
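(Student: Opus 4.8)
The plan is to reduce the statement to a claim about the set of messages with epoch at most $\hat{e}$ for each $\hat{e} < e$, and to show that this set is identical in $Z$ and $Z'$ at the moment epoch $\hat{e}$ commits. Recall from \Cref{def:output-messages} that for a configuration $c$ with $\gce{c} = \hat{e}$ the observable output $\mathrm{out}(c)$ is precisely the set of messages of $M$ whose epoch is $\leq \hat{e}$, and that the committing border step for epoch $\hat{e}$ is the step after which $\mathrm{gce}$ first equals $\hat{e}$. First I would record two monotonicity facts about $\mathrm{gce}$: along a trace of $\mathsc{S-Step}$s it is non-decreasing (a border step only ever enlarges a processor's archive via $a\map{e}{v}$), and it grows by at most one per step (a single $\mathsc{I-Border}$ step raises exactly one processor's maximal archived epoch by one), so $\mathrm{gce}$ cannot skip a value. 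Combined with \Cref{lem:validity-similarity-causal}, which guarantees that $Z$ and $Z'$ are both valid from $c$ and end in the same configuration, hence with the same final $\mathrm{gce}$, this ensures that for every $\hat{e} < e$ reached by the trace the committing border step for $\hat{e}$ is well defined in both $Z$ and $Z'$.

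The core observation is that the set of epoch-$\leq\hat{e}$ messages is frozen once epoch $\hat{e}$ has committed. After the committing border step for $\hat{e}$ every processor has stored $\hat{e}$ in its archive and therefore sits in an epoch strictly greater than $\hat{e}$; by inspection of $\mathsc{I-Event}$ and $\mathsc{I-Border}$ each subsequent production carries the producer's current epoch and so has epoch $> \hat{e}$, while consumption (\Cref{def:action-application}) never removes messages from $M$. Hence no step after the committing border for $\hat{e}$ can add or delete a message of epoch $\leq \hat{e}$: the epoch-$\leq\hat{e}$ fragment of $M$ at the committing border equals that of the final configuration. Applying this to both traces and invoking the common final configuration from \Cref{lem:validity-similarity-causal}, the two outputs coincide, which already equates $\mathrm{out}$ at the committing borders of $\hat{e}$ in $Z$ and $Z'$.

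The hypothesis that the permutation preserves the total order of all steps of epoch $< e$ is what lets me match the two committing borders cleanly and re-derive the freezing locally: since every epoch-$\leq\hat{e}$ message is produced by an epoch-$\leq\hat{e}$ step, and each processor's epoch-$\leq\hat{e}$ steps occur in the same relative order in $Z$ and $Z'$, the produced messages together with their sequence numbers, which depend only on the per-processor history through \Cref{def:action-application}, are identical; the interleaved epoch-$\geq e$ steps are irrelevant to this fragment by the production-epoch argument above and by \Cref{lem:committing-border-output}, which already tells us that only committing border steps alter $\mathrm{out}$. I would then conclude by ranging over all $\hat{e} < e$.

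I expect the main obstacle to be the bookkeeping around the committing border step itself: establishing that $\mathrm{gce}$ increases by exactly one per step so that each epoch below $e$ genuinely has a committing border in both traces, and pinning down precisely why an epoch-$\geq e$ step moved by the permutation cannot perturb the epoch-$\leq\hat{e}$ message set. Both rest on the epoch-monotonicity captured in \Cref{lem:epoch-order} together with the fact that a processor's produced epoch equals its current epoch, so the argument is conceptually simple but demands care in the case analysis over the four rule forms.
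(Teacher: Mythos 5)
Your argument is correct, but it takes a genuinely different route from the paper's. The paper proves the lemma by constructing, for each committing border step of an epoch $e' < e$, an auxiliary causality-preserving reordering that pushes every higher-epoch step occurring before that border step to just after it; it then chains \Cref{lem:epoch-order}, \Cref{lem:validity-similarity-causal}, and \Cref{lem:committing-border-output} to conclude that the output at the committing border is unchanged by this normalization, and (implicitly) uses the order-preservation hypothesis to identify the normalized prefixes of $Z$ and $Z'$. You instead prove a ``freezing'' invariant: once $\mathrm{gce}$ reaches $\hat{e}$, the epoch-$\leq\hat{e}$ fragment of $M$ can no longer change --- productions carry the producer's current epoch, which exceeds $\hat{e}$ once every processor has archived $\hat{e}$; consumption never deletes from $M$; and $\mathsc{F-Recover}$ retains $M_0 \cup \mathrm{out}(c)$ and hence every message of epoch $\leq \mathrm{gce}(c)$ --- so the output at the committing border for $\hat{e}$ equals the epoch-$\leq\hat{e}$ fragment of the final configuration, which \Cref{lem:validity-similarity-causal} makes common to $Z$ and $Z'$. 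This is more direct (no auxiliary permutation), and notably your second paragraph already closes the proof without ever invoking the hypothesis that the permutation preserves the order of epoch-$<e$ steps; the ``local re-derivation'' in your third paragraph is therefore redundant, and you are in fact establishing a slightly stronger statement. What each approach buys: the paper's reduction to a canonical prefix reuses machinery needed elsewhere in the main theorem (steps 2--4 of the construction), whereas your route trades that for two auxiliary invariants that should be stated and proved explicitly if adopted --- that a non-failed processor's volatile epoch always exceeds its maximal archived epoch (hence exceeds $\mathrm{gce}$), and that $\mathrm{gce}$ is non-decreasing and never skips a value; both follow from well-formedness of initial configurations and inspection of the rules, at a level of rigor comparable to the paper's own lemmas.
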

\begin{proof}
Consider one of the committing border steps for epoch $e'$ with $e' < e$.
We can reorder the trace by moving all trace steps for epochs larger than $e'$ that occur before the committing border step directly after the committing border step.
By \Cref{lem:epoch-order}, this is a causality-preserving permutation, as none of the moved trace steps can happen before the committing border step nor any other step for epochs less than or equal to $e'$.
By \Cref{lem:validity-similarity-causal}, we know that the configuration obtained (i) after the committing border step before the reordering; and (ii) after the last moved step after the reordering, are the same.
Furthermore, by \Cref{lem:committing-border-output} we know that output function is not affected by non-committing (non-global) border steps.
As none of the moved steps can be global committing border steps, we know that the output function's result must be the same (after the reordering) for the committing border step and the last moved step.
And therefore we know that the output function at the committing border step for $e'$ must be the same after the reordering as before the reordering.
This procedure can be repeated for every committing border step up the $e'$, and therefore each committing border step will have the same observable output before as after the reordering.
\end{proof}

\begin{lemma}\label{lem:snapshot-change}
The latest common snapshot is only changed by committing border steps.
\end{lemma}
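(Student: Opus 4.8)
The plan is to prove the statement in its direct form: for every step $c \Rightarrow c'$ derivable in $\rulesI$ that is \emph{not} a committing border step, $\mathrm{lcs}(c) = \mathrm{lcs}(c')$. By \Cref{def:latest-common-snapshot}, the configuration $\mathrm{lcs}(c)$ is completely determined by five pieces of data: the processor vector $\Pi$; the restored states $\Sigma'$, which depend only on $\gce{c}$ and on the snapshot archives restricted to epochs $\le \gce{c}$ (and, crucially, not on the volatile states at all); the restored sequence numbers $N'$, a function of $\mathrm{out}(c)$ and the fixed processor/stream structure; the restored messages $M_0 \cup \mathrm{out}(c)$, a function of $\mathrm{out}(c)$ and $M_0$; and $M_0$ itself. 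Hence it suffices to show that each of these five pieces agrees between $c$ and $c'$. First I would dispatch the easy ones: $\Pi$ and $M_0$ are syntactically untouched by every rule (inspection of \Cref{appendix:sec:rules}), and \Cref{lem:committing-border-output} already establishes $\mathrm{out}(c) = \mathrm{out}(c')$ for every non-committing step, which immediately yields equality of $N'$ (since $N'_p(s) = \size{\mathrm{out}(c)\downarrow s}$) and of $M_0 \cup \mathrm{out}(c)$. The entire remaining burden is therefore to show $\Sigma' = \Sigma''$, i.e.\ that $\gce{c} = \gce{c'}$ and that every archive is unchanged on the epochs $\le \gce{c}$.

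I would then proceed by case analysis on the local rule underlying the step. For $\mathsc{F-Fail}$ and $\mathsc{I-Event}$ the snapshot archives are untouched, so $\gce$ is preserved and the archive restrictions are trivially equal; $\mathsc{F-Fail}$ alters only the volatile state (which $\Sigma'$ discards), and $\mathsc{I-Event}$ alters only the volatile value and emits messages of the current epoch. For a non-committing $\mathsc{I-Border}$ step the non-committing hypothesis gives $\gce{c} = \gce{c'}$ directly, and the sole archive change is the stepping processor gaining epoch $e$; invoking the well-formedness invariant that a processor's current epoch equals $\max(\dom{a}) + 1$ together with $\max(\dom{a}) \ge \gce{c}$ (from \Cref{def:gce-number}), we obtain $e > \gce{c}$, so the newly archived entry lies strictly above $\gce{c}$ and neither the restriction $\mathrm{A}(a)$ nor the value $a(\gce{c})$ changes. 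Thus $\Sigma'$ is preserved in all these cases.

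The main obstacle is the $\mathsc{F-Recover}$ case, where $c' = \mathrm{lcs}(c)$ and the claim reduces to the \emph{idempotency} $\mathrm{lcs}(\mathrm{lcs}(c)) = \mathrm{lcs}(c)$. Here I would first observe that the restored archives $\mathrm{A}(a)$ have maximum exactly $\gce{c}$, so $\gce{\mathrm{lcs}(c)} = \gce{c}$; reapplying $\mathrm{A}$ and reading off $a(\gce{c})$ then reproduces $\Sigma'$ unchanged. The delicate point is showing $\mathrm{out}(\mathrm{lcs}(c)) = \mathrm{out}(c)$: the message set of $\mathrm{lcs}(c)$ is $M_0 \cup \mathrm{out}(c)$, so its output is the set of messages in $M_0 \cup \mathrm{out}(c)$ of epoch $\le \gce{c}$, and to collapse the $M_0$ contribution into $\mathrm{out}(c)$ I rely on the reachability invariant $M_0 \subseteq M$ (consumption never deletes messages, and recovery restores $M_0 \cup \mathrm{out}(c) \supseteq M_0$), which guarantees that every initial message of epoch $\le \gce{c}$ already appears in $\mathrm{out}(c)$. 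I expect establishing — or citing — these well-formedness invariants of reachable configurations (contiguity of archive domains, current-epoch-equals-$\max$-plus-one, and $M_0 \subseteq M$) to be the real work; once they are in hand, every case is a short computation with the definitions of $\gce$, $\mathrm{out}$, and $\mathrm{lcs}$.
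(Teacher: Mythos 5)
Your proposal is correct and follows essentially the same route as the paper's proof: both reduce the claim to the observation that $\mathrm{lcs}$ depends only on $\mathrm{gce}$, the archive entries at epochs $\leq \mathrm{gce}$, and $\mathrm{out}$, and both discharge the output component by citing the lemma that only committing border steps change $\mathrm{out}$. Your write-up is more explicit than the paper's in two places — the $\mathsc{F-Recover}$ case phrased as idempotency of $\mathrm{lcs}$, and the well-formedness invariants (contiguous archive domains, current epoch equal to $\max(\mathrm{dom}(a))+1$, and $M_0 \subseteq M$) that the paper uses only implicitly — but these are refinements of the same argument rather than a different one.
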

\begin{proof}
By definition of the latest common snapshot ($\mathrm{lcs}$), it is only affected by changes to $N$ and $\Sigma$ in $c$, as well as by $\mathrm{out}(c)$.
With respect to the sequence numbers $N$, the computed $\mathrm{lcs}$ solely depends on the domain of $N$, which remains the same for all steps as it is a map from processors to sequence numbers.
With respect to the states $\Sigma$, the computed $\mathrm{lcs}$ solely depends on the archives $a$ of the states for epochs less than the $\mathrm{gce}$. The archives of epochs less than $\mathrm{gce}$ are never changed, as no processor can have an epoch number less than $\mathrm{gce}$. Further, as was shown in the proof of \Cref{lem:committing-border-output}, the only steps changing the greatest common epoch are the committing border steps. Therefore, changes to $\Sigma$ do not affect the latest common snapshot.
Lastly, $\mathrm{out}(c)$ is only changed by the committing border steps (\Cref{lem:committing-border-output}).
Therefore, the latest common snapshot is only changed by the committing border steps.
\end{proof}

\begin{lemma}\label{lem:snapshot-small}
If a trace $Z$ of $\mathsc{S-Step}$s is valid from a configuration $c$, such that the last trace step is a committing border step with epoch number $e$, and epoch numbers of all trace steps are less than or equal to $e$, then the last configuration of the execution $\seq{C}{i}{n} = Z(c)$ is equal to its latest common snapshot, \ie $C_{n-1} = \mathrm{lcs}(C_{n-1})$.
\end{lemma}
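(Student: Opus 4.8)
The plan is to prove $C_{n-1} = \mathrm{lcs}(C_{n-1})$ by unfolding \Cref{def:latest-common-snapshot} for $C_{n-1} = \langle \Pi, \Sigma, N, M, M_0 \rangle$ and verifying directly that $C_{n-1}$ already satisfies each component equation that $\mathrm{lcs}$ imposes; the $\Pi$ and $M_0$ components are preserved by every processing step, so only the equations on $\Sigma$, $M$, and $N$ require work. The starting point is to show $\gce{C_{n-1}} = e$: a committing border step is by definition the one that raises the greatest common epoch, so after the last step every processor has archived epoch $e$, and none can have archived a higher epoch since that would need a step of epoch $> e$, which $Z$ does not contain. I take $Z$ to consist of $\mathsc{I-Event}$ and $\mathsc{I-Border}$ steps only (as is the case for the generations it is applied to); note that a surviving $\mathsc{F-Fail}$ would leave a processor in state $\texttt{fl}$ and break the $\Sigma$ equation, so its absence is essential.

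The engine of the proof is a well-formedness invariant carried along $Z$ --- valid at the clean configuration $c$ from which the generation begins and preserved by $\mathsc{I-Event}$ and $\mathsc{I-Border}$ --- stating that on each stream the present messages occupy a prefix of sequence numbers ordered by epoch with every epoch's border last (FIFO epoch-ordering), and that each processor's archive, volatile state, and sequence numbers agree. Granting this invariant and $\gce{C_{n-1}} = e$, the $\Sigma$ equation follows: no archive holds an epoch above $e$, hence $a = \mathrm{A}(a)$ for every processor; and since the committing step is the last processor to finish epoch $e$ while all others finished it earlier and then necessarily idled (a further step would be at epoch $e+1$), each volatile state equals $\langle e+1, a(e) \rangle = \langle \gce{C_{n-1}}+1, a(\gce{C_{n-1}}) \rangle$, because $\mathsc{I-Border}$ archives the current value and carries it unchanged into the next epoch. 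For the message equation I use that $M$ only grows under processing steps (consumption leaves $M$ fixed, production appends), so $M$ is $M_0$ together with all produced messages; every produced message has epoch $\leq e = \gce{C_{n-1}}$ and thus lies in $\mathrm{out}$, whereas any message of epoch $> \gce{C_{n-1}}$ must originate in $M_0$, yielding $M = M_0 \cup \mathrm{out}(C_{n-1})$.

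The main obstacle is the sequence-number equation $N = N'$, \ie $N_p(s) = |\mathrm{out}(C_{n-1})\downarrow s|$ for every $p$ and every $s \in \dom{N_p}$. For an output stream this is routine, as the stream starts empty and $N_p(o)$ counts exactly the messages $p$ produced, all of epoch $\leq e$. The difficult case is an input stream, where I would invoke epoch alignment: because the committing border step completes epoch $e$ for the final processor, \emph{every} processor has consumed the epoch-$e$ border from each of its inputs, and hence --- using FIFO epoch-ordering and the fact that $\mathsc{I-Event}$ blocks on borders --- precisely the prefix of messages of epoch $\leq e$, which is exactly $\mathrm{out}(C_{n-1})\downarrow s$. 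Making this airtight is the delicate part: it needs the ordering invariant so that the epoch-$\leq e$ messages form a genuine consumed prefix rather than being interleaved with later ones, and it needs care with source streams, whose $M_0$ may retain epoch-$> e$ messages that are correctly omitted from both $N_p(s)$ and $\mathrm{out}(\cdot)\downarrow s$. I would therefore isolate the FIFO epoch-ordering property as a separate preservation lemma and reduce the alignment claim to the simultaneous-consumption semantics of $\mathsc{I-Border}$.
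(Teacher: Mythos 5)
Your proposal takes essentially the same route as the paper's proof: both unfold the definition of $\mathrm{lcs}$ and verify the $\Sigma$, $N$, and $M$ components directly, using that every processor's final step is its epoch-$e$ border step (so each volatile state is $\langle e+1, a(e)\rangle$ with $e = \gce{C_{n-1}}$) and that all produced messages have epoch at most $e$. The paper's version is terser --- it asserts the consumed-prefix property for input streams without isolating the well-formedness/FIFO-epoch-ordering invariant you make explicit --- but the decomposition and the key observations are identical.
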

\begin{proof}
By inspection of the definition of $\mathrm{lcs}$, it suffices to show that the state ($\Sigma$), the sequence numbers ($N$), and the messages ($M$) of the last configuration $C_{n-1}$ are the same as obtained by the latest common snapshot.
($\Sigma$) The last trace step for each processor in $Z$ must have been a border step for epoch $e$. This border step saves the current state to the archive for $e$, and increments the epoch number on the processor to $e + 1$. As can be seen by the definition of $\mathrm{lcs}$, the $\mathrm{lcs}$ computes the same archive and same local state, as we know that $e = \mathrm{gce}(C_{n-1})$.
($N$) The sequence numbers for each processor are incremented by one for every consumed or produced message. As can be seen in the definition of $\mathrm{lcs}$, the computed sequence numbers correspond to the number of output messages on each stream. As a stream has at most one producer, the computed sequence number corresponds to the number of produced messages. For consumed messages, we know that all messages of epochs smaller than the $\mathrm{gce}$ must have been consumed, therefore the computed sequence number is also equal.
($M$) The $\mathrm{lcs}$ computes the new set of messages from the observed output messages $\mathrm{out}$. As all produced messages in $Z$ have epoch numbers less than or equal to $e$, all produced messages are included in the observed output $\mathrm{out}(C_{n-1})$. Therefore, the computed messages are the same as the messages in the configuration $C_{n-1}$.
\end{proof}

\begin{lemma}\label{lem:snapshot}
If a trace $Z$ of $\mathsc{S-Step}$s is valid from $c$ with $Z(c) = C$, such that the first $n$ trace steps have epoch numbers less than or equal to $\mathrm{gce(C_{\size{C}-1})}$, and all later trace steps have epoch numbers greater than $\mathrm{gce(C_{\size{C}-1})}$, then $\mathrm{lcs}(C_{n}) = \mathrm{lcs}(C_{\size{C}-1})$.
\end{lemma}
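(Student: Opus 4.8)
The plan is to reduce the claim to a statement about \emph{committing border steps} and then invoke \Cref{lem:snapshot-change}. Write $g = \gce{C_{\size{C}-1}}$ for the greatest common epoch of the final configuration. Since \Cref{lem:snapshot-change} states that the latest common snapshot is changed only by committing border steps, it suffices to show that none of the execution steps taken strictly after configuration $C_n$, i.e.\ none of the steps $Z_n, Z_{n+1}, \ldots, Z_{\size{Z}-1}$, is a committing border step. If this holds, then $\lcs{C_n} = \lcs{C_{n+1}} = \cdots = \lcs{C_{\size{C}-1}}$, which is exactly the goal.

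Two auxiliary facts would drive the argument. First, I would establish that $\gce{\cdot}$ is monotonically non-decreasing along $Z$: because $Z$ consists only of $\mathsc{S-Step}$s (no $\mathsc{F-Recover}$), the snapshot archives never shrink --- $\mathsc{I-Event}$ and $\mathsc{F-Fail}$ leave the archive $a$ untouched, and $\mathsc{I-Border}$ only adds the entry $\map{e}{v}$ --- so $\mathrm{max}(\dom{a})$ is non-decreasing for every processor, and hence so is their minimum $\gce{\cdot}$. Second, I would show that a committing border step with epoch number $e$ makes the greatest common epoch equal to $e$ immediately afterwards: by the definition of $\mathsc{I-Border}$ such a step raises the committing processor's maximal archived epoch from $e-1$ to $e$, and for the step to increase the minimum (i.e.\ to be committing) that processor must have been the unique minimizer while every other processor already had maximal archived epoch at least $e$; the new minimum is therefore exactly $e$. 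This is the same style of reasoning about $\gce{\cdot}$ and $\mathsc{I-Border}$ already used in the proof of \Cref{lem:committing-border-output}.

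With these facts, suppose for contradiction that some step $Z_k$ with $k \geq n$ were a committing border step. By hypothesis every trace step after the first $n$ has epoch number greater than $g$, so $Z_k$ has epoch $e > g$, and by the second fact $\gce{C_{k+1}} = e > g$. But by monotonicity $g = \gce{C_{\size{C}-1}} \geq \gce{C_{k+1}} > g$, a contradiction. Hence no committing border step occurs after $C_n$, and the result follows from \Cref{lem:snapshot-change}. I expect the main obstacle to be the second auxiliary fact: pinning down that a committing border step sets $\gce{\cdot}$ to precisely its own epoch number requires a careful case analysis of $\mathsc{I-Border}$ against the $\mathrm{min}$-of-$\mathrm{max}$ definition of $\gce{\cdot}$ (in particular arguing that the committing processor is the unique minimizer and that its archive has maximal domain $e-1$ just before the step). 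The monotonicity argument and the final contradiction, by contrast, are routine.
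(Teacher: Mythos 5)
Your proposal is correct and takes essentially the same route as the paper: the paper also reduces the claim to showing that no step after $C_n$ is a committing border step and then appeals to \Cref{lem:snapshot-change}, merely packaging the argument as an induction on prefixes of $Z$ rather than your direct contradiction. Your two auxiliary facts (monotonicity of $\mathrm{gce}$ along a trace of $\mathsc{S-Step}$s, and that a committing border step sets $\mathrm{gce}$ to exactly its own epoch) make explicit a point the paper leaves implicit in the parenthetical ``as otherwise the $\mathrm{gce}$ would be increased,'' which is a welcome amount of extra care rather than a divergence.
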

\begin{proof}
The proof is done by induction on the length of the prefix of $Z$ starting from index $n$.

Induction base: the length the prefix $[Z_i]^n_i$ is $n$.
In this case, for the sequence of configurations $C' = [Z_i]^n_i(c)$, we have that $n = \size{C'} - 1$ and thus $C'_{n} = C'_{\size{C'}-1}$, and therefore $\mathrm{lcs}(C'_n) = \mathrm{C'_{\size{C'}-1}}$.

Induction step: for some prefix $Z' = [Z_i]^{n'}_i$ of size $n' > n$ for which $C' = Z'(c)$, we have to show that $\mathrm{lcs}(C'_{n}) = \mathrm{lcs}(C'_{\size{C'}-1})$, under the assumption that $\mathrm{lcs}(C'_{n}) = \mathrm{lcs}(C'_{\size{C'}-2})$.
By its definition, the latest common snapshot is only affected by all trace steps except for non-committing border steps with an epoch number less than or equal to the greatest common epoch number.
As the new step has an epoch number greater than the greatest common epoch number, and it is known to not be a committing border step (as otherwise the $\mathrm{gce}$ would be increased), it does not affect the latest common snapshot, thus $\mathrm{lcs}(C'_{\size{C'}-1}) = \mathrm{lcs}(C'_{\size{C'}-2})$.
For this reason, following the induction hypothesis, $\mathrm{lcs}(C'_{n}) = \mathrm{lcs}(C'_{\size{C'}-2}) = \mathrm{lcs}(C'_{\size{C'}-1})$.
\end{proof}

\begin{definition}[Fair Execution]\label{def:fair-execution}
An execution $C$ is fair if it is maximal (\ie it is not a prefix of another execution), has a finite amount of failures, and eventually executes any step which is {}eventually always enabled. That is, $\mathrm{fair}(C)$ iff:
\begin{gather*}
(\nexists C'.~ (C\concat C') \in \mathbb{E}_{C_0}^{\rulesI}) ~ \land\\
(\exists C', C''.~ C = (C' \concat C'') \land \size{C'}\in\mathbb{N} \land \forall \langle \Pi'', \Sigma'', M'', N'', D'' \rangle \in C''.~ \nexists \langle a, \mathtt{fl} \rangle \in \Sigma'') ~ \land\\
(\forall C', C''.~ C = (C' \concat C'') \land \size{C''} = \infty \implies \nexists z.~ (\forall c'' \in C''.~ z(c'')) \land \nexists i.~ z(C''_i) = C''_{i+1})
\end{gather*}
\end{definition}

Proving liveness typically involves fairness assumptions.
Specifically, for our proof of liveness in \Cref{thm:liveness}, we assume that executions are maximal, \ie we exclude executions which are prefixes of other executions.
This is necessary as our definition of the set of executions is prefix-closed, that is, even the execution consisting solely of the initial state is included in it.
It would be difficult to reason about the liveness of an execution which takes no execution steps.
Instead, all executions are restricted to either be terminated or infinite.
Further, we assume that the execution has at most a finite amount of failures.
Formally, this means that any fair execution must be decomposable into a finite prefix and a suffix for which the suffix contains no failures.
If this was not the case, then we may have a situation in which the execution always fails and aborts an epoch before it has completed.
Lastly, a fair execution must eventually execute any step which is {}eventually always enabled.
Formally, we describe this by negation: a step cannot eventually always be enabled but not eventually taken.
We use this last property to ensure that execution steps which are enabled to eventually be taken.
If we would not make such an assumption, it would be possible for the execution to always execute one of the processors, but never the others, thus making it difficult to argue about the liveness.
However, with the assumption, we know that, once a step is enabled, and if it remains enabled, then eventually it will be taken.
Using this, we can then show through a chain of eventual steps that the system eventually produces an output.

\livenessOfImplementation*\prflabel{thm:liveness}{prf:liveness}
\begin{proof}
Given an arbitrary initial configuration $k$ and some epoch $e$ which is present in the initial configuration, to show is that eventually the epoch $e$ appears in the output of a fair execution.
To show this, we proceed by the following definitions and statements.

(\texttt{Def0}) 
A configuration is well-formed if the messages on the streams have contiguous sequence numbers and are divided into epochs, and the processors have sequence numbers which correspond with the number of consumed/produced messages.

(\texttt{Def1})
A configuration is failed if there is a processor in a failed state.

(\texttt{Def2})
An epoch is produced if there is a message for that epoch visible in the output as defined by the output function $\mathrm{out}$.

(\texttt{S0}) Any valid initial configuration is well-formed, \ie it maintains \texttt{Def0}.
Proof. This follows from the definition of valid initial configuration.

(\texttt{S1}) Any execution step of $\rulesI$ applied to a well-formed configuration results in a well-formed configuration. 
Proof. This follows from inspection on the rules when applied to an arbitrary well-formed configuration.

(\texttt{S2})
Any execution of $\rulesI$ starting from a well-formed configuration contains only well-formed configurations.
Proof. This follows directly from \texttt{S0} and \texttt{S1}.

(\texttt{S3})
If an execution step is enabled in $\rulesI$, then it will remain enabled at least until a recovery step is taken.
Proof. This follows from inspection on the rules in $\rulesI$.

(\texttt{S4})
It is not possible that there \emph{always} \emph{eventually} is a failure.
Proof. This follows from the fair execution assumption which states that there is a ``finite amount of failures''. In other words, failures do not occur infinitely often, which implies that failures do not always eventually occur.

(\texttt{S5})
Starting from any arbitrary configuration $c$ from an execution starting from $k$, for which $c$ is not failed and the epoch $e$ is not yet produced.
If there are no failures in the execution continuing from $c$, then eventually the epoch $e$ is produced.

Proof.
We proceed by induction on the structure of the processing graph as defined by $\Pi$.
To show is that:
(\texttt{IH}) for each stream in the processing graph, eventually a border event for the epoch $e$ is produced.
That is, eventually there is some configuration for which the border events for $e$ exist on all streams, and snapshots for $e$ exist on all processors.
Note that \texttt{IH} implies \texttt{S5}, as by definition of the output function, the epoch is produced once there is a configuration for which all processors have a snapshot of the epoch.

Base case.
The base case consists of the empty processing graph with no processors, but with the set of input streams to the graph.
\texttt{IH} follows from the well-formedness of the initial configuration on the input streams.

Induction step.
By the induction hypothesis, we have that \texttt{IH} holds for some processing graph $\Pi'$ which is a subgraph of $\Pi$.
The induction step is to show that \texttt{IH} holds for the processing graph $\Pi''$, which adds one of the processors to $\Pi'$ for which all inputs to the processor are either inputs to the graph or are outputs of other processors in $\Pi'$.
This way, all possible allowed acyclic graph structures can be constructed.
We know that all inputs to the new processor will eventually contain the border message for epoch $e$ by \texttt{IH} on $\Pi'$.
Further, we know that these streams are well-formed by \texttt{S2}.
To show is that the processor eventually consumes the epoch border message for $e$ on each one of its input streams, and thus produces a corresponding epoch border message on its output stream.

By construction of the rules $\rulesI$, we know that the processor processes one epoch from its inputs until completion, before starting the next epoch (see the epoch border alignment in \Cref{fig:alignment}).
Further, by well-formedness on the initial configuration, we know that the epochs are finite in size.
From the well-formedness on its inputs, we know that there exists a sequence of execution steps for the processor such that it can process all input steps that are available to it.
We know that at least one such step will be enabled until all inputs have been processed.
Further, we know that any enabled step remains enabled until taken (\texttt{S3}).
Thus, by the fairness assumption, we know that such steps will eventually be taken, and therefore all inputs to the processor will eventually be processed.
Consequently, eventually, by an $\mathsc{I-Border}$ step, the border message for epoch $e$ will have been consumed, and a corresponding border message for epoch $e$ produced.

(\texttt{S6})
It is \emph{always} the case that: \emph{eventually} there is a failure; or \emph{eventually} the epoch is produced. In other terms, starting from any configuration in an execution, eventually there is a failure or eventually the epoch is produced.

Proof.
Given an arbitrary configuration $c$ from an execution starting from $k$.
We show that, starting from $c$, eventually there is a failure or eventually the epoch $e$ is produced.
Let us assume that the epoch has not yet been produced in any configuration leading up to $c$, otherwise we would be done.
Further, let us consider the case in which there is a failed processor in the configuration $c$.
In this case, there is a failure, and thus we are done.
Then, let us consider the case in which there is no failed processor in the configuration $c$.
If the execution continuing from configuration $c$ eventually contains a failure, then we are done.
Therefore, the last case to consider is the execution starting from configuration $c$ contains no failures and for which $e$ has not yet been produced.
This last case is handled by \texttt{S5}.

(\texttt{S7})
Eventually the epoch $e$ appears in the output of a fail execution.
More precisely, starting from any configuration in an execution, eventually the epoch is produced.

Proof. This follows from \texttt{S4} and \texttt{S6} by the following temporal reasoning:
(\emph{always} ((\emph{eventually} A) or (\emph{eventually} B))) and (\emph{not} (\emph{always} (\emph{eventually} A))) implies (\emph{always}(\emph{eventually} B)).
That is, we have shown that \emph{always} \emph{eventually} the epoch is produced.
\end{proof}

\fi
\end{document}